\documentclass{llncs}

\usepackage{booktabs}
\usepackage{pgf}
\usepackage{tikz}
\usepackage{etoolbox}
\usetikzlibrary{arrows,automata}
\usepackage{amsmath}
\usepackage{array}
\usepackage{stmaryrd}
\usepackage{todonotes}
\usepackage[utf8]{inputenc}
\usepackage{url}
\usepackage{alltt}
 
\usepackage{amssymb}
\usepackage[english]{babel}
\usepackage{proof}
\usepackage{mathpartir}
\usepackage{framed}
\usepackage{esvect}

\usepackage{tikz}
\usetikzlibrary{trees}
\usetikzlibrary{arrows}
\usetikzlibrary{decorations.pathmorphing}
\usetikzlibrary{shapes.multipart}
\usetikzlibrary{shapes.geometric}
\usetikzlibrary{calc}
\usetikzlibrary{positioning} 
\usetikzlibrary{fit}
\usetikzlibrary{backgrounds}

\pgfkeys{/pgf/rectangle split parts=10}
\urlstyle{rm}

%%% Local Variables: 
%%% mode: latex
%%% TeX-master: "main"
%%% coding: utf-8
%%% End: 

\newcommand{\alltrue}[0]{\bot}

\newcommand{\secref}[1]{Section~\ref{#1}}

\newcommand{\figref}[1]{Figure~\ref{#1}}

\newcommand{\theoremref}[1]{Theorem~\ref{#1}}

\newcommand{\exampleref}[1]{Example~\ref{#1}}

\newcommand{\Grs}{LDGRS}
\newcommand{\lhs}{L}
\newcommand{\str}{\texttt{s}}

\newcommand{\verule}{\rule{1pt}{10pt}\kern 0.2em}

\newcommand{\rin}[0]{L_{in}}
\newcommand{\rout}[0]{L_{out}}
\newcommand{\rlin}[0]{L_{l\_in}}
\newcommand{\rlout}[0]{L_{l\_out}}
\newcommand{\rll}[0]{L_{l\_loop}}
\newcommand{\ein}[0]{E'_{in}}
\newcommand{\eout}[0]{E'_{out}}
\newcommand{\elin}[0]{E'_{l\_in}}
\newcommand{\elout}[0]{E'_{l\_out}}
\newcommand{\elol}[0]{E'_{l\_loop}}
\newcommand{\fin}[0]{in}
\newcommand{\fout}[0]{out}
\newcommand{\flin}[0]{l\_in}
\newcommand{\flout}[0]{l\_out}
\newcommand{\fll}[0]{l\_loop}
\newcommand{\phiin}[0]{\phi_{in}}
\newcommand{\phiout}[0]{\phi_{out}}
\newcommand{\philin}[0]{\phi_{l\_in}}
\newcommand{\philout}[0]{\phi_{l\_out}}
\newcommand{\phill}[0]{\phi_{l\_loop}}

\newcommand{\cin}[0]{c_{in}}
\newcommand{\cout}[0]{c_{out}}
\newcommand{\clin}[0]{c_{l\_in}}
\newcommand{\clout}[0]{c_{l\_out}}
\newcommand{\cll}[0]{c_{l\_loop}}

\newcommand{\Co}[0]{\mathcal{C}}
\newcommand{\Ro}[0]{\mathcal{R}}
\newcommand{\cel}[0]{\mathcal{C}_0}
\newcommand{\rel}[0]{\mathcal{R}_0}
\newcommand{\phin}[0]{\Phi_N}
\newcommand{\phir}[0]{\Phi_E}
\newcommand{\phing}[0]{\Phi_N^G}
\newcommand{\phirg}[0]{\Phi_E^G}
\newcommand{\phinp}[0]{\Phi_N^{G'}}
\newcommand{\phirp}[0]{\Phi_E^{G'}}

\newlength{\MMtextNodeWidth}

\newtoggle{SorL}

%%% Local Variables: 
%%% mode: latex
%%% TeX-master: "main"
%%% coding: utf-8
%%% End: 

%\togglefalse{SorL}
\toggletrue{SorL}
%\iftoggle{SorL}{true}{false}

%% Front Matter
%%
% Regular title as in the article class.
%
\title{On the Verification of Logically Decorated Graph
  Transformations}

% Authors are joined by \and. Their affiliations are given by \inst, which indexes
% into the list defined using \institute
%
\author{
Jon Ha\"el Brenas\inst{1}
\and
    Rachid Echahed\inst{2}
\and
   Martin Strecker\inst{3}
}

% Institutes for affiliations are also joined by \and,
\institute{
  UTHSC - ORNL,
  Memphis, Tennessee, USA\\
  \email{jhael@uthsc.edu}
\and
   CNRS and University Grenoble-Alpes,
   Grenoble, France\\
   \email{rachid.echahed@imag.fr}
\and
   Université de Toulouse, IRIT Institute, Toulouse, France,\\
   \email{martin.strecker@irit.fr}
 }

%  \authorrunning{} has to be set for the shorter version of the authors' names;
% otherwise a warning will be rendered in the running heads. When processed by
% EasyChair, this command is mandatory: a document without \authorrunning
% will be rejected by EasyChair

\authorrunning{J.H. Brenas and R. Echahed}

% \titlerunning{} has to be set to either the main title or its shorter
% version for the running heads. When processed by
% EasyChair, this command is mandatory: a document without \titlerunning
% will be rejected by EasyChair
\titlerunning{Verification of Logically Decorated Graph
  Transformations}

%----------------------------------------------------------------------
\begin{document}

\maketitle

\begin{abstract}
We address the problem of reasoning on graph transformations featuring
actions such as \emph{addition} and \emph{deletion} of nodes and
edges, node \emph{merging} and \emph{cloning}, node or edge
\emph{labelling} and edge \emph{redirection}. First, we introduce the
considered graph rewrite systems which are parameterized by a given
logic $\mathcal{L}$. Formulas of $\mathcal{L}$ are used to label graph
nodes and edges.  In a second step, we tackle the problem of formal
verification of the considered rewrite systems by using a Hoare-like
weakest precondition calculus. It acts on triples of the form
$\{\texttt{Pre}\}(\texttt{R},\texttt{strategy}) \{\texttt{Post}\}$
where \texttt{Pre} and \texttt{Post} are conditions specified in the
given logic $\mathcal{L}$, \texttt{R} is a graph rewrite system and
\texttt{strategy} is an expression stating how rules in \texttt{R} are
to be performed. We prove that the calculus we introduce is sound.
Moreover, we show how the proposed framework can be instantiated
successfully with different logics.  We investigate first-order logic
and several of its decidable fragments with a particular focus on
different dialects of description logic (DL). We also show, by using
bisimulation relations, that some DL fragments cannot be used due to
their lack of expressive power.

%%% Local Variables: 
%%% mode: latex
%%% TeX-master: "main"
%%% coding: utf-8
%%% End: 

\end{abstract}

\section{Introduction}\label{sec:intro}
Graphs, as well as their transformations, play a central role in
modeling  data in various areas such as chemistry, civil
engineering or computer science. In many such applications, it
may be desirable to be able to prove that the transformations are
correct, i.e., from any graph (or state) satisfying a given set of conditions,
only graphs satisfying another set of conditions can be obtained. 

In this paper, we address the problem of correctness of programs
defined as graph rewrite rules. The correctness properties are stated
as logical formulas obtained using a Hoare-like calculus. The
considered graph structures are attributed with logical formulas which
label both nodes and edges. Definitions of the structures as well as
their transformation are provided in a generic framework parameterized
by a given logic $\mathcal{L}$. Rewrites rules follow an algorithmic
approach where the left-hand sides are attributed graphs and the
right-hand sides are sequences of elementary actions
\cite{Echahed08b}. Among the considered actions, we quote node and
edge \emph{addition} or \emph{deletion}, node and edge
\emph{labelling} and edge \emph{redirection}, in addition to node
\emph{merging} and \emph{cloning}.  To our knowledge, the present work
is the first to consider the verification of graph transformations
including the last two actions, namely node merging and node cloning.
We propose a sound Hoare calculus for the considered specifications
defined as triples of the form
$\{\texttt{Pre}\}(\texttt{R},\texttt{strategy}) \{\texttt{Post}\}$
where \texttt{Pre} and \texttt{Post} are conditions specified in a
given logic $\mathcal{L}$, \texttt{R} is a graph rewrite system and
\texttt{strategy} is an expression stating how rules in \texttt{R} are
to be performed. Different instances of the logic $\mathcal{L}$ are
provided in this paper in order to illustrate the effectiveness of the
proposed method.

The correctness of graph transformations has attracted some attention
in recent years. One prominent approach is model checking such as
the Groove tool
\cite{ghamarian_mol_rensink_zambon_zimakova_2012}. The idea is to
carry out a symbolic exploration of the state space, starting from a
given graph, in order to find out whether certain invariants are
maintained or certain states are
reachable.  The Viatra tool has similar model checking capabilities
\cite{journals/sosym/Varro04} and in addition allows the verification
of elaborate well-formedness constraints imposed on graphs
\cite{semerath_barta_szatmari_horvath_varro_2015}. Well-formedness is
within the realm of our approach (and amounts to checking the
consistency of a formula), but is not the primary goal of this paper
which is on the dynamics of graphs.
The Alloy analyser \cite{jackson_software_abstractions_2011} uses
bounded model checking for exploring relational designs and
transformations (see for example \cite{Baresi2006} for an application
to graph transformations). Counter-examples are presented in graphical
form.  The aforementioned techniques are sometimes combined with
powerful SAT- or SMT-solvers, but do not carry out a complete
deductive verification, though.

Hoare-like calculi for the verification of graph transformations have
already been proposed with different logics to express the pre- and
post-conditions.  Among the most prominent approaches figure nested
conditions \cite{HP09,PP10} that are explicitly created to describe
graph properties. The considered graph rewrite transformations are
based on the double pushout approach with linear spans which forbid
actions such as node merging and node cloning.

Other logics might be good candidates to express graph properties
which go beyond first-order definable properties such as monadic
second-order logic \cite{C90,PP14} or the dynamic logic defined in
\cite{BEH10} which allows one to express both rich graph properties as
well as the graph transformations at the same time.  These approaches
are undecidable in general and thus either cannot be used to prove
correctness of graph transformations in an automated way or only work
on limited classes of graphs.

Starting from the other side of the logical spectrum, one could
consider the use of decidable logics such as fragments of Description
Logics to specify graph properties \cite{ACOS14,BES16}.
Decidable fragments of first-order logics such as two-variable logic
with counting and logics with exists-forall-prefix, among others, can
be of practical use as well in the verification of graph transformation
\cite{DBLP:journals/jar/PiskacMB10,boerger_graedel_gurevich_decision_problem,graedel_otto_rosen_lics97}.
  
 The paper is organized as follows. Formal preliminary 
definitions of the considered graph structures and the elementary transformation
actions are introduced in the next section. 
 In \secref{sec:GRS}, we
define the investigated class of graph rewrite systems and the used
notion of rewrite strategies.  The proposed
Hoare-calculus for the verification of the correctness of graph transformations
is presented \secref{sec:verification}. In \secref{sec:logic}, some
logics that can be used for the considered verification
problems are presented. 
We also point out some fragments of Description Logic whose expressive
power is not sufficient enough to be useful in reasoning on graph
dynamics. Concluding remarks are given in \secref{sec:conclusion}. The
missing proofs can be found in the appendix.

%%% Local Variables: 
%%% mode: latex
%%% TeX-master: "main"
%%% coding: utf-8
%%% End: 

\section{Preliminaries}\label{sec:GGT}
We start by introducing the notion of \emph{logically decorated
  graphs}.  Nodes and edges of such graph structures are labeled by
logic formulas. The definition below is parameterized by a given logic
$\mathcal{L}$ seen as a set of formulas. Section~\ref{sec:logic}
provides some examples of possible candidates for such a logic
$\mathcal{L}$.

\begin{definition}[Logically Decorated Graph]
\label{def:graphs}
Let $\mathcal{L}$ be a logic (set of formulas). A \emph{graph alphabet} is a pair ($\Co$,
$\Ro$) of sets of elements of $\mathcal{L}$, that is
$\Co \subseteq \mathcal{L}$ and $\Ro \subseteq
\mathcal{L}$. $\Co$ is the set
of \emph{node formulas} 
or \emph{concepts} and $\Ro$ is
the set of \emph{edge formulas} 
or \emph{roles}\footnote{The names \emph{concept} and \emph{role} are
  borrowed from  Description Logics' vocabulary~\cite{DLHandbook}.}. Subsets of $\Co$ and $\Ro$, respectively named $\cel$
and $\rel$, contain basic (propositional) concepts and roles
respectively.  A \emph{logically decorated graph} $G$ over a graph
alphabet ($\Co$, $\Ro$) is a tuple ($N$, $E$, $\phin$, $\phir$, $s$,
$t$) where $N$ is a set of \emph{nodes}, $E$ is a set of \emph{edges},
$\phin$ is the \emph{node labeling} function,
$\phin: N \rightarrow \mathcal{P}(\Co)$, $\phir$ is the \emph{edge
  labeling} 
function, $\phir: E \rightarrow \Ro$,
$s$ is the \emph{source function} $s: E \rightarrow N$ and $t$ is the
\emph{target function} $t: E \rightarrow N$.
\end{definition}

Transformation of logically decorated graphs, considered in the next
section, will be defined following an algorithmic approach based on
the notion of \emph{elementary actions} as introduced below. These
actions constitute a set of elementary graph transformations such as
the addition/deletion of nodes, concepts or edges ; redirection
of edges ; merge or clone of nodes. Formal definitions of the
considered elementary actions are given  in \figref{semaact}.

\begin{definition}[Elementary action, action]
An \emph{elementary action}, say $a$, may be of the following forms:
\begin{itemize} 
\item a \emph{node addition} $add_N(i)$  (resp. \emph{node
    deletion} $del_N(i)$) where $i$ is a new node (resp. an existing node). It creates the node $i$. $i$ has no incoming nor
  outgoing edge and it is not labeled with any concept (resp. it deletes $i$ and all its incoming and outgoing edges). 

\item a \emph{concept addition} $add_C(i,c)$ (resp. \emph{concept
    deletion} $del_C(i,c)$) where $i$ is a node
  and $c$ is a basic concept (a proposition name) in $\cel$. It adds the label $c$ to (resp. removes the label $c$ from) the labeling of node $i$.
\item an \emph{edge addition} $add_E(e,i,j,r)$ (resp. \emph{edge
    deletion} $del_E(e,i,j,r)$) where $e$ is an edge, $i$ and $j$ are nodes
  and $r$ is a basic role (edge label) in $\rel$. It adds the edge $e$
  with label $r$ between nodes $i$ and $j$ (resp. removes the edge $e$). When the edge that is affected is clear from the context, we will usually simply write $add_E(i,j,r)$ (resp. $del_E(i,j,r)$).
\item a \emph{global edge redirection} $i \gg j$ where $i$ and $j$ are
  nodes. It redirects all incoming edges of $i$ towards $j$.
\item a \emph{merge action} $mrg(i, j)$ where $i$ and $j$ are nodes. This action merges the two nodes. It yields a new graph in which the first node $i$ is labeled with the union of the labels of $i$ and $j$ and  such that all incoming or outgoing edges of any of the two nodes are gathered.
\item a \emph{clone action} $cl(i,j,\rin,\rout,\rlin,\rlout,\rll)$
  where $i$ and $j$ are nodes and $\rin$, $\rout$, $\rlin$, $\rlout$
  and $\rll$ are sets of basic roles. It clones a node $i$ by creating a new node $j$ and connect $j$ to the rest of a host graph according to different information given in the parameters $\rin,\rout,\rlin,\rlout,\rll$.
\end{itemize}

The result of performing an elementary action $a$ on a graph $G =
(N^G,E^G,\\C^G,R^G,\phing,\phirg,s^G,t^G)$, written
$G[\alpha]$, produces the graph $G' =
(N^{G'},E^{G'},C^{G'},\\R^{G'},\phinp,\phirp,s^{G'},t^{G'})$ as
defined  in \figref{semaact}.
An \emph{action}, say $\alpha$, is a
sequence of elementary actions of the form $\alpha = a_1 ; a_2 ; \ldots ;
a_n$. The result of performing $\alpha$ on
a graph $G$ is written $G[\alpha]$. $ G[a ; \alpha] = (G[a])[\alpha]$
and $G[\epsilon] = G$ where $\epsilon$ is the empty sequence.
\end{definition}

The elementary action $cl(i,j,\rin,\rout,\rlin,\rlout,\rll)$ might be
not easy to grasp at first sight. It thus deserves some
explanations. Let node $j$ be a clone of node $i$. What would be the
incident edges of the clone $j$? answering this question is not
straightforward. There are indeed different possibilities to connect
$j$ to the neighborhood of $i$.  \figref{fig:automata} illustrates
such a problem : there are indeed different possibilities to connect
node $q'_1$, a clone of node $q_1$, to the other nodes. In order to
provide flexible clone action, the user may tune the way the edges
connecting a clone are
treated through the five parameters
$\rin,\rout,\rlin,\rlout,\rll$. All these parameters are subsets of
the set of basic roles $\Ro_0$ and are explained informally below :

\begin{itemize}
\item $\rin$ indicates that every incoming edge $e$ of $i$ which is
  not a loop and whose label
  is in $\rin$ is cloned as a new edge $e'$
  such that $s(e') = s(e)$ and $t(e') = j$. 
\item $\rout$ indicates that every outgoing edge $e$ from $i$ which
  not a loop and whose label
  is in $\rout$  is cloned as a new edge $e'$
  such that $s(e') = j$ and $t(e') = t(e)$. 
\item $\rlin$ indicates that every self-loop $e$ over $i$ whose label
  is in $\rlin$  is cloned as a new edge $e'$
  such that $s(e') = i$ and $t(e') = j$. (e.g., see the blue arrow in
  \figref{fig:automata})

\item $\rlout$ indicates that every self-loop $e$ over $i$ whose label
  is in $\rlout$  is cloned as a new edge $e'$
  such that $s(e') = j$ and $t(e') = i$. (e.g., see the red arrow in
  \figref{fig:automata})

\item $\rll$ indicates that every self-loop $e$ over $i$ whose label
  is in $\rll$  is cloned as a new edge $e'$
  which is a self-loop over $j$, i.e, $s(e') = j$ and $t(e') =
  j$. (e.g., see the selfloop over node $q_1'$ in
  \figref{fig:automata})
\end{itemize}

Additionally, the semantics of the cloning actions as defined in
\figref{semaact} use several sets of edges, representing the edges
that are created depending on how they should be connected to $i$ and
$j$, permitting to associate to each new edge the old one of which it
is a copy. The sets $\ein, \eout, \elin, \elout$ and $\elol$, used in
\figref{semaact}, are pairwise disjoint sets of new (fresh) edges, and
the functions $\fin, \fout, \flin, \flout$ and $\fll$ are bijections
defined such that:
\begin{enumerate}

\item $\ein$ is in bijection through function \emph{$\fin$} with the set
  $\{e \in E^G |\; t^G(e) = i \wedge s^G(e) \neq i \wedge \phirg(e) \in
  \rin\}$,

\item $\eout$ is in bijection through function \emph{$\fout$} with the
  set
  $\{e \in E^G |\; s^G(e) = i \wedge t^G(e) \neq i \wedge \phirg(e) \in
  \rout\}$,

\item $\elin$ is in bijection through function \emph{$\flin$} with the
  set
  $\{e \in E^G |\; s^G(e) = t^G(e) = i \wedge \phirg(e) \in \rlin\}$,

\item $\elout$ is in bijection through function \emph{$\flout$} with
  the set
  $\{e \in E^G |\; s^G(e) = t^G(e) = i \wedge \phirg(e) \in \rlout\}$,

\item $\elol$ is in bijection through function \emph{$\fll$} with the set $\{e \in E^G |\; s^G(e) = t^G(e) = i \wedge \phirg(e) \in \rll\}$).
\end{enumerate}

Informally, the set $\ein$ contains a copy of every incoming edge $e$
of node $i$, i.e. such that $t^G(e) = i$, which is not a self-loop,
i.e. $s^G(e) \neq i$, and having a label in $\rin$, i.e.
$\phirg(e) \in \rin$.  $\rin$ is thus used to select
which incoming edges are cloned.  The other sets
($\eout, \elin, \elout$ and $\elol$) are defined similarly.

\begin{figure}
\scalebox{.99}{
\begin{tabular}{l|l}
\bf{If $\alpha = add_C(i,c)$ then:}& \bf{If $\alpha =
del_C(i,c)$ then:}\\
$N^{G'} = N^G$,$E^{G'} = E^G$, & $N^{G'} = N^G$,$E^{G'} = E^G$, \\
$\phinp(n) =\left\{
	\begin{array}{ll}
		\phing(n) \cup \{c\}  & \mbox{if } n = i \\
		\phing(n) & \mbox{if } n \neq i
	\end{array}
\right.$ & $\phinp(n) =\left\{
	\begin{array}{ll}
		\phing(n) \backslash \{c\}  & \mbox{if } n = i \\
		\phing(n) & \mbox{if } n \neq i
	\end{array}
\right.$\\
$\phirp = \phirg$, $s^{G'} = s^G$, $t^{G'} = t^G$ & $\phirp = \phirg$, $s^{G'} = s^G$, $t^{G'} = t^G$\\
\bf{If $\alpha = add_E(e,i,j,r)$ then:}&\bf{If $\alpha = del_E(e,i,j,r)$ then:}\\
$N^{G'} = N^G$,  $\phinp = \phing$ & $N^{G'} = N^G$,  $\phinp = \phing$\\
$E^{G'} = E^G \cup \{e\}$ &$E^{G'} = E^G \backslash \{e\}$ \\
$\phirp(e') = \left\{
	\begin{array}{ll}
		 r & \mbox{if } e' = e \\
		\phirg(e') & \mbox{if } e' \neq e
	\end{array}
\right.$ & \text{$\phirp$ is the restriction of $\phirg$ to $E^{G'}$}\\
$s^{G'}(e') = s^G (e')$ if $e'\not=e, s^{G'}(e) = i$ & \text{$s^{G'}$ is the restriction of $s^G$ to $E^{G'}$}\\
$t^{G'} (e') = t^G(e')$ if $e'\not=e, t^{G'}(e) = j$ &     \text{$t^{G'}$ is the restriction of $t^G$ to $E^{G'}$}\\
\bf{If $\alpha = add_N(i)$ then:} & \bf{If $\alpha = cl(i,j,\rin,\rout,\rlin,\rlout,\rll)$ then:} \\
$N^{G'} = N^G \cup \{i\}$ where $i$ is a new node & $N^{G'} = N^G \cup \{j\}$ \\
$E^{G'} = E^G$, $\phirp = \phirg$,$s^{G'} = s^G$, $t^{G'} = t^G$ & $E^{G'} = E^G \cup \ein \cup \eout \cup \elin \cup \elout \cup \elol$ \\
$\phinp(n) = \left\{
	\begin{array}{ll}
		\emptyset & \mbox{if } n = i \\
		\phing(n) & \mbox{if } n \neq i
	\end{array}
\right.$ & $\phinp(n) = \left\{
	\begin{array}{ll}
		\phing(i) \cap \mathcal{C}_0  & \mbox{if } n = j\\
		\phing(n) & \mbox{otherwise}
	\end{array}
\right.$ \\
$\left. \begin{array}{l}
    \text{\bf{If $\alpha = del_N(i)$ then:}}\\
    N^{G'} = N^G \backslash \{i\} \\
    E^{G'} = E^G \backslash \{e | s^G(e) = i \vee t^G(e) = i\}\\
    \text{$\phinp$ is the restriction of $\phing$ to $N^{G'}$}\\
    \text{$\phirp$ is the restriction of $\phirg$ to $E^{G'}$}\\
    \text{$s^{G'}$ is the restriction of $s^G$ to $E^{G'}$}\\
    \end{array}
    \right.$ & $\phirp(e) = \left\{
	\begin{array}{ll}
	  \phirg(\fin(e))  & \mbox{if } e \in \ein\\
          \phirg(\fout(e))  & \mbox{if } e \in \eout\\
          \phirg(\flin(e))  & \mbox{if } e \in \elin\\
          \phirg(\flout(e))  & \mbox{if } e \in \elout\\
          \phirg(\fll(e))  & \mbox{if } e \in \elol\\          
          \phirg(e) & \mbox{otherwise}
	\end{array}
\right.$ \\
$\left. \begin{array}{l}
    \text{$t^{G'}$ is the restriction of $t^G$ to $E^{G'}$}\\
    \text{\bf{If $\alpha = i \gg j$ then:}}\\
    \text{$N^{G'} = N^G$, $E^{G'} = E^G$}\\
    \text{$\phinp = \phing$, $\phirp = \phirg$,$s^{G'} = s^G$}\\
    t^{G'}(e) = \left\{
        \begin{array}{ll}
            j & \mbox{if } t^G(e) = i \\
            t^G(e) & \mbox{if } t^G(e) \neq i
        \end{array}
    \right.
    \end{array} \right.$ & 
$s^{G'}(e) = \left\{
	\begin{array}{ll}
          s^G(\fin(e))  & \mbox{if } e \in \ein\\
          j  & \mbox{if } e \in \eout\\
          i  & \mbox{if } e \in \elin\\
          j  & \mbox{if } e \in \elout\\
          j  & \mbox{if } e \in \elol\\          
          s^G(e) & \mbox{otherwise}
        \end{array}
\right.$ \\
$\left. \begin{array}{l}
    \text{\bf{If $\alpha = mrg(i,j)$ then:}}\\
    \text{$N^{G'} = N^G \backslash \{j\}, E^{G'} = E^G, \phirp(e) = \phirg(e)$}\\
    \phinp(n) = \left\{
        \begin{array}{ll}
            \phing(i) \cup \phing(j)  & \mbox{if } n = i\\
            \phing(n) & \mbox{otherwise}
        \end{array}\right.\\
    s^{G'}(e) = \left\{
        \begin{array}{ll}
        i  & \mbox{if } s^G(e) = j\\
        s^G(e) & \mbox{otherwise}
        \end{array} \right.
\end{array} \right.$ & $t^{G'}(e) = \left\{
	\begin{array}{ll}
	  j  & \mbox{if } e \in \ein\\
          t^G(\fout(e))  & \mbox{if } e \in \eout\\
          j  & \mbox{if } e \in \elin\\
          i  & \mbox{if } e \in \elout\\
          j  & \mbox{if } e \in \elol\\
          t^G(e) & \mbox{otherwise}
	\end{array}
    \right.$\\
    $t^{G'}(e) = \left\{
        \begin{array}{ll}
            i  & \mbox{if } t^G(e) = j\\
            t^G(e) & \mbox{otherwise}
        \end{array}
    \right.$
\end{tabular}
}
\caption{$G' = G[\alpha]$, summary of the effects of the elementary actions:
  $add_N(i)$, $del_N(i)$, $add_C(i,c)$, $del_C(i,c)$, $add_E(e,i,j,r)$, $del_E(e)$, $i \gg j$, $mrg(i,j)$ and $cl(i,j,\rin,\rout,\rlin,\rlout,\rll)$. $\Co$ and $\Ro$ are never modified.}\label{semaact}
\end{figure}

Notice that a node and its clone have the same basic (propositional)
labels (see \figref{semaact}). It is possible to define alternate
versions of cloning regarding node labels (clone none of them, add a
parameter stating which ones to clone, etc.). The results presented in
this paper can be extended to these other definitions of cloning in a
straightforward manner.

\begin{example}\label{ex:automata}
  Let $\mathcal{A} = \{Q,\Sigma,\delta,q_0,F\}$ be the automaton of
  \figref{fig:automata} A. Performing the action
  $cl(q_1,q'_1,\Sigma,\Sigma, X,Y,Z)$ gives the
  automaton presented in B where the blue - plain - (resp. red -
  dashed -, purple - dotted) transition exists iff $X$ (resp.
  $Y$, $Z$) contains the label $\{a\}$. 
\end{example}
\begin{figure}
\def\smallscale{0.6}
\begin{center}
\resizebox{5cm}{!}{
\begin{tikzpicture}[->,>=stealth',shorten >=1pt,auto,node distance=2.8cm,
                    semithick,group/.style ={fill=gray!20, node distance=20mm},thickline/.style ={draw, thick, -latex'}]
  \tikzstyle{Cstate}=[circle,fill=white,draw=black,text=black, minimum
  size = 1cm]
  \tikzstyle{NCstate}=[circle,fill=none,draw=none,text=black, minimum
  size = 1cm]
  \tikzstyle{CurrState}=[circle,fill=white,draw=black,text=black, minimum
  size = 1cm]

  \node[Cstate] at (0,0) (Q0) {\huge $q_0$};
  \node[Cstate] at (0,2) (Q1) {\huge $q_1$};
  \node[NCstate] at (-.8,2) (O0) {};
  \node[Cstate] at (0,4) (Q2)       {\huge $q_2$};

  \node[Cstate] at (4,0) (Q'0) {\huge $q_0$};
  \node[Cstate] at (2.5,2) (Q'1) {\huge $q_1$};
  \node[Cstate] at (5.5,2) (R1) {\huge $q'_1$};
  \node[NCstate] at (1.7,2) (O1) {};
  \node[NCstate] at (6.5,2) (O2) {};
  \node[Cstate] at (4,4) (Q'2) {\huge $q_2$};

\begin{pgfonlayer}{background}
\node [label=below:{A},group, fit=(Q0) (O0) (Q2)] (b) {};
\node [label=below:{B},group, fit=(Q'0) (O1) (O2) (Q'2)] (b') {};
\end{pgfonlayer}

\path (Q0) edge              node {b} (Q1)
(Q1) edge          node {b} (Q2)
(Q1) edge [loop left] node {a} (Q1)
(Q'0) edge              node {b} (Q'1)
(Q'1) edge          node {b} (Q'2)
(Q'1) edge [loop left] node {a} (Q'1)
(Q'0) edge              node {b} (R1)
(R1) edge  node {b} (Q'2);

\path[blue] (Q'1) edge[bend right] node {a} (R1);

\path[red,dashed] (R1) edge[bend right] node {a} (Q'1);

\path[purple,dotted,thickline] (R1) edge[loop right] node {a} (R1);

\end{tikzpicture}
}
\end{center}
\caption{A) An automaton and B) the possible results of cloning node
  $q_1$ as node $q'_1$}
\label{fig:automata}
\end{figure}
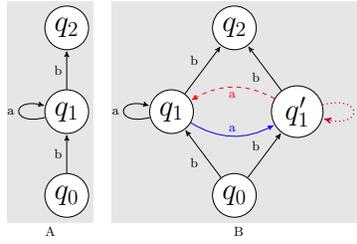

\begin{example}
\label{ex:weord}

Let $(V, \geq)$ be a (weak) ordering on a finite set $V$.  The
relation $\geq$ is reflexive, asymmetric and transitive. Let us
consider the representation of this ordering as a graph having $V$ as
a set of nodes and there is an edge $e$ labeled with $\geq$ between
every two nodes (elements) $i$ and $j$ of $V$ iff $i \geq j$. Then, creating an
 clone of a node does not make sense in general if one
wishes to keep the correspondence between the edges of the graph and
the property of ordering. However cloning can be  used to create an element that is the next
smaller element (if performing
$cl(n,n',\{\leq\},\{\leq\},\{\leq\},\emptyset,\{\leq\})$) or the next greater
element (if performing
$cl(n,n',\{\leq\},\{\leq\},\emptyset, \{\leq\},\emptyset)$).
\end{example}

Readers familiar with algebraic approaches to graph transformation may
recognize the cloning flexibility provided by the recent PBPO
(pullback-pushout) approach of \cite{CorradiniDEPR17}. The parameters
of the clone action reflect somehow the typing morphisms of
\cite{CorradiniDEPR17}.  Cloning a node according to the approach of
Sesquipushout \cite{CorradiniHHK06} could be easily simulated by
instantiating all the parameters by the full set of basic roles
$cl(i,j,\rel,\rel,\rel,\rel,\rel)$.

Another action which may affect several edges in a row is the merge
action.  \figref{fig:mrgex} illustrates an example of node merging. To
be more precise, node $j$ of the left graph is merged with node $i$.
Notice that, except for the name of the resulting node ($i$ in this
case), $mrg(i,j)$ and $mrg(j,i)$ are the same.  After the action is
performed, the edges between nodes $i$, $l$ and $k$ already present
before the merge action remain unchanged and a new edge is added
between $i$ and $k$ inherited from the link between $j$ and $k$. A
loop over $i$ is also added representing the edge between $i$ and $j$
in the initial graph.

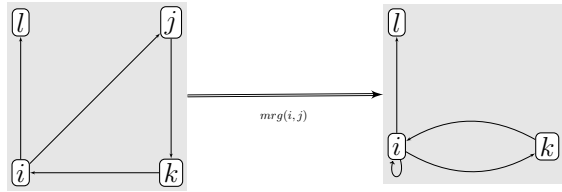
\begin{figure}
\begin{center}
\scalebox{.5}{
\begin{tikzpicture}
[auto,
blockbl/.style ={rectangle, draw=black, fill=white, thick,
  text centered, rounded corners,
  minimum height=1em
},
blockb/.style ={rectangle, draw=blue, thick, fill=blue!20,
  text width=2em, text centered, rounded corners,
  minimum height=1em
},
blockr/.style ={rectangle, draw=red, thick, fill=red!20,
  text centered, rounded corners,
  minimum height=1em
},
blockg/.style ={rectangle, draw=green, thick, fill=green!20,
  text width=2em, text centered, rounded corners,
  minimum height=1em
},
blockw/.style ={rectangle, draw=blue, thick, fill=white!20,
  text centered, rounded corners,
  minimum height=1em
},
blockwh/.style ={rectangle, draw=blue, thick, fill=white!20,
  text centered, rounded corners, text=white,
  minimum height=3em
},
blocke/.style ={rectangle, draw=none, thick, fill=none,
  text centered, rounded corners,
  minimum height=1em
},
group/.style ={fill=gray!20, node distance=10mm},
ggroup/.style ={fill=red!5, node distance=10mm},
igroup/.style ={fill=red!15, node distance=10mm},
line/.style ={draw=black, -latex'},
lineb/.style ={draw=blue, -latex'},
liner/.style ={draw=red, -latex'},
lineg/.style ={draw=green, -latex'},
thickline/.style ={draw, thick, double, -latex', }
]

\node (a0) at (0,0) [blockbl] {\huge $i$};
\node (b0) at (0,4) [blockbl] {\huge $l$};
\node (c0) at (4,0) [blockbl] {\huge $k$};
\node (d0) at (4,4) [blockbl] {\huge $j$};

\node (a1) at (10,0.7) [blockbl] {\huge $i$};
\node (b1) at (10,4) [blockbl] {\huge $l$};
\node (c1) at (14,0.7) [blockbl] {\huge $k$};

\node (A1) at (7,1.5) [blocke] {$mrg(i,j)$};
\node (oo) at (12,0) [blocke] {};
\begin{pgfonlayer}{background}
\node [group, fit=(a0) (b0) (c0) (d0)] (G0) {};
\node [group, fit=(a1) (b1) (c1) (oo)] (Act0) {};
\end{pgfonlayer}

\begin{scope}
\draw [thickline] (G0) -- (Act0);
\draw [line] (a0) -- (b0);
\draw [line] (c0) -- (a0);
\draw [line] (a0) -- (d0);
\draw [line] (d0) -- (c0);
\draw [line] (a1) -- (b1) ;
  \path
    (a1) edge [line, loop below] node {} (a1)
           edge [line, bend right] node {} (c1)
    (c1) edge [line, bend right] node {} (a1);

\end{scope}
\end{tikzpicture}
}
\end{center}
\caption{Example of application of the elementary action Merge}\label{fig:mrgex}
\end{figure}

%
%%% Local Variables: 
%%% mode: latex
%%% TeX-master: "main"
%%% coding: utf-8
%%% End: 
%

\section{Graph Rewriting Systems and Strategies}\label{sec:GRS}
In this section, we introduce the notion of \emph{logically decorated
grapg  rewriting systems}, or \Grs. These are extensions of the graph rewriting
systems defined in \cite{Echahed08b} where graphs are attributed with formulas from a given logic. The left-hand sides of the rules are thus logically
decorated graphs whereas the right-hand sides are defined as sequences
of elementary actions.

\begin{definition}[Rule, \Grs]
A \emph{rule} $\rho$ is a pair ($\lhs$, $\alpha$) where $\lhs$, called
the left-hand side, is a logically decorated graph
and $\alpha$, called the right-hand side,
is an action. Rules are usually written $\lhs \rightarrow \alpha$. A
logically decorated graph rewriting system, \Grs, is a set of rules.
\end{definition}

Let us point out that the left-hand side of a rule is an attributed
graph, that is it can contain nodes labeled with formulas. This is not
insignificant as these formulas  express additional conditions to be
satisfied during the matching process, e.g. reachability (graph
accessibility) condition, constraints on the number of neighbors (counting
quantifiers), etc. depending on the underlying logic. In the sequel,
we will use the symbol $\models$ to indicate the satisfiability
relation between items of a graph (nodes or edges) and logical formulas. 

\begin{definition}[Match]\label{match}
A \emph{match} $h$  between a left-hand side $\lhs$ and a graph $G$ is
a pair of functions $h = (h^N, h^E)$, with $h^N : N^{\lhs}
\rightarrow N^G$ and  $h^E : E^{\lhs} \rightarrow E^G$ such that:\\
\begin{tabular}{llll}
1.& $\forall n \in N^{\lhs}, \forall c \in \Phi_N^{\lhs}(n), h^N(n)
  \models c$\hspace*{.5cm} &
  2.& $\forall e \in E^{\lhs}, \Phi_E^G(h^E(e)) = \Phi_E^{\lhs}(e) $\\
3.& $\forall e \in E^{\lhs}, s^G(h^E(e)) =   h^N(s^{\lhs}(e))$ & 4.& $\forall
                                                          e \in E^{\lhs},
                                                          t^G(h^E(e))
                                                          = h^N(t^{\lhs}(e))$

\end{tabular}
\end{definition}

The third and the fourth conditions are classical and say that the
source and target functions and the match have to agree. The first
condition says that for every node $n$ of the left-hand side, the node
to which it is associated, $h(n)$, in $G$ has to satisfy every concept
in $\Phi_N^{\lhs}(n)$. This condition clearly expresses additional
negative and positive conditions which are added to the ``structural''
pattern matching. The second one ensures that the match respects edge
labeling.

\begin{definition}[Rule application]
A graph $G$ rewrites to graph $G'$ using a rule  $\rho = (\lhs, \alpha)$
iff there exists a match $h$ from $\lhs$ to $G$. $G'$ is obtained from
$G$ by performing actions in $h(\alpha)$\footnote{$h(\alpha)$ is
  obtained from $\alpha$ by replacing every node name, $n$, of $\lhs$
  by $h(n)$.}.
Formally, $G' = G[h(\alpha)]$. We write $G \to_{\rho} G'$. % or  $G \to_{\rho, h} G'$. 
\end{definition}

Confluence of graph rewrite systems is not easy to establish. For
instance, orthogonal graph rewrite systems are not always
confluent, see e.g. \cite{Echahed08b}. We use the notion of rewrite
strategies to control the use of possible rules. Informally, a strategy
specifies the application order of different rules. 
It does not point to where the matches are to be
found nor does it ensure unique normal forms.

\begin{definition}[Strategy]
Given a graph rewriting system $\mathcal{R}$, a \emph{strategy} is a word of the
following language defined by $s$, where $\rho$ is any rule in $\mathcal{R}$:\\
\begin{tabular}{lll|ll|ll}
$s :=$ & $\epsilon$ & (Empty strategy) \hspace*{1cm}& $\rho$ & (Rule) \hspace*{1cm}& $s \oplus s$ & (Choice)
  \\
& $s; s$ & (Composition) & $s^*$ & (Closure) & $\rho?$ & (Rule trial)
 \\
& $\rho!$ & (Mandatory Rule)\\
\end{tabular}\\

\end{definition}

Informally, the strategy $"s_1 ; s_2"$ means that strategy $s_1$ should be
applied first, followed by the application of strategy $s_2$. On the other hand, $s_1 \oplus s_2$ means that either the strategy $s_1$ or the strategy $s_2$ is applied. The strategy $\rho^*$ means that rule $\rho$
is applied as many times as possible. Notice that the closure is the standard
``while'' construct: if the strategy we use is $s^*$, the strategy $s$ is used as long as
it is possible and not an undefined number of times.
The strategies $\rho$, $\rho?$ and $\rho!$ try to apply the rule
$\rho$. They behave in the same way when the rule $\rho$ matches the
host graph. However, when rule $\rho$ does not match the host graph,
the strategy written $\rho$ ends the rewriting process successfully.
The strategy $\rho?$, called \emph{Rule
Trial}, simply skips the application of the rule and the rewriting process
proceeds to the following strategy. The strategy $\rho!$, named 
\emph{Mandatory Rule}, stops and the rewriting process fails.

We write $G \Rightarrow_{\str} G'$ to denote that graph $G'$ is
obtained from $G$ by applying the strategy $\str$.  In
\figref{fig:semrules}, we provide the rules that specify how
strategies are used to rewrite a graph. For that we use the following
atomic formula $\bold{App}$($\str$) such that for all graphs $G$, $G
\models \bold{App}(\str$) iff the strategy $\str$ can perform at
least one step over $G$. This atomic formula is defined below.

\begin{tabular}{llllll}\label{Apstrat}
$\hspace*{-1em}\bullet\hspace*{1em} G \models \bold{App}(\rho)$ & iff there exists a match $h$ from the
  left-hand side of $\rho$ to $G$\\
$\hspace*{-1em}\bullet\hspace*{1em} G \models \bold{App}(\rho!)$ &  iff there exists a match $h$ from the
  left-hand side of $\rho$ to $G$\\
$\hspace*{-1em}\bullet\hspace*{1em} G \models \bold{App}(\epsilon)$ & $\hspace*{1em}\bullet\hspace*{1em}
                                                G \models \bold{App}(s_0 \oplus
                                                s_1)$ iff $G \models \bold{App}(s_0)$ or $G \models \bold{App}(s_1)$\\
$\hspace*{-1em}\bullet\hspace*{1em} G \models \bold{App}(s_0^*)$ & 
$\hspace*{1em}\bullet\hspace*{1em} G \models \bold{App}(s_0;s_1)$ iff
                                                        $G \models \bold{App}(s_0)$\\
$\hspace*{-1em}\bullet\hspace*{1em} G \models \bold{App}(\rho?)$
\end{tabular}

Notice that $G \models \bold{App}(s)$ does not mean that the whole
strategy can be applied on $G$, but just its first step can be
applied.  Indeed, let us assume the strategy $s = (s_0;s_1)$
where $s_0$ can be applied but may yield a state where $s_1$
cannot. In this case, the strategy $s$ can be applied on $G$ ($G
\models \bold{App}(s)$) but the execution
may stop after performing one step of $s_0$.
 
\begin{figure}
\begin{framed}
\begin{scriptsize}
\begin{tabbing}
1\=11111111111111111111111111111111\=1111111111111111111111111111111111111111111111111111\=111111111111\kill\\
\>\infer[(\text{Empty rule})]{G \Rightarrow_{\epsilon} G}{%
 } \\ \ \\
 \infer[(\text{Strategy composition})]{G \Rightarrow_{s_0;s_1} G'}{%
   G \Rightarrow_{s_0} G'' \quad G'' \Rightarrow_{s_1} G'}\\ \ \\
\>\infer[(\text{Choice left})]{G \Rightarrow_{s_0 \oplus s_1} G'}{%
   G \Rightarrow_{s_0} G'} \>
 \infer[(\text{Choice right})]{G \Rightarrow_{s_0 \oplus s_1} G'}{%
  G \Rightarrow_{s_1} G'}\\ \ \\
\>\infer[(\text{Closure false})]{G \Rightarrow_{s^*} G}{%
   G \not\models \bold{App}(s)} \> \infer[(\text{Closure true})]{G \Rightarrow_{s^*} G'}{%
  G \Rightarrow_{s} G'' \quad G'' \Rightarrow_{s^*} G' \quad G \models \bold{App}(s)}\\ \ \\
\>\infer[(\text{Rule False})]{G \Rightarrow_{\rho} \alltrue}{%
  G \not\models \bold{App}(\rho)} \>
\infer[(\text{Rule True})]{G \Rightarrow_{\rho} G'}{%
  G \models \bold{App}(\rho) \quad  G \to_{\rho} G'} \\ \ \\  
\>\infer[(\text{Mandatory Rule False})]{G \not\Rightarrow_{\rho!} }{%
   G \not\models \bold{App}(\rho)} \>  \hspace*{1cm}\infer[(\text{Mandatory Rule True})]{G \Rightarrow_{\rho!} G'}{%
  G \models \bold{App}(\rho) \quad  G \to_{\rho} G'}
\\ \ \\
\> 
 \infer[(\text{Rule Trial False})]{G \Rightarrow_{\rho?} G}{%
   G \not\models \bold{App}(\rho)} \>\hspace*{1cm}\infer[(\text{Rule Trial True})]{G \Rightarrow_{\rho?} G'}{%
  G \models \bold{App}(\rho) \quad  G \to_{\rho} G'}

\end{tabbing}
\end{scriptsize}
\end{framed}
\caption{Strategy application rules}\label{fig:semrules}
\end{figure}

The three strategies using rules (i.e. $\rho$, $\rho!$ and $\rho?$)
behave the same way when $G \models App(\rho)$ holds, ss shown in
\figref{fig:semrules}, but they do differ when 
$G \not\models App(\rho)$. In such a case, $\rho$ can yield any
graph, denoted by $\alltrue$, (i.e. the process stops without an
error), $\rho!$ stops the rewriting  process with failure and $\rho?$
ignores the rule application and  moves to the
next step of the execution of the strategy.

\begin{example}\label{ex:servernet}
Let us assume that we are managing a set of servers. Clients can connect to proxy servers that are themselves connected to mail servers, print servers, web servers, etc. We use graph transformations to generate new proxy servers to avoid over- or under-use of proxy servers. The rules that are used are shown in \figref{fig:servernet}. They use the description logic $\mathcal{ALCQUOI}$ introduced in \secref{sec:logic}. For this example, actions that affect an edge, e.g. $add_E(e,i,j,r)$, identify an edge from its extremities, e.g. we will write $add_E(i,j,r)$ instead.

Both rules select a $Client$ that $Request$ed a connection to a
$Proxy$. If the proxy has less than $N$ currently established
client-to-proxy ($C2P$) connections (rule $\rho_0$), the label
$Request$ is removed and the label $C2P$ is added to the edge between
the $Client$ and the $Proxy$. If the $Proxy$ already has more than $N$
client-to-proxy connections (rule $\rho_1$), the $Proxy$ is
cloned. All its incoming edges, except for those labeled with
$Request$ or $C2P$, are cloned as well as all outgoing
edges. Self-loops are not cloned. The label $Request$ is then dropped from the edge between the $Client$ and the original $Proxy$ and the edge from the $Client$ to the new $Proxy$ is labeled with $C2P$. 

The application condition for the first rule, $\bold{App}(\rho_0)$, is
$\exists U. (i \wedge Client \wedge \exists Request. (j \wedge Proxy
\wedge (<\; N\; C2P^-\; \top)))$\footnote{This formula is actually not
  as expressive as $\bold{App}(\rho_0)$. This problem is discussed
  more in \secref{sec:logic}.}. This condition can be understood as
"there exists a node named $i$ labeled with $Client$ that is the
source of an edge labeled with $Request$ whose target is a node named
$j$ labeled with $Proxy$ and such that there are strictly less than
$N$ different nodes connected through $C2P$ to$j$".
The used strategy is $s = \rho_0 \oplus \rho_1$ i.e. either $\rho_0$ or $\rho_1$ is applied but not both.
\end{example}

\begin{figure}
  \begin{center}
\scalebox{.8}{
    \begin{tikzpicture}
[auto,
blockbl/.style ={rectangle, draw=black, thick, fill=white!20,
  text centered, rounded corners,
  minimum height=1em
},
blockb/.style ={rectangle, draw=blue, thick, fill=blue!20,
  text width=2em, text centered, rounded corners,
  minimum height=1em
},
blockr/.style ={rectangle, draw=red, thick, fill=red!20,
  text centered, rounded corners,
  minimum height=1em
},
blockg/.style ={rectangle, draw=gray!20, thick, fill=gray!20,
  text width=2em, text centered, rounded corners,
  minimum height=1em
},
blockw/.style ={rectangle, draw=blue, thick, fill=white!20,
  text centered, rounded corners,
  minimum height=1em
},
blockbw/.style ={rectangle, draw=blue, thick, fill=white!20,
  text centered, rounded corners,
  minimum height=10em
},
blocke/.style ={rectangle, draw=white, thick, fill=white!20,
  text centered, rounded corners,
  minimum height=1em
},
group/.style ={fill=gray!20, node distance=10mm},
ggroup/.style ={fill=red!5, node distance=10mm},
igroup/.style ={fill=red!15, node distance=10mm},
line/.style ={draw=black, -latex'},
lineb/.style ={draw=blue, -latex'},
liner/.style ={draw=red, -latex'},
lineg/.style ={draw=green, -latex'},
thickline/.style ={draw, thick, double, -latex', }
]

\node (r0) at (-1,5) [blocke] {$\rho_0$:};
\node (a0) at (2,4) [blockbl] {$i:Client$};
\node (b0) at (2,6) [blockbl] {$j: Proxy \wedge (<\; N\; C2P^-\; \top)$};

\node (A0) at (9,5) [blockw] {$del_E(i,j,Request); add_E(i,j,C2P)$};

\node (r1) at (-1,1) [blocke] {$\rho_1$:};
\node (a1) at (2,0) [blockbl] {$i:Client$};
\node (b1) at (2,2) [blockbl] {$j: Proxy \wedge (\geq\; N\; C2P^-\; \top)$};

\node (A1) at (9,1) [blockw] {$cl(j,k,\vv{\mathcal{L}_C});del_E(i,j,Request); add_E(i,k,C2P)$};

\begin{pgfonlayer}{background}
\node [group, fit=(a0) (b0)] (G0) {};
\node [group, fit=(a1) (b1)] (G1) {};
\end{pgfonlayer}

\path (a0) edge [-latex']                    node {$Request$} (b0);
\path (G0) edge [double,-latex]              node {} (A0);
\path (a1) edge [-latex']                    node {$Request$} (b1);
\path (G1) edge [double,-latex]              node {} (A1);

\end{tikzpicture}
}
\end{center}
\caption[Example of rule]{Example of rules used in \exampleref{ex:servernet}. In rule $\rho_1$, $\vv{\mathcal{L}_C} = \mathcal{R}_0 \backslash \{Request, C2P\}, \mathcal{R}_0 \backslash \{C2P\}, \emptyset, \emptyset, \emptyset$.}\label{fig:servernet}
\end{figure}

%%% Local Variables: 
%%% mode: latex
%%% TeX-master: "main"
%%% coding: utf-8
%%% End: 

\section{Verification}\label{sec:verification}
Reasoning on graph transformations does not benefit yet from standard
 proof techniques as it is the case for term
rewriting. For instance, generalization of equational reasoning to graph
rewriting systems is not even complete \cite{CaferraEP08}.  In this
section, we follow a Hoare style to specify properties of \Grs 's for which
we establish a proof procedure.

\begin{definition}[Specification]
A \emph{specification} $SP$ is a triple $\{Pre\}(\mathcal{R},s)\{Post\}$ where $Pre$ and $Post$ are 
formulas (of a given logic), $\mathcal{R}$ is a graph rewriting
system and $s$ is a
strategy.
\end{definition}

\begin{definition}[Correctness]
A specification $SP$ is said to be \emph{correct} iff for all graphs
$G$, $G'$ such that $G \Rightarrow_{s} G'$ and $G \models
Pre$, then $G' \models Post$. 
\end{definition}

In order to show the correctness of a specification, we follow a
Hoare-calculus style  \cite{H69} and compute the weakest precondition
$wp(\mathcal{S},Post)$. For that, we give in \figref{fig:wp} (resp. in
\figref{fig:wpstrategy}) the definition of the function $wp$ which
yields the weakest precondition of a formula $Q$ w.r.t. an action
(resp. a strategy).

\begin{figure}[h]
\begin{framed}
\begin{tabular}{ll}
\hspace{-3mm}$wp (a,\; Q) = Q[a]$ \;\;\;\; & $wp (a;\alpha,\; Q) = wp(a, wp(\alpha,Q))$ 
\end{tabular}
\end{framed}
  \caption{Weakest preconditions w.r.t. actions where $a$ (resp. $\alpha$)
    stands for an elementary action (resp. action) and $Q$ is a formula.}\label{fig:wp}
\end{figure}

The weakest precondition of an elementary action, say $a$, and a
postcondition $Q$ is defined as $wp(a,Q) =Q[a]$ where $Q[a]$ stands
for the precondition consisting of $Q$ to which is applied a
substitution induced by the action $a$ that we denote by $[a]$. The
notion of substitution used here is the one coming from Hoare-calculi.

\begin{definition}[Substitutions]
  To each elementary action $a$ is associated a \emph{substitution},
  written $[a]$, such that for all graphs $G$ and formula $\phi$,
  ($G \models \phi[a]) \Leftrightarrow (G[a] \models \phi$).
\end{definition}
 
Notice that, in general, substitutions are not defined as formulas of
a given logic $\mathcal{L}$. They are defined as a new formula
constructor whose meaning is that the weakest preconditions for
elementary actions, as defined above, are correct.  In general, the
addition of a constructor for substitutions is not harmless. That is
to say, if $\phi$ is a formula of a logic $\mathcal{L}$, $\phi[a]$ is
not necessarily a formula of $\mathcal{L}$. It is a very interesting
problem to figure out which logics are closed under the considered
substitutions.  Some positive and negative answers are given in
Section~\ref{sec:logic}.

\begin{figure}[h]
\begin{framed}
\begin{tabular}{ll}
\hspace{-.3cm}$wp (\epsilon,\; Q) = Q$ &
\hspace{-.5cm}$wp (s_0 ; s_1,\; Q) = wp (s_0, wp(s_1,\; Q))$\\
\hspace{-.3cm}$wp (s_0 \oplus s_1,\; Q) = wp(s_0, Q) \wedge wp (s_1,Q)$&
\hspace{-.5cm}$wp (s^*,\; Q) = inv_s$ \\
\hspace{-.3cm}$wp (\rho,\; Q) = App(\rho) \Rightarrow wp(\alpha_\rho,Q)$ &
\hspace{-.5cm}$wp(\rho!,Q) = App(\rho) \wedge wp(\alpha_\rho,Q)$\\
\hspace{-.3cm}$wp(\rho?,Q) = (App(\rho) \Rightarrow wp(\alpha_\rho,Q)) \wedge (\neg App(\rho) \Rightarrow Q) $
\end{tabular}
\end{framed}
  \caption{Weakest preconditions for strategies. $\alpha_\rho$ denotes the right-hand side of rule $\rho$.}\label{fig:wpstrategy}
\end{figure}

The definition of $wp (s, Q)$ for the empty strategy, the composition
and the choice are quite direct.  The definitions for the rule,
mandatory rule and trial differ on what happens if the rule cannot be
applied.  When the rule $\rho$ can be applied, then applying it should
lead to a graph satisfying $Q$.  When the rule $\rho$ cannot be
applied, $wp(\rho,\; Q)$ indicates that the considered specification
is correct; while $wp(\rho!,\; Q)$ indicates that the specification is
not correct and $wp(\rho?,\; Q)$ leaves the postcondition unchanged
and thus transformations can move to possible next steps.

The weakest precondition for the closure is close to the $while$
imperative instruction. It requires an invariant $inv_s$ to be defined.
$wp (s^*,\; Q) = inv_s$ which means that the invariant has to be true
when entering the iteration for the first time. On the other hand, it
is obviously  not enough to be sure that $Q$ will be satisfied when
exiting the iteration or that the invariant will be maintained
throughout execution. To make sure that iterations behave correctly,
we need to introduce some additional \emph{verification conditions}
computed by means of a function $vc$, defined in  \figref{fig:vcstrategy}.

\begin{figure}[h]
\begin{framed}
\begin{tabular}{lll}
\hspace{-.3cm}$vc(\epsilon,\; Q)$ &$=$& $\top$  (true)\\
\hspace{-.3cm}$vc(s_0 ; s_1,\; Q)$ &$=$& $vc(s_0, wp(s_1,\; Q)) \wedge vc(s_1,Q)$\\
\hspace{-.3cm}$vc(s_0 \oplus s_1,\; Q)$ &$=$& $vc(s_0, Q) \wedge vc(s_1,Q)$\\
\hspace{-3mm}$vc(s^*,\; Q)$ &$=$& $vc(s,Q) \wedge (inv_s \wedge App(s)
                                  \Rightarrow wp(s,inv_s)) \wedge (inv_s \wedge \neg App(s) \Rightarrow Q)$ \\
\hspace{-.3cm}$vc(\rho,\; Q)$ &$=$& $\top$ \\
\hspace{-.3cm}$vc(\rho!,Q)$ &$=$& $\top$\\
\hspace{-.3cm}$vc(\rho?,Q)$ &$=$& $\top$
\end{tabular}
\end{framed}
  \caption{Verification conditions for strategies. }\label{fig:vcstrategy}
\end{figure}
\iffalse
\begin{figure}
  \begin{center}
\scalebox{.8}{
    \begin{tikzpicture}
[auto,
blockbl/.style ={rectangle, draw=black, thick, fill=white!20,
  text centered, rounded corners,
  minimum height=1em
},
blockb/.style ={rectangle, draw=blue, thick, fill=blue!20,
  text width=2em, text centered, rounded corners,
  minimum height=1em
},
blockr/.style ={rectangle, draw=red, thick, fill=red!20,
  text centered, rounded corners,
  minimum height=1em
},
blockg/.style ={rectangle, draw=gray!20, thick, fill=gray!20,
  text width=2em, text centered, rounded corners,
  minimum height=1em
},
blockw/.style ={rectangle, draw=blue, thick, fill=white!20,
  text centered, rounded corners,
  minimum height=1em
},
blocke/.style ={rectangle, draw=white, thick, fill=white!20,
  text centered, rounded corners,
  minimum height=1em
},
group/.style ={fill=gray!20, node distance=10mm},
ggroup/.style ={fill=red!5, node distance=10mm},
igroup/.style ={fill=red!15, node distance=10mm},
line/.style ={draw=black, -latex'},
lineb/.style ={draw=blue, -latex'},
liner/.style ={draw=red, -latex'},
lineg/.style ={draw=green, -latex'},
thickline/.style ={draw, thick, double, -latex', }
]

\node (a0) at (2,4) [blockbl] {$n$};

\node (A0) at (5,4) [blockw] {$cl(n,n',\{\geq\},\{\geq\},\{\geq\},\emptyset,\{\geq\}$};

\path (a0) edge [double,-latex]              node {} (A0);

\end{tikzpicture}
}
\end{center}
\caption{Rule cloning an element $n$ of a set and creating a node $n'$ immediately greater.}\label{fig:ruleord}
\end{figure}
\fi

As the computation of $wp$ and $vc$ requires the user to provide invariants, we now introduce the notion of annotated strategies and specification.

\begin{definition}[Annotated strategy, Annotated specification]
An \emph{annotated strategy} is a strategy in which every iteration $s^*$ is annotated with an invariant $inv_s$. It is written $s^*\{inv_s\}$.
An \emph{annotated specification} is a specification whose strategy is an annotated strategy.
\end{definition}

\begin{definition}[Correctness formula]
We call \emph{correctness formula} of an annotated specification $SP =
\{Pre\}(\mathcal{R},s)\{Post\}$, the formula~: $$correct(SP) = (Pre \Rightarrow wp(s,Post)) \wedge vc(s,Post).$$
\end{definition}

Before stating the soundness of the proposed verification method, we
state a first simple lemma.

\begin{lemma}\label{cor:sound}
  Let
  $Q$ be a formula and $\alpha$ be an action. For all graphs $G$, $G'$ such that
  $G \rightarrow_{\alpha} G'$, $G \models wp(\alpha,Q)$ implies
  $G' \models Q$.
\end{lemma}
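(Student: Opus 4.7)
The plan is to proceed by structural induction on the action $\alpha$, viewed as a sequence of elementary actions $a_1 ; a_2 ; \ldots ; a_n$, and to reduce the whole statement to the defining property of the substitutions $[a]$.

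For the base case I would take $\alpha$ to be a single elementary action $a$ (with the degenerate case $\alpha = \epsilon$ being immediate, since $G[\epsilon] = G$ and $wp(\epsilon, Q) = Q$). Here $wp(a, Q) = Q[a]$ by definition, and the crucial property of the substitution (Definition of Substitutions) states precisely that $G \models Q[a]$ iff $G[a] \models Q$. Since $G \to_{\alpha} G'$ means $G' = G[a]$, the conclusion $G' \models Q$ follows directly.

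For the inductive step I would take $\alpha = a ; \alpha'$, with the inductive hypothesis being that the statement holds for the shorter sequence $\alpha'$ and for \emph{any} formula $Q$. By the definition of $wp$ on actions, $wp(a;\alpha', Q) = wp(a, wp(\alpha', Q))$. Assuming $G \models wp(a, wp(\alpha', Q))$, applying the base case with the formula $wp(\alpha', Q)$ in place of $Q$ yields $G[a] \models wp(\alpha', Q)$. Since $G' = G[a;\alpha'] = (G[a])[\alpha']$, applying the inductive hypothesis to the graph $G[a]$ and action $\alpha'$ gives $G' \models Q$, as required.

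The only real content of the argument lies at the base case, and the potential obstacle there is entirely delegated to the definition of the substitution operator $[a]$, which is \emph{postulated} to satisfy $G \models \phi[a] \Leftrightarrow G[a] \models \phi$. The lemma therefore does not itself establish this equivalence; it merely assumes it as the semantic contract of $[a]$. In the concrete logical instantiations discussed in \secref{sec:logic}, discharging this contract (i.e.\ exhibiting, for each elementary action, a substitution that lives inside the chosen logic and satisfies the required equivalence) is where the genuine work lies — but for the present lemma this is used as a black box.
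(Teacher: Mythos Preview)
Your proof is correct and follows essentially the same approach as the paper: induction on the length of the action sequence, with the base case reduced to the defining semantic property of the substitution $[a]$ and the inductive step handled via the compositional definition $wp(a;\alpha',Q)=wp(a,wp(\alpha',Q))$. Your explicit remark that the real work is delegated to the substitution contract (discharged only later in \secref{sec:logic}) is also exactly the paper's stance.
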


\begin{proof}
  \begin{itemize}
\item Let us assume $\alpha = a$, an elementary action. Then $wp(a,\; Q) = Q[a]$. Let
  $G$ be a graph such that $G \models Q[a]$. By definition of the substitutions,
  $G \models Q[a]$ implies that for any graph $G'$
  such that $G \rightarrow_{a} G'$, $G' \models Q$. Thus $G \models wp(\alpha,\; Q) \Rightarrow G' \models Q$.
\item Let us assume $\alpha = a;\; \alpha'$ where $a$ is an elementary action and $\alpha'$ is an action. Then $wp(a;\; \alpha',\; Q) = wp(a, wp(\alpha', Q))$. Let $G$ be a graph such that
  $G \models wp(a, wp(\alpha', Q))$ and let
  $G'$ be a state such that $G \Rightarrow_{a; \alpha'}
  G'$. Then there exists $G''$ with $G
  \rightarrow_{a} G''$ and $G'' \Rightarrow_{\alpha'}
  G'$. As $G \models wp(a, wp(\alpha', Q))$, by induction,
  $G''\models wp(\alpha',Q)$. Then, by an additional induction, $G' \models Q$. Thus $G \models wp(\alpha,\; Q) \Rightarrow G' \models Q$.
\end{itemize}
\end{proof}

\begin{theorem}[Soundness]\label{th:sound}
  Let
  $SP = \{Pre\}(\mathcal{R}, s)\{Post\}$ be an annotated specification. If
   $correct(SP)$ is valid,
  then for all graphs $G$, $G'$ such that
  $G \Rightarrow_{s} G'$, $G \models Pre$ implies
  $G' \models Post$.
\end{theorem}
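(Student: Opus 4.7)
The plan is to prove by structural induction on the strategy $s$ a strengthened statement: for every strategy $s$, every formula $Q$, and all graphs $G, G'$ with $G \Rightarrow_{s} G'$, if $G \models wp(s,Q)$ and $vc(s,Q)$ is valid, then $G' \models Q$. The theorem itself is then immediate: from $correct(SP)$ and $G \models Pre$ we derive $G \models wp(s,Post)$ and the validity of $vc(s,Post)$, so the strengthened statement gives $G' \models Post$. The reason for carrying $vc(s,Q)$ through the induction is that the closure case needs invariant conditions that are only recorded in $vc$, not in $wp$.

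For the base and easy inductive cases I would proceed as follows. The empty strategy case is trivial because $wp(\epsilon,Q)=Q$ and the only derivation yields $G'=G$. For the composition $s_0;s_1$, the derivation rule splits the execution as $G \Rightarrow_{s_0} G'' \Rightarrow_{s_1} G'$; the hypothesis $G \models wp(s_0,wp(s_1,Q))$ together with $vc(s_0,wp(s_1,Q))$ gives $G'' \models wp(s_1,Q)$ by the induction hypothesis applied to $s_0$, and then a second application to $s_1$ with $vc(s_1,Q)$ gives $G' \models Q$. Choice is handled by observing that $wp(s_0\oplus s_1,Q)$ is the conjunction $wp(s_0,Q)\wedge wp(s_1,Q)$, so whichever branch the derivation uses, the appropriate conjunct and verification condition apply. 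For $\rho$, $\rho!$, and $\rho?$, each splits on whether $G \models App(\rho)$: in the positive case the definition of $wp$ gives $G \models wp(\alpha_\rho,Q)$ for the right-hand side action $\alpha_\rho$, and Lemma~\ref{cor:sound} yields $G' \models Q$; in the negative case the strategy semantics and the shape of $wp$ dispose of the case directly (for $\rho$ any $G'$ is admissible since the execution does not proceed; for $\rho!$ the derivation does not produce any $G'$ and there is nothing to prove; for $\rho?$ one has $G'=G$ and $wp(\rho?,Q)$ contains the conjunct $\neg App(\rho) \Rightarrow Q$).

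The main obstacle is the closure case $s^*$, which requires a secondary induction on the length of the derivation $G \Rightarrow_{s^*} G'$. Here $wp(s^*,Q) = inv_s$, and the verification condition gives us three conjuncts: $vc(s,Q)$, the preservation clause $inv_s \wedge App(s) \Rightarrow wp(s,inv_s)$, and the exit clause $inv_s \wedge \neg App(s) \Rightarrow Q$. If the derivation uses the \emph{Closure false} rule, then $G'=G$, $G\not\models App(s)$, and the exit clause yields $G'\models Q$ from $G\models inv_s$. Otherwise it uses \emph{Closure true}, giving $G \Rightarrow_s G'' \Rightarrow_{s^*} G'$ with $G \models App(s)$; the preservation clause gives $G \models wp(s,inv_s)$, so the outer induction hypothesis applied to $s$ with postcondition $inv_s$ (using $vc(s,Q)$, which must imply the verification conditions needed for the $s$ step — this is why invariants must be chosen so that $vc$ is strong enough, and indeed $vc(s,inv_s)$ follows from $vc(s,Q)$ in the relevant instances, or equivalently one assumes $vc(s,inv_s)$ as part of $vc(s^*,Q)$ as is standard; a close reading of the \texttt{vc} definition shows the invariant hypothesis is sufficient) yields $G'' \models inv_s$, and the inner induction on the shorter derivation $G'' \Rightarrow_{s^*} G'$ finishes the case. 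Combining all cases completes the induction, and the theorem follows immediately.
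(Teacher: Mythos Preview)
Your proof follows essentially the same route as the paper's: induction over the derivation of $G \Rightarrow_s G'$ (your phrasing as structural induction on $s$ with an inner induction on derivation length for the closure amounts to the same thing), invoking Lemma~\ref{cor:sound} for the three rule cases and splitting the closure into exit and preservation subcases. The hesitation you voice in the closure step about needing $vc(s,inv_s)$ rather than $vc(s,Q)$ is well placed and is not a defect of your argument: the paper's \emph{definition} of $vc(s^*,Q)$ in Figure~\ref{fig:vcstrategy} lists $vc(s,Q)$ as its first conjunct, yet the paper's own proof of the closure case writes the correctness formula with $vc(s_0,inv)$, which is exactly what the induction hypothesis requires and is the standard Hoare-logic choice. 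With that (evidently intended) reading of the definition, both your argument and the paper's go through verbatim.
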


\begin{proof}
This proof is done by induction on the semantic of the programming
language.
\begin{itemize}
\item Let us assume $s = \rho$. Then $correct(SP) = Pre \Rightarrow wp(\rho,\; Post)$. Let $G$, $G'$ be graphs such that $G \models Pre$ and $G \Rightarrow_{\rho} G'$ then, as $correct(SP)$ is valid, $G \models Pre \Rightarrow wp(\rho,\; Post)$. Thus, by modus ponens, $G \models wp(\rho,\; Post)$. As $wp(\rho,\; Post) = App(\rho) \Rightarrow wp(\alpha_\rho,\, Post)$ and, by definition of $\Rightarrow_{\rho}$, $G \models App(\rho)$  and thus, by modus ponens, $G \models wp(\alpha_\rho,\; Post)$. Then, by applying the lemma, $G' \models Post$.
\item Let us assume $s = \rho!$. Then $correct(SP) = Pre \Rightarrow
  wp(\rho!, Post)$. Let $G$, $G'$ be graphs such that $G \models
  Pre$ and $G \Rightarrow_{\rho!} G'$ then, as $correct(SP)$ is valid,
  $G \models Pre \Rightarrow wp(\rho!,\; Post)$. Thus, by modus
  ponens, $G \models wp(\rho!,\; Post)$. As $wp(\rho!,\; Post) =
  App(\rho) \wedge wp(\alpha_\rho,\, Post)$, $G \models
  wp(\alpha_\rho,\; Post)$. Then, by applying the lemma, $G'
  \models Post$.
\item Let us assume $s = \rho?$. Then $correct(SP) = Pre \Rightarrow wp(\rho?,\; Post)$. Let $G$, $G'$ be graphs such that $G \models Pre$ and $G \Rightarrow_{\rho?} G'$ then, as $correct(SP)$ is valid, $G \models Pre \Rightarrow wp(\rho?,\; Post)$. Thus, by modus ponens, $G \models wp(\rho?,\; Post)$. As $wp(\rho?,\; Post) = (App(\rho) \Rightarrow wp(\alpha_\rho,\, Post)) \wedge (\neg App(\rho) \Rightarrow Post)$, we have to treat two different cases:
  \begin{itemize}
  \item if $G \models App(\rho)$ then, by modus ponens, $G \models wp(\alpha_{rho},Post)$ and then, as, by definition of $\Rightarrow_{\rho?}$, $G \Rightarrow_{\rho} G'$, using the lemma, $G' \models Post$.
  \item otherwise, $G \models \neg App(\rho)$ and thus, by modus ponens, $G \models Post$. But, by definition of $\Rightarrow_{\rho?}$, $G = G'$ and thus $G' \models Post$.
  \end{itemize}
  Thus $G \models Pre \Rightarrow G' \models Post$.
\item Let us assume $s = s_0;\; s_1$. Then $correct(SP) = vc(s_0;
  s_1,\; Post)\; \wedge\; (Pre \Rightarrow wp(s_0;\; s_1,\; Post)$. As
  $vc(s_0;\; s_1,\; Post) = vc(s_0, wp(s_1, Post)) \wedge vc(s_1,
  Post)$ and $wp(s_0;\; s_1,\; Post) = wp(s_0, wp(s_1, Post))$,
  $correct(SP) = vc(s_0,\\ wp(s_1, Post)) \wedge vc(s_1, Post) \wedge
  (Pre \Rightarrow wp(s_0, wp(s_1, Post))$. Let $G$ be a graph such that
  $G \models Pre$. As $correct(SP)$ is valid, $G \models correct(SP)$. Let
  $G'$ be a graph such that $G \Rightarrow_{s_0; s_1}
  G'$. Then there exists $G''$ with $G
  \Rightarrow_{s_0} G''$ and $G'' \Rightarrow_{s_1}
  G'$. As $G \models Pre$ and $G \models vc(s_0,wp(s_1,Post) \wedge
  (Pre \Rightarrow wp(s_0,wp(s_1,Post)))$, by induction with $S_0 =
  (Pre, wp(s_1,Post), \mathcal{R}, s_0)$,
  $G'' \models wp(s_1,Post)$. As $correct(SP)$ is valid, so is $vc(s_1,Post)$ and thus also $vc(s_1,Post) \wedge
  (wp(s_1,Post) \Rightarrow wp(s_1,Post))$. Once more, by induction with $S_1 = (wp(s_1,Post), Post, \mathcal{R}, s_1)$, $G' \models Post$. Thus $G \models Pre \Rightarrow G' \models Post$.
\item Let us assume that $s = \epsilon$. Then $correct(SP) = Pre \Rightarrow Post$. Let $G$ and $G'$ be graphs such that $G \Rightarrow_{\epsilon} G'$ and $G \models Pre$. By definition, $G = G'$ and thus, by modus ponens, $G' \models Post$. Thus $G \models Pre \Rightarrow G' \models Post$.
\item Let us assume $s = s_0 \oplus s_1$. Then $correct(SP) = vc(s_0,Post)\; \wedge\; vc(s_1,Post)\; \wedge\; (Pre
  \Rightarrow wp(s_0,\; Post) \wedge wp(s_1,\; Post))$. Let $G$ and $G'$ be graphs such that $G \Rightarrow_{s_0 \oplus s_1} G'$ and $G \models Pre$. By definition of $\Rightarrow_{s_0 \oplus s_1}$, there are two possible cases:
\begin{itemize}
\item If $G \Rightarrow_{s_0} G'$ then, as $correct(SP)$ is valid, so is $vc(s_0,\\ Post) \wedge (Pre \Rightarrow wp(s_0,\; Post))$. As $G \models Pre$, by induction, $G' \models Post$.
\item otherwise, $G \Rightarrow_{s_1} G'$ and then, as $correct(SP)$ is valid, $vc(s_1,\; Post) \wedge (Pre \Rightarrow wp(s_1,Post))$. As $G \models Pre$, by induction, $G' \models Post$.
\end{itemize}
Thus $G \models Pre \Rightarrow G' \models Post$.
\item Let us assume $s = s_0^*\{inv\} $. Then $correct(SP) = vc(s_0,inv) \wedge (inv \wedge App(s_0) \Rightarrow wp(s_0,inv)) \wedge (inv \wedge \neg App(s_0) \Rightarrow Post) \wedge \; (Pre \Rightarrow inv)$. Let $G$ and $G'$ be graphs such that $G \models Pre$ and $G \Rightarrow_{s_0^*} G'$. There are two possible cases: 
\begin{itemize}
\item Let us assume that $G \models App(s_0)$. By definition of $\Rightarrow_{s_0^*}$, there exist $G''$ such that $G \Rightarrow_{s_0} G''$ and $G'' \Rightarrow_{s_0^*} G'$. As $correct(SP)$ is valid, so is $vc(s_0,inv) \wedge (Pre
  \Rightarrow (App(s_0) \wedge inv \Rightarrow wp(s_0,inv)))$. By induction with $S' = (Pre \wedge App(s_0)
  \wedge inv,inv,\mathcal{R},s_0)$, as $G \models Pre \wedge App(s_0)
  \wedge inv$, $G'' \models inv$. Similarly, with $S'' = (inv, Post, \mathcal{R}, s_0^*\{inv\})$, by induction, $G' \models Post$.
\item otherwise, $G \models \neg App(s_0)$ and thus, by modus ponens, $G \models Post$. But, by definition of $\Rightarrow_{s_0^*}$, $G = G'$ and thus $G' \models Post$.
\end{itemize}
Thus $G \models Pre \Rightarrow G' \models Post$.
\end{itemize}
\end{proof}

\begin{example}\label{ex:servernetcorr}
Let us consider \exampleref{ex:servernet}. We want to prove that the
specification $\{Pre\} (\mathcal{R}, \rho_0 \oplus \rho_1) \{Post\} $, where $Pre \equiv \exists U. (Client \wedge \exists Request. Proxy) \wedge \forall U. (Proxy \Rightarrow (\le\; N\; C2P\; \top))$ and $Post = \forall U. (Proxy \Rightarrow (\le\; N\; C2P\; \top))$, is correct. $Pre$ means that there exist a $Client$ that $Request$ed a connection to a $Proxy$ and that no $Proxy$ has more than $N$ different $C2P$ incoming connections. $Post$ means that no $Proxy$ has more than $N$ different $C2P$ incoming connections.
The correctness formula is then $(Pre \Rightarrow wp(\rho_0 \oplus \rho_1,Post)) \wedge vc(\rho_0 \oplus \rho_1,Post)$. It can be simplified, however, as $vc(\rho_0 \oplus \rho_1,Post) =  vc(\rho_0,Post) \wedge vc(\rho_1,Post)$ and both $vc(\rho_0,Post)$ and $vc(\rho_1,Post)$ are, by definition, true. The correctness formula is thus $Pre \Rightarrow ((App(\rho_0) \Rightarrow wp(\alpha_{\rho_0},Post)) \wedge (App(\rho_1) \Rightarrow wp(\alpha_{\rho_1},Post)))$
\end{example}

%%% Local Variables: 
%%% mode: latex
%%% TeX-master: "main"
%%% coding: utf-8
%%% End: 

\section{Assertion Logics}\label{sec:logic}
The framework presented so far regarding the considered rewrite
systems (LDRSs) and specifications is parameterized by a given logic
$\mathcal{L}$. In this section, we present some logics that could
possibly be used to instantiate this general framework. The logics
that are used should be closed under the substitutions generated by
the elementary actions. Otherwise, the computation of weakest
preconditions may be outside the considered logic.  We start by considering
first-order logic as well as some of its decidable fragments. We
focus more particularly on description logics (DL) in a second time. We show that some
of them are closed under substitutions for all the actions that we
have presented in this paper. We also provide a negative result by
proving that some DL fragments are not closed under substitutions
generated by the elementary action \emph{merge} ($mrg(i,j)$). The
results presented in this section are new and complete those already
given in \cite{BESDL16}. For all the logics we consider, we discuss the closure under substitutions and the expression of the literal $App(\rho)$. 

\subsection{First-order logic}\label{ssec:FO}

We start by recalling briefly the first-order formulas useful for our
purpose as well as the notions of interpretations and models.

\begin{definition}[First-order formula]
  Let $\mathcal{A} = (\mathcal{V}, \mathcal{C},
  \mathcal{R})$ where $\mathcal{V}$ is a set of variables, 
  $\mathcal{C}$ is a set of unary predicates, and $\mathcal{R}$ is a set of  binary
  predicates including equality ( $=$ ). Given $x, y \in \mathcal{V}$, $C \in \mathcal{C}$ and $R
  \in \mathcal{R}$, the set of first-order formulas $\phi$ we consider is defined by:\\
  \begin{tabular}{lll|l|l|l}
$\phi :=\;\top\; |\; C(x)\; |\; R(x,y)\; |\; x = y\; |\; \neg \phi\; |\; \phi \vee \phi\; |\; \exists x. \phi$
\end{tabular}\\

  For the sake of conciseness, we define $\bot \equiv \neg \top$, $\phi \wedge \psi \equiv \neg (\neg \phi \vee \neg \psi)$, $\forall x. \phi \equiv \neg (\exists x. \neg \phi)$.
  
  A variable $x$ is free in $\phi$ iff $\phi = C(t_0)$, $\phi =
  R(t_0,t_1)$ or $\phi = " t_0 = t_1"$ and $x$ occurs in $t_0$ or
  $t_1$, or $\phi = \neg \psi$ or $\phi = \psi \vee \psi'$ and $x$ is free in $\psi$ and $\psi'$, or $\phi = \exists y. \psi$ and $x$ is free in $\psi$ and $x$ is different from $y$.
  A formula with no free variable is a sentence. We only consider sentences hereafter.
\end{definition}

\begin{definition}[Model]
  Let $G = (N, E, \phin, \phir, s, t)$ be a graph over the alphabet
  $(\mathcal{C},\mathcal{R})$, an interpretation over the alphabet
  $(\mathcal{V}, \mathcal{C}, \mathcal{R})$ is a tuple $(\Delta,
  \cdot^{\mathcal{I}})$ such that $N \subseteq \Delta$ and
  $\cdot^\mathcal{I}$ is a function over formulas defined by:
  \begin{itemize}
  \item $\top^{\mathcal{I}}$ is true
  \item $C(x)^{\mathcal{I}}$ is true if and only if $C \in \phin(x)$
  \item $R(x,y)^{\mathcal{I}}$ is true if and only if $\exists e \in E. s(e) = x$ and $t(e) = y$ and $R \in \phir(e)$
  \item $x =^\mathcal{I} y$ is true if and only if $x$ is $y$
  \item $(\exists x. \phi)^\mathcal{I}$ is true if and only if $\exists n \in N. \phi[x \rightarrow n]^{\mathcal{I}}$ where $\phi[x \rightarrow n]$ is $\phi$ where each occurrence of $x$ is replaced with $n$
  \item $(\neg \phi)^{\mathcal{I}}$ is true if and only if not $\phi^{\mathcal{I}}$
  \item $(\phi \vee \psi)^{\mathcal{I}}$ is true if and only if $\phi^{\mathcal{I}}$ or $\psi^{\mathcal{I}}$
  \end{itemize}
  
  We say that a graph $G$ models a first-order formula $\phi$, written
  $G \models \phi$ if there exists an interpretation $(\Delta,
  \cdot^{\mathcal{I}})$ such that $\phi^\mathcal{I}$ is true.
\end{definition}

One may remark that $N \subseteq \Delta$ and not $N = \Delta$. This is
because some actions (e.g. node addition, deletion, merging ...)  may
modify the set of nodes currently existing (i.e.  nodes of the current
graph). To keep track of $N$, we follow \cite{BESICTAC16} and
introduce a special unary predicate $Active$ that denotes the existing
nodes. We transform formulas so that all $\exists x. \phi$ become
$\exists x. Active(x) \wedge \phi$ and add to the definition of
$\cdot^{\mathcal{I}}$ the fact that $Active(x)^{\mathcal{I}}$ is true
if and only if $x \in N$ and $\exists x. \phi$ if and only if
$\exists n \in \Delta. \phi[x \rightarrow n]$ where
$\phi[x \rightarrow n]$ is $\phi$ where each occurrence of $x$ is
replaced with $n$.

\begin{theorem}
First-order logic is closed under substitutions.
\end{theorem}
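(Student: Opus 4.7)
The plan is to proceed by structural induction on $\phi$, exhibiting for each elementary action $a$ and each atomic formula $\phi$ a first-order formula $\phi[a]$ satisfying $G \models \phi[a]$ iff $G[a] \models \phi$. The Boolean connectives and quantifiers then propagate directly through the substitution: $(\neg\psi)[a] = \neg(\psi[a])$, $(\psi_1 \vee \psi_2)[a] = \psi_1[a] \vee \psi_2[a]$, and $(\exists x.\,\psi)[a] = \exists x.\,\psi[a]$, the last using the guarded form $\exists x.\,Active(x) \wedge \psi$ together with an explicit substitution for $Active$ to absorb domain changes caused by node addition, deletion, merging, or cloning.

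For the strictly local actions, the substitution touches a single predicate. For $add_N(i)$ I would set $Active(x) \mapsto Active(x) \vee x = i$; for $del_N(i)$, $Active(x) \mapsto Active(x) \wedge x \neq i$, together with $C(x) \mapsto C(x) \wedge x \neq i$ and $R(x,y) \mapsto R(x,y) \wedge x \neq i \wedge y \neq i$ so that the concepts and incident edges of $i$ also disappear. For $add_C(i,c)$ and $del_C(i,c)$ the affected unary predicate $c$ picks up a disjoined $x = i$, respectively a conjoined $x \neq i$; for $add_E(e,i,j,r)$ and $del_E(e,i,j,r)$ the predicate $r$ is adjusted similarly on the pair $(i,j)$, under the standard first-order reading in which a role records only the existence of some edge of that label between two nodes. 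The global redirection gives $R(x,y)[i \gg j] = (y \neq i \wedge R(x,y)) \vee (y = j \wedge R(x,i))$, reflecting that an edge with target $y$ in the result either was already present or was one originally pointing to $i$.

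The merge and clone cases require more case analysis. For $mrg(i,j)$, the node $j$ is absorbed into $i$, so I would set $Active(x) \mapsto Active(x) \wedge x \neq j$, $C(x) \mapsto (x = i \wedge (C(i) \vee C(j))) \vee (x \neq i \wedge x \neq j \wedge C(x))$, and unfold $R(x,y)$ into a four-way disjunction according to whether the original source or target could have been $j$. The hard part will be the clone $cl(i,j,\rin,\rout,\rlin,\rlout,\rll)$: for a fixed role $R$, the new edges split into up to five disjoint families, and the substitution for $R(x,y)$ receives one extra disjunct per family to which $R$ belongs, namely $(y = j \wedge x \neq i \wedge x \neq j \wedge R(x,i))$ when $R \in \rin$, $(x = j \wedge y \neq i \wedge y \neq j \wedge R(i,y))$ when $R \in \rout$, and the three singleton disjuncts $(x = i \wedge y = j \wedge R(i,i))$, $(x = j \wedge y = i \wedge R(i,i))$, and $(x = j \wedge y = j \wedge R(i,i))$ whenever $R$ lies in $\rlin$, $\rlout$, or $\rll$ respectively. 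Since the five parameter sets are concrete, membership of $R$ in each is decided statically, so every clause is either retained or dropped. The resulting formulas are first-order, and their correctness mirrors the bijections $\fin, \fout, \flin, \flout, \fll$ of \figref{semaact}; that atomic equivalence is then lifted through the induction to arbitrary $\phi$.
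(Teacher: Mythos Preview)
Your proposal is correct and follows essentially the same approach as the paper: structural induction on the formula, with explicit first-order rewrites for each atomic predicate under each elementary action (including the five-way case split for $cl$ and the four-way split for $mrg$), and mechanical propagation through $\neg$, $\vee$, and $\exists$ via the $Active$ guard. The only minor omission is that you do not spell out $C(x)[cl(i,j,\dots)] = C(x) \vee (x = j \wedge C(i))$, but this is an easy case in the same spirit and does not affect the argument.
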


\begin{proof}
The proof is done by induction on the formula constructors. We focus here on the substitutions generated by the elementary actions $mrg$ anc $cl$. The full proof is reported in the appendix.

We start by giving formulas without substitutions that are equivalent to those with substitutions.
  \begin{itemize}
\item $\top[\sigma] \leadsto \top$
\item $Active(x)[\sigma] \leadsto Active(x)$ if $\sigma \neq add_N(i)$
  and $\sigma \neq  cl(i,j,\dots)$ and $\sigma \neq mrg(i,j)$
\item $Active(x)[add_N(i)] \leadsto Active(x) \vee i = x$
\item $Active(x)[del_N(i)] \leadsto Active(x) \wedge i \neq x$
  \item $Active(x)[cl(i,j,\dots)] \leadsto Active(x) \vee x = j$
  \item $Active(x)[mrg(i,j)] \leadsto Active(x) \wedge x \neq j$  
\item $C'(x)[add_C(i,C)] \leadsto C'(x)$
\item $C(x)[add_C(i,C)] \leadsto C(x) \vee i = x$
\item $C'(x)[del_C(i,C)] \leadsto C'(x)$
\item $C(x)[del_C(i,C)] \leadsto C(x) \wedge \neg i = x$
\item $C(x)[add_R(i,j,R)] \leadsto C(x)$
\item $C(x)[del_R(i,j,R)] \leadsto C(x)$
\item $C(x)[add_N(i)] \leadsto C(x)$ for $C \neq Active$
\item $C(x)[del_N(i)] \leadsto C(x) \wedge \neg i = x$ for $C \neq
  Active$
\item $C(x)[i \gg j] \leadsto C(x)$
  \item $C(x)[cl(i,j,\dots)] \leadsto C(x) \vee (x = j \wedge C(i))$ if $C \neq Active$
  \item $C(x)[mrg(i,j)] \leadsto x \neq j \wedge (C(x) \vee (x = i
    \wedge C(j)))$ if $C \neq Active$
\item $R(x,y)[add_C(i,C)] \leadsto R(x,y)$
\item $R(x,y)[del_C(i,C)] \leadsto R(x,y)$
\item $R'(x,y)[add_R(i,j,R)] \leadsto R'(x,y)$
\item $R(x,y)[add_R(i,j,R)] \leadsto R(x,y) \vee (i = x \wedge j = y)$
\item $R'(x,y)[del_R(i,j,R)] \leadsto R'(x,y)$
\item $R(x,y)[del_R(i,j,R)] \leadsto R(x,y) \wedge (\neg i = x \vee
  \neg j = y)$
\item $R(x,y)[add_N(i)] \leadsto R(x,y)$
\item $R(x,y)[del_N(i)] \leadsto R(x,y) \wedge \neg i = x \wedge \neg i = y$
\item $R(x,y)[i \gg j] \leadsto (R(x,y) \wedge \neg i = y) \vee
  (R(x,i) \wedge j = y))$ 
  \item $R(x,y) [cl(i,j,\dots)] \leadsto R(x,y) \vee \phiin \vee \phiout \vee \philin \vee \philout \vee \phill$ where:
    \begin{itemize}
  \item $\phiin = \left\{
	\begin{array}{ll}
	  R(x,i) \wedge y = j  \wedge \neg (x = i) & \mbox{if } R \in \rin\\
          \bot  & \mbox{otherwise }
	\end{array}
        \right.$
  \item $\phiout = \left\{
	\begin{array}{ll}
	  R(i,y) \wedge x = j  \wedge \neg (y = i) & \mbox{if } R \in \rout\\
          \bot  & \mbox{otherwise }
	\end{array}\right.$
  \item $\philin = \left\{
	\begin{array}{ll}
	  R(i,i) \wedge x = i \wedge y = j  & \mbox{if } R \in \rlin\\
          \bot  & \mbox{otherwise }
	\end{array}
        \right.$
  \item $\philout = \left\{
	\begin{array}{ll}
	  R(i,i) \wedge x = j \wedge y = i  & \mbox{if } R \in \rlout\\
          \bot  & \mbox{otherwise }
	\end{array}
        \right.$
  \item $\phill = \left\{
	\begin{array}{ll}
	  R(i,i) \wedge x = j \wedge y = j  & \mbox{if } R \in \rll\\
          \bot  & \mbox{otherwise }
	\end{array}
        \right.$
    \end{itemize}
  \item $R(x,y) [mrg(i,j)] \leadsto x \neq j \wedge y \neq j \wedge (R(x,y) \vee (R(x,j) \wedge y = i) \vee$\\$ (R(j,y) \wedge x = i) \vee (x = i \wedge y = i \wedge R(j,j)))$  
\item $(\exists x. \phi)[\sigma]) \leadsto \exists x. (\phi[\sigma])$
\item $(\phi \wedge \psi)[\sigma] \leadsto \phi[\sigma] \wedge
  \psi[\sigma]$
\item $(\neg \phi)[\sigma] \leadsto \neg (\phi[\sigma])$
  \end{itemize}

Let us now prove that the proposed formulas without
  substitutions are indeed euivalent to the ones with
  substitutions. For lack of space, we will illustrate these equivalences only for some of them. To do that, we introduce the interpretations ($\Delta^G,\cdot^G$) and ($\Delta^{G'},\cdot^{G'}$) that results from the cloning or merging action.
\begin{description}
\item[{$\top[\sigma]$:}] No matter what action is
  performed, $\top$ is satisfied.
\item[{$Active(x)[\sigma]$:}] If $\sigma$ is not a node creation,
  deletion, cloning or merging all nodes that were active stay so and vice-versa.
\item[{$Active(x)[add_N(i)]$:}] The
  valuation of $Active$ becomes $Active^G \cup \{i^G\}$.
\item[{$Active(x)[del_N(i)]$:}] The
  valuation of $Active$ becomes $Active^G \backslash \{i^G\}$.
\item[{$Active(x)[cl(i,j,\dots)]$:}] As $N^{G'} = N^G \cup {j}$,
  $Active(x) \in \phi_N^{G'}(n)$ if and only if $(Active(x) \vee x =
  j) \in \phi_N^G(n)$.
  \item[{$Active(x)[mrg(i,j)]$:}] As $N^{G'} = N^G \backslash {j}$,
    $Active(x)^{G'}$ if and only if $(Active(x) \wedge x \neq j)^G$.
\item[{$C'(x)[add_C(i,C)]$:}] The valuation of $C'$ is left
  untouched.
\item[{$C(x)[add_C(i,C)]$:}] $C^{G'}$ after
  performing $add_C(i,C)$ is $C^{G} \cup \{i^{G}\}$.
\item[{$C'(x)[del_C(i,C)]$:}] The valuation of $C'$ is left
  untouched.
\item[{$C(x)[add_R(i,j,R)]$:}] The valuation of $C$ is left
  untouched.
\item[{$C(x)[del_R(i,j,R)]$:}] The valuation of $C$ is left
  untouched.
\item[{$C(x)[del_C(i,C)]$:}] $C^{G'}$ after
  performing $del_C(i,C)$ is $C^{G} \backslash \{i^{G}\}$.
\item[{$C(x)[add_N(i)]$:}] The valuation of $C$ is left
  untouched. 
\item[{$C(x)[del_N(i)]$:}] $C^{G'} = C^{G}
  \backslash \{i^{G}\}$.
\item[{$C(x)[i \gg j]$:}] $C^{G'} = C^{G}$.
  \item[{$C(x)[cl(i,j,\dots)]$:}] As $\phi_N^{G'}(j) = \phi_N^G(i)$ and $\forall n \neq j, \phi_N^{G'}(n) = \phi_N^G(n)$, $C(x)^{G'}$ if and only if $(C(x) \vee (x = j \wedge C(i))^G$.
  \item[{$C(x)[mrg(i,j)]$:}] As $\phi_N^{G'}(i) = \phi_N^G(i) \cup
    \phi_N^G(j)$ and $\forall n \neq j, \phi_N^{G'}(n) = \phi_N^G(n)$,
    $C(x)^{G'}$ if and only if $(x \neq j \wedge (C(x) \vee (x = i
    \wedge C(j))))^G$.
\item[{$R(x,y)[add_C(i,C)]$:}] The valuation of $R$ is left
  untouched. 
\item[{$R(x,y)[del_C(i,C)]$:}] The valuation of $R$ is left
  untouched.
\item[{$R'(x,y)[add_R(i,j,R)]$:}] The valuation of $R'$ is left
  untouched.
\item[{$R(x,y)[add_R(i,j,R)]$:}] $R^{G'}$ is $R^{G} \cup \{(i^{G},j^{G})\}$. 
\item[{$R'(x,y)[del_R(i,j,R)]$:}] The valuation of $R'$ is left
  untouched.
\item[{$R(x,y)[del_R(i,j,R)]$:}] $R^{G'}$ is $R^{G} \backslash \{(i^{G},j^{G})\}$.
\item[{$R(x,y)[add_N(i)]$:}] The valuation of $R$ is left
  untouched. 
\item[{$R(x,y)[del_N(i)]$:}] $R^{G'} = R^{G}
  \backslash \{(a,b)| a \in i^{G} \text{ or } b \in i^{G}\}$.
\item[{$R(x,y)[i \gg j]$:}] $R^{G'} = R^{G} \cup \{(a,j) | (a,i) \in
  R^{G}\} \backslash \{(a,i) \in R^{G}\}$. 
 \item[{$R(x,y)[cl(i,j,\dots)]$:}] $R(x,y) [cl(i,j,\dots)] \leadsto R(x,y) \vee \phiin \vee \phiout \vee \philin \vee \philout \vee \phill$:\\
  If $R \in \phi_E^{G'}(e')$ then either:
    \begin{itemize}
    \item $e' \in \ein$ and then $x = s^{G}(in(e'))$ and $y = j$, that is there exists $e$ such that $R \in \phi_E^G(e)$ and $s^G(e) = x$ and $t^G(e) = i$. Thus $(R(x,i) \wedge y = j \wedge \neg (x = i))^G$.
    \item $e' \in \eout$ and then $x = j$  and $y = t^{G}(out(e'))$, that is there exists $e$ such that $R \in \phi_E^G(e)$ and $s^G(e) = i$ and $t^G(e) = y$. Thus $(R(i,y) \wedge x = j \wedge \neg (y = i))^G$.
    \item $e' \in \elin$ and then $x = i$, $y = j$ and there exists $e$ such that $R \in \phi_E^G(e)$ and $s^G(e) = i$ and $t^G(e) = i$. Thus $(R(i,i) \wedge x = i \wedge y = j)^G$.
    \item $e' \in \elout$ and then $x = j$, $y = i$ and there exists $e$ such that $R \in \phi_E^G(e)$ and $s^G(e) = i$ and $t^G(e) = i$. Thus $(R(i,i) \wedge x = j \wedge y = i)^G$.
    \item $e' \in \ell$ and then $x = j$, $y = j$ and there exists $e$ such that $R \in \phi_E^G(e)$ and $s^G(e) = i$ and $t^G(e) = i$. Thus $(R(i,i) \wedge x = j \wedge y = j)^G$. 
    \item otherwise, $e' \in E^G$ and thus $R(x,y)^G$.
    \end{itemize}
  \item[{$R(x,y)[mrg(i,j)]$:}] $R^{G'} = \{(x,y) | x \neq j \text{ and } y \neq j \text{ and } (R(x,y) \text{ or } (R(x,j) \text{ and } y = i) \text{ or } (R(j,j) \text{ and } y = i \text{ and } x = i))\}$ 
\item[{$(\exists x. \phi)[\sigma]$:}]
  The substitutions do not modify the existence or not of a node.
\item[{$(\phi \wedge \psi)[\sigma]$:}] If $\phi \wedge \psi$ is satisfied after performing
  $\sigma$, so must be $\phi$ and $\psi$ and the other way round.
\item[{$(\neg \phi)[\sigma]$:}] If $\phi$
  is not satisfied after performing
  $\sigma$, it is not possible that $\phi$ be satisfied after
  performing $\sigma$.
  \end{description}
\end{proof}

The proof of the previous theorem shows that the shape of the used formulas is conserved. If the formulas (ignoring the substitutions) belonged to less expressive and decidable fragments of first-order logic, namely the two-variable fragment with counting $\mathcal{C}2$\cite{graedel_otto_rosen_lics97} and $\forall^*\exists^*$, the fragment containing only formulas that, in prenex normal form, can be written as $\forall x_0\dots\forall x_n\exists y_0\dots\exists y_n. \phi$ with $\phi$ quantifier free\cite{boerger_graedel_gurevich_decision_problem}, so do the equivalent formulas without substitution.

\begin{corollary}
$\forall^*\exists^*$ and $\mathcal{C}2$ are closed under substitutions.
\end{corollary}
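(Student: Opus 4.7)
The plan is to reuse the inductive analysis of the preceding theorem, which exhibits for each substitution $[\sigma]$ and each atomic formula a syntactically equivalent formula without substitution, and to check that these equivalences preserve the two extra syntactic restrictions imposed by $\forall^*\exists^*$ and by $\mathcal{C}2$. The key observation, visible by inspection of the table of rewrites, is that (i) each atomic substitution $A[\sigma]$ rewrites to a Boolean combination of atoms and equalities, with no new quantifier introduced, and (ii) the only extra terms appearing in such a rewrite, beyond the free variables of $A$, are the node names $i,j$ specified by the action (treated as constants) and role membership tests in the sets $\mathcal{R}_{\mathit{in}}, \mathcal{R}_{\mathit{out}}, \dots$ which are purely propositional side conditions. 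Since the substitution commutes with $\neg$, $\vee$ and $\exists$, the overall effect of $[\sigma]$ on a formula is to leave the quantifier structure untouched and only rewrite its atoms.

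For $\forall^*\exists^*$, take $\phi \equiv \forall x_0 \dots \forall x_n \exists y_0 \dots \exists y_m\; \psi$ with $\psi$ quantifier-free. Using the rules $(\exists x. \phi)[\sigma] \leadsto \exists x. (\phi[\sigma])$, $(\neg \phi)[\sigma] \leadsto \neg(\phi[\sigma])$ and $(\phi \wedge \psi)[\sigma] \leadsto \phi[\sigma] \wedge \psi[\sigma]$ (and their derived dual for $\forall$), we push the substitution past the prefix to obtain $\forall x_0 \dots \forall x_n \exists y_0 \dots \exists y_m\; (\psi[\sigma])$. By observation (i), $\psi[\sigma]$ is again quantifier-free, so the formula is in $\forall^*\exists^*$.

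For $\mathcal{C}2$, the argument is that no substitution rule introduces a third first-order variable: every atomic rewrite in the previous table uses exactly the free variables already present in the atom being rewritten (at most $x$ and $y$), plus the constants $i, j$ naming the affected nodes. As counting quantifiers and the Boolean connectives are left in place by the commutation rules, the resulting formula still uses at most two first-order variables and no new quantifier alternations, hence belongs to $\mathcal{C}2$.

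The main obstacle is checking the variable count carefully in the heaviest cases, namely $R(x,y)[\mathit{cl}(i,j,\dots)]$ and $R(x,y)[\mathit{mrg}(i,j)]$, whose right-hand sides are long disjunctions: one must verify that each disjunct uses only $x, y, i, j$, with $i$ and $j$ appearing solely as constants. Since this can be read off directly from the rewrites given in the proof of the theorem, the corollary follows.
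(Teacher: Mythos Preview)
Your proof is correct and follows the same approach as the paper, which simply observes that the rewrite rules from the preceding theorem preserve the relevant syntactic shape. You provide considerably more detail than the paper's one-sentence justification, in particular making explicit that substitutions commute with quantifiers and that the atomic rewrites introduce neither new quantifiers nor new bound variables (only the node names $i,j$, which function as constants), which is exactly what is needed for both fragments.
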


The correctness formula includes substitutions as well as literals of the form $App(\rho)$. We proved previously that it is possible to remove the substitutions. Below, we show that $App(\rho)$ can be expressed in first-order logic. 

\begin{proposition}
Let us assume that $\rho$ is a rule such that the labels of its
left-hand side are in first-order logic. It is possible to express
$App(\rho)$ in first order logic.
\end{proposition}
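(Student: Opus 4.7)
Let $L = (N^L, E^L, \Phi_N^L, \Phi_E^L, s^L, t^L)$ be the left-hand side of $\rho$, with nodes $N^L = \{n_1, \ldots, n_k\}$ and edges $E^L = \{e_1, \ldots, e_m\}$. The plan is to translate \defref{match} directly into a first-order sentence: introduce a fresh variable $x_i$ for each node $n_i$ of $L$, existentially quantify over all of them, and express the four matching conditions as a conjunction --- activity of each $x_i$, satisfaction at $x_i$ of every concept in $\Phi_N^L(n_i)$, and a binary atom for each edge witnessing that a correctly labelled edge connects the images of its endpoints.

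Concretely, writing $\Phi_N^L(n_i) = \{c_{i,1}, \ldots, c_{i,l_i}\}$ (each $c_{i,j}$ being, by hypothesis, a first-order formula with a single free variable), and for each edge $e_j$ setting $s^L(e_j) = n_{a_j}$, $t^L(e_j) = n_{b_j}$ and $\Phi_E^L(e_j) = r_j$, the sentence I would take is
\[
App(\rho) \;\equiv\; \exists x_1 \ldots \exists x_k.\ \bigwedge_{i=1}^k \Bigl( Active(x_i) \wedge \bigwedge_{j=1}^{l_i} c_{i,j}(x_i) \Bigr) \;\wedge\; \bigwedge_{j=1}^m r_j(x_{a_j}, x_{b_j}).
\]

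Next the plan is to verify soundness in both directions, which amounts to a direct unfolding of \defref{match} against the semantics of first-order logic on graphs given earlier. From a valuation $x_i \mapsto \nu_i$ satisfying the sentence, one sets $h^N(n_i) = \nu_i$ and, for each edge $e_j$, picks as $h^E(e_j)$ any $\epsilon_j \in E^G$ whose existence is exactly what the atom $r_j(x_{a_j}, x_{b_j})$ asserts; the source/target agreement and the edge-label equality of \defref{match} then hold by construction, and the node-label clauses of \defref{match} are precisely the conjuncts $c_{i,j}(x_i)$. Conversely, any match $h$ yields the valuation $x_i \mapsto h^N(n_i)$ making the sentence true. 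Since \defref{match} does not require matches to be injective, the absence of distinctness clauses among the $x_i$ is correct and not an omission.

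The construction is entirely mechanical, so there is no single hard step; the point deserving attention is just that we genuinely remain inside first-order logic. This holds because (i) the node-label formulas $c_{i,j}$ are FO by hypothesis, (ii) the ambient signature provides the unary predicate $Active$ and a binary predicate for each basic role, and (iii) no new quantifier alternation or counting is introduced beyond an existential prefix of length $|N^L|$. This last observation will be useful later: the same translation lands inside $\forall^*\exists^*$ and inside $\mathcal{C}2$ whenever the label formulas themselves do, so the fragments considered in the corollary above remain usable as assertion logics for rule applicability.
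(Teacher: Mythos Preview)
Your proof is correct and essentially identical to the paper's: both build the sentence $\exists_{n\in N^L} x_n.\ \bigwedge_n \psi_n \wedge \bigwedge_e \psi_e$ and verify the two directions of the equivalence with \defref{match} by the same direct unfolding.

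One caveat on your closing remark: the claim that the translation ``lands inside $\mathcal{C}2$ whenever the label formulas themselves do'' is not right as stated, since the existential prefix alone introduces $|N^L|$ distinct variables. The paper's own example (the rule with three nodes $i,j,k$) yields $\exists i,j,k.\ C(j)\wedge(C(k)\vee D(k))\wedge R(i,j)\wedge R(j,k)$, which is explicitly outside $\mathcal{C}2$; an equivalent two-variable formula exists, but it is not the one your construction produces. Similarly, for $\forall^*\exists^*$ the claim only holds when the node labels are themselves existential (or quantifier-free), since a universal inside a label would sit under the outer existential prefix.
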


\begin{proof}
Let $L = (N^L, E^L, \phin^L, \phir^L, s^L, t^L)$ be the left-hand side of
$\rho$. Let $A = \exists_{n \in N^L} x_n. \bigwedge_{n \in
  N^L} \psi_n \wedge
\bigwedge_{e \in E^L}  \psi_e$\footnote{$\exists_{n \in N} x_n$ is used
  as a shorthand for $\exists x_{n_0}. \dots \exists x_{n_k}$ where $N
  = \{n_0, \dots, n_k\}$.} where $\psi_n = \bigwedge_{c \in \phin^L(n)} c(x_n)$
and\\ $\psi_e = \bigwedge_{r \in \phir^L(e)} r(x_{s^L(e)},x_{t^L(e)})$.

Let us assume that $G = (N^G, E^G, \phin^G, \phir^G, s^G, t^G)$ is
a graph. 

Let us assume that $G \models A$. Then, let us define $h^N(n) = x_n$,
for $n \in N^L$, and $h^E(e) = \xi_e$ where $\xi_e \in \{e' \in E^G |
s^G(e') = x_{s^L(e)} \wedge t^G(e') = x_{t^L(e)}\}$, for $e \in E^L$.
\begin{enumerate}
\item For all $n \in N^L$, for all $c \in \phin^L(n)$, $x_n \models c$
\item For all $e \in E^L$, for all $r \in \phir^L(e)$, $\xi_{e}
  \models r$
\item  For all $e \in E^L$, $s^G(\xi_e) = x_{s^L(e)}$
\item  For all $e \in E^L$, $t^G(\xi_e) = x_{t^L(e)}$
\end{enumerate}
$(h^N,h^E)$ is thus a match. Hence, there exist at least one.

Let us now assume that there exists a match $(h^N, h^E)$ from $L$ to
$G$. Then, by definition, the $x_n= h^N(n)$'s (and $\xi_e=h^E(e)$'s) of $A$
exist. Additionally, due to the first condition, $x_n$ is a model of
$\psi_n$ and, thanks to the other conditions, $\xi_e$ is a model of
$\psi_e$. Thus $G \models A$.

Thus, $A \Leftrightarrow App(\rho)$.   
\end{proof}

\begin{example}
Let us consider the rule $\rho$ of \figref{fig:NAppFO}. The corresponding $App(\rho)$ in first-order logic is $\exists i,j,k. C(j) \wedge (C(k) \vee D(k)) \wedge R(i,j) \wedge R(j,k)$. This formula is in $\forall^*\exists^*$ but not in $\mathcal{C}^2$. However, it is equivalent to the $\mathcal{C}^2$ formula $\exists x,y. (R(x,y) \wedge C(y) \wedge \exists x.((C(x) \vee D(x)) \wedge R(y,x)))$.
\end{example}

\begin{figure}
\def\smallscale{0.6}
\begin{center}
\resizebox{10cm}{!}{
\begin{tikzpicture}
[auto,
blockbl/.style ={rectangle, draw=black, thick, fill=white!20,
  text centered, rounded corners,
  minimum height=1em
},
blockb/.style ={rectangle, draw=blue, thick, fill=blue!20,
  text width=2em, text centered, rounded corners,
  minimum height=1em
},
blockr/.style ={rectangle, draw=red, thick, fill=red!20,
  text centered, rounded corners,
  minimum height=1em
},
blockg/.style ={rectangle, draw=gray!20, thick, fill=gray!20,
  text width=2em, text centered, rounded corners,
  minimum height=1em
},
blockw/.style ={rectangle, draw=blue, thick, fill=white!20,
  text centered, rounded corners,
  minimum height=1em
},
blockbw/.style ={rectangle, draw=blue, thick, fill=white!20,
  text centered, rounded corners,
  minimum height=10em
},
blocke/.style ={rectangle, draw=white, thick, fill=white!20,
  text centered, rounded corners,
  minimum height=1em
},
group/.style ={fill=gray!20, node distance=10mm},
ggroup/.style ={fill=red!5, node distance=10mm},
igroup/.style ={fill=red!15, node distance=10mm},
line/.style ={draw=black, -latex'},
lineb/.style ={draw=blue, -latex'},
liner/.style ={draw=red, -latex'},
lineg/.style ={draw=green, -latex'},
thickline/.style ={draw, thick, double, -latex', }
]

\node (r0) at (0,5) [blocke] {$\rho$:};
\node (a0) at (2,4) [blockbl] {$i$};
\node (b0) at (3,6) [blockbl] {$j: C$};
\node (c0) at (4,4) [blockbl] {$k: C \vee D$};

\node (A0) at (9,5) [blockw] {$add_C(i,A);$};

\begin{pgfonlayer}{background}
\node [group, fit=(a0) (b0) (c0)] (G0) {};
\end{pgfonlayer}

\path (a0) edge [-latex']                    node {$R$} (b0);
\path (b0) edge [-latex']                    node {$R$} (c0);

\path (G0) edge [double,-latex]              node {} (A0);

\end{tikzpicture}
}
\end{center}
\caption{Example of a rule with labels in first-order logic}
\label{fig:NAppFO}
\end{figure}
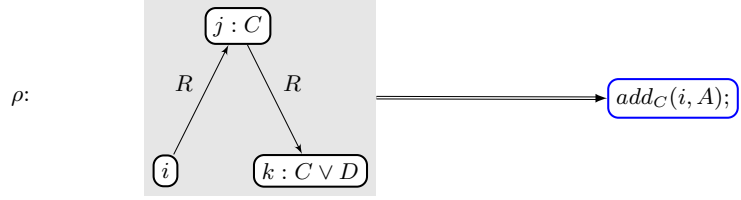

%%% Local Variables: 
%%% mode: latex
%%% TeX-master: "main"
%%% coding: utf-8
%%% End: 

\subsection{Description logics}\label{ssec:DL}

 Description Logics are also fragments of first-order logic but not all
 description logics are closed under substitutions. We mainly focus in this
 subsection on  the substitutions generated by the cloning and merging
 elementary actions. Closure under classical substitutions have been
 considered in \cite{BESDL16}. We prove that with the addition of merge and global
 edge redirection, some logics are still closed while others no longer are.

We assume that the reader is familiar with Description Logics (see \cite{DLHandbook} for extended definitions). We only focus on extensions of $\mathcal{ALC}$. We recall that these extensions are named by appending a letter representing additional constructors to the logic name. We focus on nominals (represented by $\mathcal{O}$), counting quantifiers ($\mathcal{Q}$), self-loops ($\mathcal{S}elf$), inverse roles ($\mathcal{I}$) and the universal role ($\mathcal{U}$). For instance, the logic $\mathcal{ALCUO}$ extends $\mathcal{ALC}$ with the universal role and nominals. Below, we recall the definition of $\mathcal{ALC}$ and the possible additionnal constructors.
 
\begin{definition}[Concept, Role, $\mathcal{ALC}$]
Let $\mathcal{A} = (\mathcal{O}, \mathcal{C}_0, \mathcal{R}_0)$ be an alphabet where $\mathcal{O}$ (resp. $\mathcal{C}_0, \mathcal{R}_0$) is the set of nominals (resp. atomic concepts, atomic roles), given $o \in \mathcal{O}$, $C_0 \in \mathcal{C}_0$, $r_0 \in \mathcal{R}_0$ and $n$ and integer, $\mathcal{ALC}$ concepts $C$ and roles $R$ are defined by:\\
$C := \top 
\; | \; C_0 
\; | \; \exists R. C
\; | \; \neg C  
\; | \; C \vee C$\\

\hspace*{-.5cm}$R := r_0 $

$\mathcal{ALC}$ can be extended by adding some of the following concept and role constructors :\\
$C := o\; (\text{nominals}) \; | \; \exists R. Self\; (\text{self loops}) \; | \; (<\; n\; R\; C)\; (\text{counting quantifiers})$\\
$R := U\; (\text{universal role})\; | \; R^- (\text{inverse role})$

  For the sake of conciseness, we define $\bot \equiv \neg \top$, $C \wedge C' \equiv \neg (\neg C \vee \neg C')$, $\forall R. C \equiv \neg (\exists R. \neg C)$ and $(\geq\; n\; R\; C) \equiv \neg (<\; n\; R\; C)$.
\end{definition}

\begin{definition}[Interpretation]
       An interpretation over an alphabet
    $(\mathcal{C}_0, \mathcal{R}_0, \mathcal{O},)$  is a tuple
    $(\Delta^{\mathcal{I}},\cdot^{\mathcal{I}})$ where
    $\cdot^{\mathcal{I}}$ is a function such that
    $c_0^{\mathcal{I}} \subseteq \Delta^{\mathcal{I}}$, for every
    atomic concept $c_0 \in \mathcal{C}_0$,
    $r_0^{\mathcal{I}} \subseteq \Delta^{\mathcal{I}} \times
    \Delta^{\mathcal{I}}$, for every atomic role $r_0 \in \mathcal{R}_0$,
    $o^\mathcal{I} \in \Delta^{\mathcal{I}}$ for every nominal $o \in
    \mathcal{O}$. The interpretation function is extended to concept
    and role descriptions by the following inductive definitions:  
  \begin{itemize}
  \item $\top^{\mathcal{I}} = \Delta{\mathcal{I}}$
  \item $(\neg C)^{\mathcal{I}} = \Delta{\mathcal{I}} \backslash C^{\mathcal{I}}$
  \item $(C \vee D)^{\mathcal{I}} = C^{\mathcal{I}} \cup
    D^{\mathcal{I}}$
  \item $(\exists R.C)^{\mathcal{I}} = \{n \in \Delta^{\mathcal{I}} |
    \exists m, (n,m) \in R^{\mathcal{I}} \mbox { and } m \in
    C^{\mathcal{I}} \}$
  \item $(\exists R.Self)^{\mathcal{I}} = \{n \in  \Delta^{\mathcal{I}}
    | (n,n) \in R^{\mathcal{I}}\}$
\item $(<\; n\; R\; C)^{\mathcal{I}} = \{\delta \in  \Delta^{\mathcal{I}} | \#(\{m \in \Delta^{\mathcal{I}} | (\delta, m)
   \in R^{\mathcal{I}}$ and $m \in C^{\mathcal{I}}\}) < n\}$
 \item $(R^-)^{\mathcal{I}} = \{(n,m) \in \Delta^{\mathcal{I}} \times
    \Delta^{\mathcal{I}} | (m,n) \in R^\mathcal{I}\}$
  \item $U^\mathcal{I} = \Delta^{\mathcal{I}} \times \Delta^{\mathcal{I}}$
  \end{itemize}
  \end{definition}

\begin{definition}[Interpretation induced by a decorated graph]
  Let $G = (N, E, \phin, \phir, s, t)$ be a graph over an alphabet
  $(\mathcal{C},\mathcal{R})$ such that
  $\mathcal{C}_0 \; \cup\; \mathcal{O} \subseteq \mathcal{C}$ and
  $\mathcal{R}_0 \subseteq \mathcal{R}$. The interpretation induced by
  the graph $G$, denoted $(\Delta^{\mathcal{G}},\cdot^{\mathcal{G}})$
  such that $\Delta^{\mathcal{G}} = N$,
  $c_0^{\mathcal{G}} = \{ n \in N | C_0 \in \phin(n)\}$, for every
  atomic concept $c_0 \in \mathcal{C}_0$,
  $r_0^{\mathcal{G}} = \{(n,m) \in N \times N | \exists e \in E. s(e)
  = n$
  and $t(e) = m$ and $r_0 \in \phir(e)\}$, for every atomic role
  $r_0 \in \mathcal{R}_0$, 
  $o^\mathcal{G} = \{n \in N | o \in \phin(n)\}$ for every nominal
  $o \in \mathcal{O}$.

\noindent
We say that a node $n$ of a graph $G$ satisfies a concept $c$, written
$n \models c$ if $n \in c^{\mathcal{G}}$. We say that a graph $G$
satisfies a concept $c$, written $G \models c$ if $c^{\mathcal{G}} =
N$ that is every node of $G$ belongs to the interpretation of $c$
induced by $G$.
  \end{definition}

We first consider the possibility to express $App(\rho)$ in a
Description Logic $\mathcal{L}$ for a given rule $\rho$. The
definition of $App(\rho)$ depends of the shape of the left-hand side
of $\rho$ on one side and on the expressive power of the considered
logic $\mathcal{L}$. Below, we give a general expression for
$App(\rho)$ for a particular class of left-hand sides and logics
including  $\mathcal{ALCU}$.    

  \begin{proposition}
Let $\mathcal{L}$ be a logic extending $\mathcal{ALCU}$. Let us assume that $\rho$ is a rule whose left-hand side is a tree labeled with $\mathcal{L}$ such that its edges have only one label. $App(\rho)$ can be expressed in $\mathcal{L}$.
  \end{proposition}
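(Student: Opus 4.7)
The plan is to build $App(\rho)$ recursively from the tree structure of the left-hand side $L$, using the universal role $U$ to ``anchor'' the pattern somewhere in the host graph. Let $r$ denote the root of $L$. For every node $n \in N^L$, I would define a concept $\Psi(n) \in \mathcal{L}$ inductively by
$$\Psi(n) \;=\; \bigwedge_{c \in \phin^L(n)} c \;\;\wedge\;\; \bigwedge_{e \in E^L,\, s^L(e) = n} \exists\, \phir^L(e).\, \Psi(t^L(e))$$
so that $\Psi(n)$ describes, in a purely tree-shaped manner, the subtree of $L$ rooted at $n$. The recursion is well-defined because $L$ is a tree, and the resulting concept lies in $\mathcal{L}$ because $\mathcal{ALCU} \subseteq \mathcal{L}$ provides conjunction and existential restrictions, and because each edge $e$ carries a single role label $\phir^L(e)$ so that $\exists \phir^L(e).\,\cdot$ is a well-formed $\mathcal{ALC}$ concept. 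I would then take $App(\rho) \;\equiv\; \exists\, U.\, \Psi(r)$, where $U$ is the universal role provided by $\mathcal{U}$.

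For the equivalence with the existence of a match, I would argue both directions by induction on the depth of $L$. Given a match $h=(h^N,h^E)$, I would show that $h^N(n) \in \Psi(n)^{\mathcal{G}}$ for every $n$: the atomic-concept conjunction at $n$ holds by condition~1 of Definition~\ref{match}, and each existential conjunct $\exists \phir^L(e).\Psi(t^L(e))$ is witnessed by $h^N(t^L(e))$ via the edge $h^E(e)$, thanks to conditions~2--4. Specialising to $n = r$ gives $G \models \exists U.\Psi(r)$. Conversely, given $G \models \exists U.\Psi(r)$, a witness node $n_r$ satisfies $\Psi(r)$; its nested existential conjuncts recursively supply, for each child $n'$ in $L$, a graph node $h^N(n')$ in $\Psi(n')^{\mathcal{G}}$ together with an edge that becomes $h^E$ at that level. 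Collecting these choices yields a pair $(h^N, h^E)$ satisfying all four conditions of Definition~\ref{match}.

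The main subtlety I anticipate is justifying that the two assumptions on $L$ are genuinely needed for the construction to stay inside $\mathcal{ALCU}$. Concept formulas in $\mathcal{ALCU}$ naturally describe tree-shaped patterns, and no construct in this logic forces two existentially witnessed successors to be the same node; if $L$ had a shared descendant reached by two paths from the root, or contained a cycle, one would have to re-identify witnesses, which is precisely what nominals ($\mathcal{O}$) or inverse roles ($\mathcal{I}$) would provide. Similarly, the single-label-per-edge assumption is essential because $\exists r.\Psi$ cannot force a matched edge to simultaneously carry several roles. The role of $\exists U.\,$ is simply to promote the concept $\Psi(r)$, which only talks about an element and its tree-shaped neighbourhood, into a global existential sentence about the graph; the correctness of this step is immediate from the semantics $U^{\mathcal{G}} = N \times N$.
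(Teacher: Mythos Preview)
Your proposal is correct and follows essentially the same approach as the paper: a recursive translation of the tree-shaped left-hand side into a nested $\mathcal{ALC}$ concept, wrapped in an $\exists U$ to assert the existence of a root witness. Your formulation is in fact cleaner than the paper's, since you make the recursion in $\Psi(n)$ explicit; the paper's written definition of $\psi_e(e) = \exists\,\phir^L(e).\,\psi_n(t(e))$ literally only unfolds one level, though the accompanying example (a depth-two tree) makes clear that the intended definition is recursive, exactly as you wrote it.
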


\begin{proof}
Let $L = (N^L, E^L, \phin^L, \phir^L, s^L, t^L)$ be the left-hand side of
$\rho$. Let $r$ be the root of $L$. Let $A = \exists U. \psi_n(r) \wedge \bigwedge_{e \in \mathcal{E}(r)^L}  \psi_e(e)$ where $\mathcal{E}(n) = \{e \in E^L | s^L(e) = n\}, \psi_n(n) = \bigwedge_{c \in \phin^L(n)} c$
and\\ $\psi_e(e) = \exists \phir^L(e). \psi_n(t(e))$.

Let us assume that $G = (N^G, E^G, \phin^G, \phir^G, s^G, t^G)$ is
a graph. 

Let us assume that $G \models A$. Then, $\{x_r \in N^G | \exists x_n \in N^G$ for $n \in N^L$, $\exists \xi_e \in e^G$ for $e \in E^L$ such that $\psi_n(n) \in \phin^G(x_n)$, $n = s^L(e) \Rightarrow x_n = s^G(\xi_e)$, $n = t^L(e) \Rightarrow x_n = t^G(\xi_e)$ and $\phir^L(e) \in \phir^G(\xi_e)\}$ is not empty. Then, let us define $h^N(n) = x_n$  and $h^E(e) = \xi_e$.
\begin{enumerate} 
\item For all $n \in N^L$, for all $c \in \phin^L(n)$, $x_n \models c$ by induction.
\item For all $e \in E^L$, for all $r \in \phir^L(e)$, $\xi_{e}
  \models r$
\item  For all $e \in E^L$, $s^G(\xi_e) = x_{s^L(e)}$
\item  For all $e \in E^L$, $t^G(\xi_e) = x_{t^L(e)}$
\end{enumerate}
$(h^N,h^E)$ is thus a match. Hence, there exist at least one.

Let us now assume that there exists a match $(h^N, h^E)$ from $L$ to
$G$. Then, by definition, the $x_n= h^N(n)$'s (and $\xi_e=h^E(e)$'s) defined previously exist. Additionally, due to the first condition, $x_n$ is a model of
$\psi_n$ and, thanks to the other conditions, $\xi_e$ is a model of
$\psi_e$. Thus $G \models A$.

Thus, $A \Leftrightarrow App(\rho)$.   
\end{proof}

\begin{example}
Let us consider the rule $\rho$ of \figref{fig:NAppFO}. $App(\rho)$ can be expressed in $\mathcal{ALCU}$ as $\exists U. (\exists R. (C \wedge \exists R. (C \vee D)))$. 
  \end{example}

We now discuss the closure under substitution of various Description Logics.

\begin{theorem}
The logics $\mathcal{ALCUO}$, $\mathcal{ALCUIO}$,
$\mathcal{ALCUOS}elf$, $\mathcal{ALCUIOS}elf$, $\mathcal{ALCQUIO}$ and
$\mathcal{ALCQUOIS}elf$\footnote{Description Logic names are such that
each letter represents a (groups of) constructor(s). More information can be
found in \cite{DLHandbook}.}
are closed under substitutions.
\end{theorem}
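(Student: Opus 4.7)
The plan is to proceed by structural induction on concepts, defining for each elementary action $a$ a concept-level translation $C \mapsto C[a]$ in the same logic that is semantically equivalent to the formal substitution. Two constructors carry the entire argument: nominals, which let us express ``being node $i$'' as the concept $\{i\}$, and the universal role $U$, which lets us evaluate an arbitrary concept $C$ at a named node $j$ via $\exists U.(\{j\} \wedge C)$. Together these simulate the first-order formulas exhibited in the first-order closure proof, whose shape is preserved under the translation.

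The boolean cases $(\neg C)[a]$ and $(C \vee D)[a]$ commute with substitution and need no further argument. For atomic concepts and the ``active/propositional'' updates under $add_N, del_N, add_C, del_C, mrg, cl$, one transcribes the first-order clauses directly: \eg $c[add_C(i,c)] \equiv c \vee \{i\}$, $c[del_N(i)] \equiv c \wedge \neg \{i\}$, $c[mrg(i,j)] \equiv \neg \{j\} \wedge (c \vee (\{i\} \wedge \exists U.(\{j\} \wedge c)))$, and similarly for $cl$. The role-modifying actions ($add_E, del_E, i \gg j, mrg, cl$) act only on concepts of the form $\exists R.C$ (or $(<\,n\,R\,C)$); the strategy is to rewrite each such concept as the original concept applied to $C[a]$ plus a finite disjunction of ``correction'' terms guarded by nominals. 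For example,
\[
(\exists R.C)[add_E(e,i,j,R)] \equiv (\exists R.\,C[add_E]) \,\vee\, (\{i\} \wedge \exists U.(\{j\} \wedge C[add_E])),
\]
and
\[
(\exists R.C)[i \gg j] \equiv (\exists R.(C[i\gg j] \wedge \neg \{i\})) \,\vee\, (\exists U.(\{j\} \wedge C[i\gg j]) \wedge \exists R.\{i\}),
\]
with analogous but longer clauses for $mrg$ and each of the five parameter sets of $cl$. Inverse roles $R^-$ are handled symmetrically; $Self$ is needed precisely to expose the self-loop correction terms that $mrg$ and $cl$ create (hence the necessity of $\mathcal{S}elf$ in the mixed logics listed). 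All correction terms stay within the concept and role constructors of the ambient logic, which establishes closure for $\mathcal{ALCUO}$, $\mathcal{ALCUIO}$, $\mathcal{ALCUOS}elf$ and $\mathcal{ALCUIOS}elf$.

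The main obstacle is the counting constructor $(<\,n\,R\,C)$ under $mrg(i,j)$ and $cl(i,j,\vec{\mathcal{L}})$, which is why $\mathcal{ALCQUIO}$ and $\mathcal{ALCQUOIS}elf$ are treated separately. The count of $R$-successors of a node $x$ in $G'$ that satisfy $C[a]$ must be expressed in terms of counts in $G$, and under $mrg$ the count at the surviving node $i$ aggregates successors originally attached to $i$ and to $j$ while identifying the shared target $i\! =\! j$ to avoid double counting. The plan is to guard the count by nominals so that each relevant $x$ falls into one case: if $x \not\in \{i,j\}$ the count is $(<\,n\,R\,(C[mrg] \wedge \neg \{j\}))$ with a $+1$ correction when $R(x,j)$ held; if $x = i$ one sums the $i$- and $j$-successors using the arithmetic identity $|A \cup B| = |A| + |B| - |A \cap B|$, encoded by splitting on whether the target is the nominal $i$. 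An analogous partition handles each of $\rin,\rout,\rlin,\rlout,\rll$ for $cl$, since each parameter produces at most one clone of a given original edge and the resulting $+1$ shifts in the count can be expressed by replacing $n$ with $n \pm k$ for an explicit $k$ determined by the guards. Once each $(<\,n\,R\,C)[a]$ is rewritten in this way, the induction closes and the theorem follows.
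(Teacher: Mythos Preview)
Your overall plan—structural induction plus nominal/universal-role encodings of the first-order correction terms—matches the paper's approach and is fine for the four logics without $\mathcal{Q}$. There is, however, a real gap in your treatment of $(<\,n\,R\,C)[mrg(i,j)]$.

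At the surviving node $i$, the post-merge successor set is $A \cup B$ where $A$ and $B$ are the $R$-successors (satisfying $C'$) of $i$ and of $j$ in $G$. Expressing $|A \cup B| < n$ requires controlling the overlap $A \cap B$, and that overlap consists of targets $m$ that are $R$-successors of \emph{both} $i$ and $j$---not ``the nominal $i$'' as you write. Inside a counting quantifier rooted at $i$, the only way to filter such $m$ is by a target-side concept saying ``$m$ has $j$ as an $R$-predecessor'', i.e.\ $\exists R^-.\{j\}$. The paper's clause is exactly of this shape, using $(<\,k\,R\,(\phi' \wedge \forall R^-.\neg\{j\}))$ inside a disjunction over $k$. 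Your sketch never invokes inverse roles in the counting case, so as written it does not establish closure for $\mathcal{ALCQUIO}$ or $\mathcal{ALCQUOIS}elf$; worse, if your encoding worked without $\mathcal{I}$ it would also prove closure for $\mathcal{ALCQUO}$, which the very next theorem in the paper refutes by a bisimulation argument. The fix is to make the dependence on $\mathcal{I}$ explicit in the merge (and $i \gg j$, for $R^-$) clauses.

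A smaller point: your remark that ``$\mathcal{S}elf$ is needed precisely to expose the self-loop correction terms that $mrg$ and $cl$ create'' is backwards. The logics $\mathcal{ALCUO}$, $\mathcal{ALCUIO}$, $\mathcal{ALCQUIO}$ lack $\mathcal{S}elf$ and are still closed: if the input logic has no $\exists R.\mathit{Self}$ concept, there is nothing to translate, and the translations of the other constructors never introduce $\mathit{Self}$. The $\mathcal{S}elf$ variants are listed simply because, when $\exists R.\mathit{Self}$ \emph{is} present, its substitution stays within the logic.
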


\begin{proof}
  We proved in \cite{BESDL16} that $\mathcal{ALCQUOIS}elf$ is closed
  under substitution for every action but clone and merge. The proof uses a
  rewriting system that replaces formulas with substitutions with
  equivalent formulas without substitutions. It is not possible to
  remove all the substitutions in one step. Some rules are used to move
  the substitutions closer to atomic formulas. 

\begin{figure}
\def\smallscale{0.6}
\begin{center}
\resizebox{5cm}{!}{
\begin{tikzpicture}[->,>=stealth',shorten >=1pt,auto,node distance=2.8cm,
                    semithick,group/.style ={fill=gray!20, node distance=20mm},thickline/.style ={draw, thick, -latex'}]
  \tikzstyle{Cstate}=[circle,fill=white,draw=black,text=black, minimum
  size = 1cm]
  \tikzstyle{Estate}=[circle,fill=none,draw=none,text=black, minimum
  size = 1cm]
  \tikzstyle{NCstate}=[fill=white,draw=black,text=black, minimum
  size = 1cm]
  \tikzstyle{CurrState}=[circle,fill=white,draw=red,text=black, minimum
  size = 1cm]
  \tikzstyle{NCurrState}=[fill=white,draw=red,text=black, minimum
  size = 1cm]

  \node[CurrState] at (0,5) (I1) {\huge $i$};
  \node[NCstate] at (2,5) (J1) {\huge$j$};
  \node[Estate] at (0,5.6) (O1) {};

  \node[Cstate] at (4,5) (I2) {\huge$i$};
  \node[NCstate] at (6,5) (J2) {\huge$j$};
  \node[CurrState] at (5,7) (C2) {};
  
  \node[NCstate] at (-2,0) (I3) {\huge$i$};
  \node[CurrState] at (0,0) (J3) {\huge$j$};
  \node[Estate] at (-2,0.6) (O3) {};

  \node[Cstate] at (2,0) (I4) {\huge$i$};
  \node[NCurrState] at (4,0) (J4) {\huge$j$};
  \node[Estate] at (2,0.6) (O4) {};
  
  \node[Cstate] at (6,0) (I5) {\huge$i$};
  \node[CurrState] at (8,0) (J5) {\huge$j$};
  \node[NCstate] at (7,2) (C5) {};
  
\begin{pgfonlayer}{background}
\node [label=below:{$A$},group, fit=(I1) (J1) (O1)] (a) {};
\node [label=below:{$B$},group, fit=(I2) (J2) (C2)] (b) {};
\node [label=below:{$C_1$},group, fit=(I3) (J3) (O3)] (c) {};
\node [label=below:{$C_2$},group, fit=(I4) (J4) (O4)] (d) {};
\node [label=below:{$C_3$},group, fit=(I5) (J5) (C5)] (e) {};
\end{pgfonlayer}

\path (I1) edge [loop above] node {} (I1)
      (C2) edge node {} (I2)
      (I3) edge [loop above] node {} (I3)
      (I4) edge [loop above] node {} (I4)
      (I5) edge node {} (C5);
      
\path[red,dashed] (I1) edge       node {} (J1)
      (C2) edge node {} (J2)
      (J3) edge       node {} (I3)
      (J4) edge [loop above] node {} (J4)
      (J5) edge node {} (C5);

\end{tikzpicture}
}
\end{center}
\caption{Illustrations of the various ways for a node to satisfy $\exists R. C[cl(i,j,\dots)]$ by gaining a new $R$-neighbor satisfying $C[cl(i,j,\dots)]$. The node where the concept is evaluated is in red; created edges are dashed and red. Squares are nodes that satisfy $C[cl(i,j,\dots)]$. A) $i$ will have a new neighbor $j$ after $[cl(i,j,\dots)]$ if $R \in \rlin$ and it has a self-loop; B) A node that is neither $i$ nor $j$ will have a new neighbor $j$ if $i$ was its neighbor and $ R \in \rin$; C) $j$ will have new neighbours after $[cl(i,j,\dots)]$ if $i$ has a self-loop and $R \in \rlout$ ($C_1$), $i$ has a self-loop and $R \in \rll$ ($C_2$) or if $R \in \rout$ ($C_3$).}
\label{fig:ecl}
\end{figure}
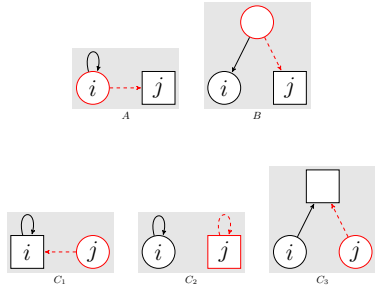

  \begin{itemize}
\item $\top\; \sigma \leadsto \top$
\item $o\; \sigma \leadsto o$
\item $C_0 [add_C(i,C')] \leadsto C_0$
\item $C_0 [del_C(i,C')] \leadsto C_0$
\item $C_0 [add_C(i,C_0)] \leadsto C_0 \vee i$
\item $C_0 [del_C(i,C_0)] \leadsto C_0 \wedge \neg i$
\item $C_0 [add_R(i,j,R)] \leadsto C_0$
\item $C_0 [del_R(i,j,R)] \leadsto C_0$
\item $C_0 [add_N(i)] \leadsto C_0$ if $C_0 \neq Active$
\item $C_0 [del_N(i)] \leadsto C_0 \wedge \neg j$
\item $C_0 [i \gg j] \leadsto C_0$
  \item $C_0[mrg(i,j)] \leadsto \neg j \wedge (C_0 \vee (i \wedge \exists U. (j \wedge C_0)))$ where $C_0$ is an atomic formula different from $Active$
  \item $C_0[cl(i,j,\dots)] \leadsto C_0 \vee (j \wedge \exists U. (i
    \wedge  C_0))$ where $C_0$ is an atomic formula different from
    $Active$
\item $Active [add_N(i)] \leadsto C \vee i$
  \item $Active[mrg(i,j)] \leadsto Active \wedge \neg j$
  \item $Active[cl(i,j,\dots)] \leadsto Active \vee j$
\item $o[\sigma] \leadsto o$
\item $(\neg C)\sigma \leadsto \neg (C\sigma)$
\item $(C \vee D)\sigma \leadsto
  C\sigma \vee
  D\sigma$
\item $\exists R.Self [add_C(i,C_0)] \leadsto \exists
  R.Self$
\item $\exists R.Self [del_C(i,C_0)] \leadsto \exists
  R.Self$
\item $\exists R.Self [add_R(i,j,R')] \leadsto \exists
  R.Self$
\item $\exists R.Self [del_R(i,j,R')] \leadsto \exists
  R.Self$
\item $\exists R.Self [add_R(i,j,R)] \leadsto (\{i\} \wedge
  \{j\}) \vee \exists R.Self$
\item $\exists R.Self [del_R(i,j,R)] \leadsto (\neg \{i\} \vee
 \neg \{j\}) \wedge \exists R.Self$
\item $\exists R.Self [add_N(i)] \leadsto \exists
  R.Self$
\item $\exists R.Self [del_N(i)] \leadsto \exists
  R.Self \wedge \neg \{i\}$
\item $\exists R.Self [i \gg j] \leadsto$ \\$((\{i\} \Leftrightarrow \{j\}) \Rightarrow \exists R.Self) \wedge (\neg \{i\} \wedge \{j\} \Rightarrow \exists R.Self \vee \exists R.\{i\})$
\item $(\exists R.Self)[mrg(i,j)] \leadsto \neg j \wedge (\exists R.Self \vee (\{i\} \wedge (\exists R. \{j\} \vee \exists U. (\{j\} \wedge \exists R.\{i\}) \vee \exists U. (\{j\} \wedge \exists R.Self))))$
\item $(\exists R.Self)[cl(i,j,\dots)] \leadsto \exists R.Self \vee C_S$ where $C_S = \{j\}$ if $R \in \rll$ and $C_S = \bot$ otherwise
\item $(\exists R .\phi)[add_C(i,C_0)] \leadsto \exists R
  .(\phi[add_C(i,C_0)])$
\item $(\exists R .\phi)[del_C(i,C_0)] \leadsto \exists R .(\phi[del_C(i,C_0)])$
\item $(\exists R .\phi)[add_R(i,j,R')] \leadsto \exists R
  .(\phi[add_R(i,j,R')])$
\item $(\exists R .\phi)[del_R(i,j,R')] \leadsto \exists R .(\phi[del_R(i,j,R')])$
\item $(\exists R .\phi)[add_R(i,j,R)] \leadsto (\{i\}
  \wedge$\\ 
\hspace*{-0.7cm}$\exists U. (\{j\}
  \wedge \phi [add_R(i,j,R)])) \vee \exists R. \phi [add_R(i,j,R)]$
\item
$(\exists R .\phi)[del_R(i,j,R)] \leadsto$\\
 \hspace*{-0.7cm}$(\{i\} \Rightarrow \exists R. (\phi [del_R(i,j,R)] \wedge \neg \{j\}))$\\
\hspace*{-0.7cm}$\wedge (\neg \{i\} \Rightarrow \exists R. (\phi
[del_R(i,j,R)]))$
\item $(\exists R .\phi)[add_N(i)] \leadsto \exists R.(\phi[add_N(i)])$\\
\item $(\exists R .\phi)[del_N(i)] \leadsto \neg \{i\} \wedge \exists
  R.(\phi[del_N(i)] \wedge \neg \{i\})$\\
\item $(\exists R .\phi)[i \gg j)] \leadsto (\exists U.(\{i\}
  \wedge \{j\}) \Rightarrow \exists R. \phi [i \gg j])$\\
  \hspace*{5mm}$\wedge (\exists U.(\{i\} \wedge \neg \{j\}) \Rightarrow$\\
\hspace*{1cm}$(\exists R. (\{i\} \wedge \phi[i \gg j]) \wedge \forall R. \neg \{j\} \wedge \exists U.(\{j\} \wedge \neg \phi[i \gg j])
\Rightarrow$\\
\hspace*{1.5cm}$\exists R. (\phi[i \gg j] \wedge \neg \{i\}))$\\
\hspace*{1cm}$\wedge
(\exists R. \{i\} \wedge \forall R. \neg \{j\} \wedge \exists U.(\{j\} \wedge \phi[i \gg j]))$\\
\hspace*{1cm}$\wedge
(\exists R. (\{i\} \wedge \phi[i \gg j]) \wedge \exists R. \{j\}
\Rightarrow$\\
\hspace*{1.5cm}$\exists R. (\phi[i \gg j] \wedge \neg \{i\}))$\\
\hspace*{1cm}$\wedge
((\forall R. \neg \{i\})$\\
\hspace*{1.2cm}$\vee (\exists R. (\{i\} \wedge \neg \phi[i \gg j]) \wedge \exists R. \{j\})$\\
\hspace*{1.2cm}$\vee (\exists R.(\{i\} \wedge \neg \phi[i \gg j]) \wedge \forall R. \neg \{j\} \wedge \exists U.(\{j\} \wedge \neg \phi[i \gg j]))
\Rightarrow$\\
\hspace*{1.5cm}$\exists R. \phi[i \gg j]))$
  \item $(\exists R.\phi)[mrg(i,j)] \leadsto \neg \{j\} \wedge (\exists R. (\neg
    \{j\} \wedge \phi[mrg(i,j)]) \vee \exists R. (\neg
    \{j\} \wedge \phi[mrg(i,j)]) \vee \{i\} \wedge
\exists U. ((\{i\} \vee \{j\}) \wedge \exists R. \phi[mrg(i,j)]))$
  \item $(\exists R.\phi)[cl(i,j,\dots)] \leadsto$\\
  $\exists R. (\phi[cl(i,j,\dots)]) \vee \cin \vee \cout \vee \clin \vee \clout \vee \cll$ where:
    \begin{itemize}
  \item $\cin = \left\{
	\begin{array}{ll}
	  \begin{array}{l}
	  \neg \{i\} \wedge \neg \{j\} \wedge \exists R. \{i\} \wedge\\
	  (\exists U. (\{j\} \wedge \phi[cl(i,j,\dots)]))
	  \end{array} & \mbox{if } R \in \rin\\
          \bot  & \mbox{otherwise }
	\end{array}
        \right.$
  \item $\cout = \left\{
	\begin{array}{ll}
	  \{j\} \wedge (\exists U.(\{i\} \wedge \exists R.( \neg \{i\} \wedge \phi[cl(i,j,\dots)]))) & \mbox{if } R \in \rout\\
          \bot  & \mbox{otherwise }
	\end{array}\right.$
  \item $\clin = \left\{
	\begin{array}{ll}
	   \{i\} \wedge \exists R. \{i\} \wedge \exists U.(\{j\} \wedge \phi[cl(i,j,\dots)])  & \mbox{if } R \in \rlin\\
          \bot  & \mbox{otherwise }
	\end{array}
        \right.$
  \item $\clout = \left\{
	\begin{array}{ll}
	    \{j\} \wedge \exists U. (\{i\} \wedge \exists R.\{i\} \wedge \phi[cl(i,j,\dots)]) & \mbox{if } R \in \rlout\\
          \bot  & \mbox{otherwise }
	\end{array}
        \right.$
  \item $\cll = \left\{
	\begin{array}{ll}
	  \{j\} \wedge \phi[cl(i,j,\dots)] \wedge \exists U. (\{i\} \wedge \exists R.\{i\})  & \mbox{if } R \in \rll\\
          \bot  & \mbox{otherwise }
	\end{array}
        \right.$
    \end{itemize}
\item$(<\; n\; R\; \phi) [add_C(i,C_0)] \leadsto
  (<\; n\; R\; \phi [add_C(i,C_0)]) $
\item$(<\; n\; R\; \phi) [del_C(i,C_0)] \leadsto
  (<\; n\; R\; \phi [del_C(i,C_0)]) $
\item $(<\; n\; R\; \phi) [add_R(i,j,R')] \leadsto (<\;
  n\; R\; \phi [add_R(i,j,R')]) $
\item $(<\; n\; R\; \phi) [del_R(i,j,R')] \leadsto (<\; n\; R\; \phi [del_R(i,j,R')]) $
\item
$(<\; n\; R\; \phi) [add_R(i,j,R)] \leadsto$
\vspace{-0.2cm}\begin{tabbing}
\hspace*{-0.5cm} \= $((\{i\} \wedge \exists U. (\{j\}\; \wedge \phi [add_R(i,j,R)])  \wedge \forall R. \neg \{j\})$ \hspace{0.3cm}\= $\Rightarrow$ \\
  \>  $(<\; (n-1)\; R\; \phi [add_R(i,j,R)]) )$\\
  \hspace*{-0.7cm}$\wedge$ \> $((\neg \{i\} \vee \forall U. (\neg \{j\}
   \vee \neg \phi [add_R(i,j,R)]) \vee \exists R. \{j\})$ \>
   $\Rightarrow$\\
  \> $(<\; n\; R\; \phi [add_R(i,j,R)]))$
\end{tabbing}
\item
$(<\; n\; R\; \phi) [del_R(i,j,R)] \leadsto$
\vspace{-0.2cm}\begin{tabbing}
\hspace*{-0.5cm} 
\= $((\{i\} \wedge \exists U. (\{j\}\; \wedge \phi [del_R(i,j,R)])
 \wedge \exists R. \{j\})$ \hspace{0.8cm}\= $\Rightarrow$ \\
 \>  $(<\; (n+1)\; R\; \phi [del_R(i,j,R)]) )$\\
 \hspace*{-0.7cm}$\wedge$ \> $((\neg \{i\} \vee \forall U. (\neg \{j\}
  \vee \neg \phi [del_R(i,j,R)]) \vee \forall R. \neg \{j\})$ \>
  $\Rightarrow$\\
 \> $(<\; n\; R\; \phi [del_R(i,j,R)]))$
\end{tabbing}
\item $(<\; n\; R\; \phi) [add_N(i)] \leadsto (<\; n\; R\;
  \phi)$
\item $(<\; n\; R\; \phi) [del_N(i)] \leadsto \{i\} \vee (<\; n\; R\;
  (\phi[del_N(i)] \wedge \neg \{i\}))$
\item $(<\; n\; R\; \phi) [i \gg j] \leadsto (\exists
  U.(\{i\} \wedge \{j\}) \Rightarrow (<\; n\; R\; \phi [i
  \gg j]))$\\
$\wedge (\exists U.(\{i\} \wedge \neg \{j\})
  \Rightarrow$\\
\hspace*{5mm}$(\exists R. (\{i\} \wedge \phi[i \gg j]) \wedge \forall R. \neg \{j\} \wedge \exists U.(\{j\} \wedge \neg \phi[i \gg j])
\Rightarrow$\\
\hspace*{1cm}$ (<\; (n + 1)\; R\; \phi[i \gg j]))$\\
\hspace*{5mm}$ \wedge
(\exists R. (\{i\} \wedge \neg \phi[i \gg j]) \wedge \forall R. \neg \{j\} \wedge \exists U.(\{j\} \wedge \phi[i \gg j])
\Rightarrow$\\
\hspace*{1cm}$ (<\; (n - 1)\; R\; \phi[i \gg j]))$\\
\hspace*{5mm}$ \wedge
(\exists R. (\{i\} \wedge \phi[i \gg j]) \wedge \exists R. \{j\}
\Rightarrow$\\
\hspace*{1cm}$ (<\; (n + 1)\; R\; \phi[i \gg j]))$\\
\hspace*{5mm}$\wedge
((\forall R. \neg \{i\})$\\
 \hspace*{1cm} $\vee (\exists R. (\{i\} \wedge \neg \phi[i \gg
 j]) \wedge \exists R. \{j\})$\\
\hspace*{1cm} $\vee (\exists R. (\{i\} \wedge \phi[i \gg j])
\wedge \forall R. \neg \{j\} \wedge \exists U.(\{j\} \wedge \phi[i
\gg j]))$\\
\hspace*{1cm}$\vee (\exists R.(\{i\} \wedge \neg \phi[i \gg j]) \wedge \forall R. \neg \{j\} \wedge \exists U.(\{j\} \wedge \neg \phi[i \gg j]))
\Rightarrow $\\
\hspace*{1.5cm}$(<\; n\; R\; \phi[i \gg j])))$
\item $(<\; n\; R^-\; \phi) [i \gg j] \leadsto (\{i\} \wedge
  \neg \{j\}) \vee$\\
 \hspace*{5mm}$(\neg \{i\} \wedge \{j\} \Rightarrow$\\
\hspace*{1cm}$\bigsqcup_{k \in
   [0,n]}(<\; k\; R^-\; \phi [i \gg j]) \wedge$\\
\hspace*{1.5cm}$\exists U. (\{i\}
 \wedge (<\; (n-k)\; R^-\; (\phi [i \gg j] \wedge \neg \exists
 R^-.\{j\}))))$\\
\hspace*{5mm}$\vee
((\{i\} \Leftrightarrow \{j\}) \Rightarrow (<\; n\; R^-\; \phi [i \gg
j]))$
    \item $(<\; n\; R\; \phi)[mrg(i,j)] \leadsto \{j\} \vee \\(\{i\}
      \wedge \bigvee_{k = 1}^{n} (< \; k\; R\; (\phi[mrg(i,j)] \wedge
      \forall R^-. \neg \{j\})) \wedge \exists U. (\{j\} \wedge (<\;
      (n - k)\; R\; \phi[mrg(i,j)] ))) \vee (\neg\{i\} \wedge
      \neg\{j\} \wedge \\
(\exists R.(\{j\} \wedge \neg \phi[mrg(i,j)]) \wedge \forall
R.\neg\{i\} \wedge \exists U.(\{i\} \wedge \phi[mrg(i,j)]) \wedge (<\;
n - 1\; R\; \phi[mrg(i,j)])) \vee\\
(\exists R.(\{i\} \wedge \neg \phi[mrg(i,j)]) \wedge \forall
R.\neg\{j\} \wedge \exists U.(\{j\} \wedge \phi[mrg(i,j)]) \wedge (<\;
n - 1\; R\; \phi[mrg(i,j)])) \vee\\
(\exists R.(\{i\} \wedge \phi[mrg(i,j)]) \wedge \exists R.(\{j\} \wedge \phi[mrg(i,j)]) \wedge (<\;
n + 1\; R\; \phi[mrg(i,j)])) \vee\\
 ((\forall R.(\neg \{j\} \vee \phi[mrg(i,j)]) \vee \exists
R.(\{i\} \wedge \neg \phi[mrg(i,j)])) \wedge (\forall R.(\neg \{i\} \vee \phi[mrg(i,j)]) \vee \exists
R.(\{j\} \wedge \neg \phi[mrg(i,j)])) \wedge (\forall R.(\neg \{i\}
\vee \neg\phi[mrg(i,j)]) \vee \forall R.(\neg \{j\} \vee
\neg\phi[mrg(i,j)])) \wedge (<\; n\; R\; \phi[mrg(i,j)])))$
    \item $(<\; n\; R\; \phi)[cl(i,j,\dots)] \leadsto (\{i\} \Rightarrow C_i) \wedge (\{j\} \Rightarrow C_j) \wedge (\neg \{i\} \wedge \neg \{j\} \Rightarrow C_o)$ where:
    \begin{itemize}
    \item $C_i = (<\; n\; R\; \phi[cl(i,j,\dots)])$ if $R \not\in \rlin$, and
    \item $C_i = (\exists R. \{i\} \wedge \exists U. (\{j\} \wedge \phi[cl(i,j,\dots)]) \Rightarrow\\$
      \hspace*{1cm}$(<\; n-1\; R\; \phi[cl(i,j,\dots)])) \wedge$\\
      $(\forall R. \neg \{i\} \vee \exists U. (\{j\} \wedge \neg \phi[cl(i,j,\dots)]) \Rightarrow\\$
      \hspace*{1cm}$(<\; n\; R\; \phi[cl(i,j,\dots)]))$ if $R \in \rlin$
    \item $C_j = \top$ if $R \not\in \rout$ and either:
      \begin{itemize}
      \item $R \not\in \rlout \cup \rll$, or
      \item $R \not\in \rlout \cap \rll$ and $n > 1$, or
      \item $n > 2$
    \end{itemize}, and
    \item $C_j = (\exists U. (\{i\} \wedge \exists R. \{i\} \wedge \phi[cl(i,j,\dots)]) \Rightarrow$\\
      \hspace*{1cm}$\bot) \wedge$\\
      $(\exists U. (\{i\} \wedge (\forall R. \neg \{i\} \vee \neg \phi[cl(i,j,\dots)])) \Rightarrow$\\
      \hspace*{1cm}$\top)$ if $R \not\in \rout \cup \rll$ and $R \in \rlout$ and $n = 1$, and
    \item $C_j = (\exists U. (\{i\} \wedge \exists R. \{i\}) \wedge \phi[cl(i,j,\dots)] \Rightarrow$\\
      \hspace*{1cm}$\bot) \wedge$\\
      $(\exists U. (\{i\} \wedge \forall R. \neg \{i\}) \vee \neg \phi[cl(i,j,\dots)])) \Rightarrow$\\
      \hspace*{1cm}$\top)$ if $R \not\in \rout \cup \rlout$ and $R \in \rll$ and $n = 1$, and
    \item $C_j = (\exists U. (\{i\} \wedge \exists R. \{i\} \wedge \phi[cl(i,j,\dots)]) \wedge \phi[cl(i,j,\dots)] \Rightarrow$\\
      \hspace*{1cm}$\bot) \wedge$\\
      $(\exists U. (\{i\} \wedge (\forall R. \neg \{i\} \vee \neg \phi[cl(i,j,\dots)])) \vee \neg \phi[cl(i,j,\dots)] \Rightarrow$\\
      \hspace*{1cm}$\top)$ if $R \not\in \rout$ and $R \in \rlout \cap \rll$ and $n = 2$, and
    \item $C_j = (\exists U. (\{i\} \wedge \exists R. \{i\}) \wedge (\phi[cl(i,j,\dots)]) \vee \exists U. (\{i\} \wedge \phi[cl(i,j,\dots)]) \Rightarrow$\\
      \hspace*{1cm}$\bot) \wedge$\\
      $(\exists U. (\{i\} \wedge \forall R. \neg \{i\}) \vee (\neg \phi[cl(i,j,\dots)] \wedge \exists U. (\{i\} \wedge \neg \phi[cl(i,j,\dots)])) \Rightarrow$\\
      \hspace*{1cm}$\top)$ if $R \not\in \rout$ and $R \in \rlout \cap \rll$ and $n = 1$, and
    \item $C_j = (\exists U. (\{i\} \wedge \exists R. \{i\} \wedge \phi[cl(i,j,\dots)]) \Rightarrow$\\
      \hspace*{1cm}$\exists U. (\{i\} \wedge (<\; n-1\; R\; (\neg i \wedge \phi[cl(i,j,\dots)])))) \wedge$\\
      $(\exists U. (\{i\} \wedge (\forall R. \neg \{i\} \vee \neg \phi[cl(i,j,\dots)])) \Rightarrow$\\
      \hspace*{1cm}$\exists U. (\{i\} \wedge (<\; n\; R\; (\neg \{i\} \wedge \phi[cl(i,j,\dots)]))))$ if $R \in \rout \cup \rlout$ and $R \not\in \rll$, and
    \item $C_j = (\exists U. (\{i\} \wedge \exists R. \{i\}) \wedge \phi[cl(i,j,\dots)] \Rightarrow$\\
      \hspace*{1cm}$\exists U. (\{i\} \wedge (<\; n-1\; R\; (\neg \{i\} \wedge \phi[cl(i,j,\dots)])))) \wedge$\\
      $(\exists U. (\{i\} \wedge \forall R. \neg \{i\}) \vee \neg \phi[cl(i,j,\dots)] \Rightarrow$\\
      \hspace*{1cm}$\exists U. (\{i\} \wedge (<\; n\; R\; (\neg \{i\} \wedge \phi[cl(i,j,\dots)]))))$ if $R \in \rout \cup \rll$ and $R \not\in \rlout$, and
    \item $C_j = (\exists U. (\{i\} \wedge \exists R. \{i\} \wedge \phi[cl(i,j,\dots)]) \wedge \phi[cl(i,j,\dots)] \Rightarrow$\\
      \hspace*{1cm}$\exists U. (\{i\} \wedge (<\; n-2\; R\; (\neg \{i\} \wedge \phi[cl(i,j,\dots)])))) \wedge$\\
      $(\exists U. (\{i\} \wedge \exists R. \{i\} \wedge \neg \phi[cl(i,j,\dots)]) \wedge \phi[cl(i,j,\dots)] \Rightarrow$\\
      \hspace*{1cm}$\exists U. (\{i\} \wedge (<\; n-1\; R\; (\neg \{i\} \wedge \phi[cl(i,j,\dots)])))) \wedge$\\
      $(\exists U. (i \wedge \exists R. \{i\} \wedge \phi[cl(i,j,\dots)]) \wedge \neg \phi[cl(i,j,\dots)] \Rightarrow$\\
      \hspace*{1cm}$\exists U. (\{i\} \wedge (<\; n-1\; R\; (\neg \{i\} \wedge \phi[cl(i,j,\dots)])))) \wedge$\\
      $(\exists U. (\{i\} \wedge \forall R. \neg \{i\}) \wedge \neg \phi[cl(i,j,\dots)]) \vee \exists U. (\{i\} \wedge \neg \phi[cl(i,j,\dots)]) \Rightarrow$\\
      \hspace*{1cm}$\exists U. (\{i\} \wedge (<\; n\; R\; (\neg \{i\} \wedge \phi[cl(i,j,\dots)]))))$ if $R \in \rout \cap \rlout \cap \rll$      
    \item $C_o = (<\; n\; R\; \phi[cl(i,j,\dots)])$ if $R \not\in \rin$, and
    \item $C_o = (\exists R. \{i\} \wedge \exists U. (\{j\} \wedge \phi[cl(i,j,\dots)]) \Rightarrow\\$
      \hspace*{1cm}$(<\; n-1\; R\; \phi[cl(i,j,\dots)])) \wedge$\\
      $(\forall R. \neg \{i\} \vee \exists U. (\{j\} \wedge \neg \phi[cl(i,j,\dots)]) \Rightarrow\\$
      \hspace*{1cm}$(<\; n\; R\; \phi[cl(i,j,\dots)]))$ if $R \in \rin$  
    \end{itemize}
\end{itemize}

 We gave an illustration of the various possible cases for $(\exists R.C)[cl(i,j,\dots)]$ in \figref{fig:ecl}. As illustrated by the equivalence given, there are 2 ways for a node to satisfy $(\exists R.C)[cl(i,j,\dots)]$: either it already had such a neighbor before cloning or it gained it during cloning. The 5 possible ways for the second scenario to happen are given in \figref{fig:ecl}. 
 Using this picture, one can also see the various cases of $(<\; n\;
 R\; C)[cl(i,j,\dots)]$. As it is quite complex and depends on $\rin$,
 $\rout$, $\rlin$, $\rlout$, $\rll$ and $n$, we do not report the
 exact equivalence. We give an idea of what it is, though: assuming
 $j$ will be labeled with $C$, we remark that in case A) $i$ needs $n
 - 1$ neighbors that will be labeled with $C$ and it needs $n$
 otherwise, and, in case B), the same can be said for other nodes. $j$
 is more problematic. If $R \not\in \rout$, it will have at most 2
 neighbors, if $R \in \rout$ it will have as many as $i$ plus,
 possibly, $i$ and $j$.

We give an illustration of the counting quantifiers in the case of $mrg(i,j)$ in \figref{fig:CQmrg}. $j$ always satisfies $(<\; n\; R\; C)[mrg(i,j)]$ as it has no neighbors after merging. $i$ has its neighbors plus those of $j$ that were not already its neighbors. All other nodes can either gain a new one $i$, lose one $j$ or both.

\begin{figure}
\def\smallscale{0.6}
\begin{center}
\resizebox{6cm}{!}{
\begin{tikzpicture}[->,>=stealth',shorten >=1pt,auto,node distance=2.8cm,
                    semithick,group/.style ={fill=gray!20, node distance=20mm},thickline/.style ={draw, thick, -latex'}]
  \tikzstyle{Cstate}=[circle,fill=white,draw=black,text=black, minimum
  size = 1cm]
    \tikzstyle{NCstate}=[fill=white,draw=black,text=black, minimum
  size = 1cm]
  \tikzstyle{Astate}=[circle,fill=none,draw=none,text=black, minimum
  size = 1cm]
  \tikzstyle{CurrState}=[circle,fill=white,draw=red,text=black, minimum
  size = 1cm]
  
  \node[Cstate] at (0,10) (I1) {\huge $i$};
  \node[CurrState] at (2,10) (J1) {\huge$j$};
  \node[Astate] at (0.5,11) (O1) {};

  \node[CurrState] at (4,10) (I2) {\huge$i$};
  \node[Cstate] at (6,10) (J2) {\huge$j$};
  \node[NCstate] at (4,12) (C2) {};
  \node[NCstate] at (6,12) (D2) {};
 
  \node[NCstate] at (-2,5) (I3) {\huge$i$};
  \node[Cstate] at (0,5) (J3) {\huge$j$};
  \node[CurrState] at (-1,7) (C3) {};

  \node[NCstate] at (2,5) (I4) {\huge$i$};
  \node[NCstate] at (4,5) (J4) {\huge$j$};
  \node[CurrState] at (3,7) (C4) {};
  
  \node[Cstate] at (6,5) (I5) {\huge$i$};
  \node[NCstate] at (8,5) (J5) {\huge$j$};
  \node[CurrState] at (7,7) (C5) {};
  
  \node[Cstate] at (-2,0) (I6) {\huge$i$};
  \node[Cstate] at (0,0) (J6) {\huge$j$};
  \node[CurrState] at (-1,2) (C6) {};

  \node[Cstate] at (2,0) (I7) {\huge$i$};
  \node[Cstate] at (4,0) (J7) {\huge$j$};
  \node[CurrState] at (3,2) (C7) {};
  
  \node[NCstate] at (6,0) (I8) {\huge$i$};
  \node[NCstate] at (8,0) (J8) {\huge$j$};
  \node[CurrState] at (7,2) (C8) {};
  
\begin{pgfonlayer}{background}
\node [label=below:{$A$},group, fit=(I1) (J1) (O1)] (a) {};
\node [label=below:{$B$},group, fit=(I2) (J2) (C2) (D2)] (b) {};
\node [label=below:{$C_1$},group, fit=(I3) (J3) (C3)] (c) {};
\node [label=below:{$C_2$},group, fit=(I4) (J4) (C4)] (d) {};
\node [label=below:{$C_3$},group, fit=(I5) (J5) (C5)] (e) {};
\node [label=below:{$C_4$},group, fit=(I6) (J6) (C6)] (f) {};
\node [label=below:{$C_5$},group, fit=(I7) (J7) (C7)] (g) {};
\node [label=below:{$C_6$},group, fit=(I8) (J8) (C8)] (h) {};
\end{pgfonlayer}

\path (I2) edge node {} (C2)
      (J2) edge node {} (C2)
      (J2) edge node {} (D2)
      (C3) edge node {} (J3)
      (C4) edge  node {} (I4)
      (C4) edge  node {} (J4)
      (C5) edge  node {} (J5)
      (C7) edge  node {} (J7)
      (C8) edge  node {} (J8);

\end{tikzpicture}
}
\end{center}
\caption{Illustrations of the various ways for a node to satisfy $(<\; n\; R C)[mrg(i,j)]$ when the merging action affects the number of neighbors of a node. The node where the concept is evaluated is in red, the nodes that will be labeled with $C$ are squares. A) $j$ has no remaining neighbor, it thus satisfies $(<\; n\; R C)[mrg(i,j)]$. B) $i$ will have as neighbors all its neighbors plus those of $j$. It is important to count each one only once. C) If the node is neither $i$ nor $j$, it will gain a new neighbor that will be labeled with $C$ - $i$ - if $i$ will be labeled with $C$, it is not yet a neighbor and $j$ is a neighbor that would not be labeled with $C$ ($C_1$); on the other hand, it will lose a neighbor that will be labeled with $C$ - $j$ - if $j$ is a neighbor that will be labeled with $C$ and either $i$ is also a neighbor that will be labeled with $C$ ($C_2$) or $i$ will not be labeled with $C$ ($C_3$); otherwise, the number of neighbors that will be labeled with $C$ stays the same either because there is no new neighbor ($C_4$), because neither $i$ nor $j$ will be labeled with $C$ ($C_5$) or because it loses one neighbor that will be labeled with $C$ - $j$ - and gains one - $i$ ($C_6$).}
\label{fig:CQmrg}
\end{figure}
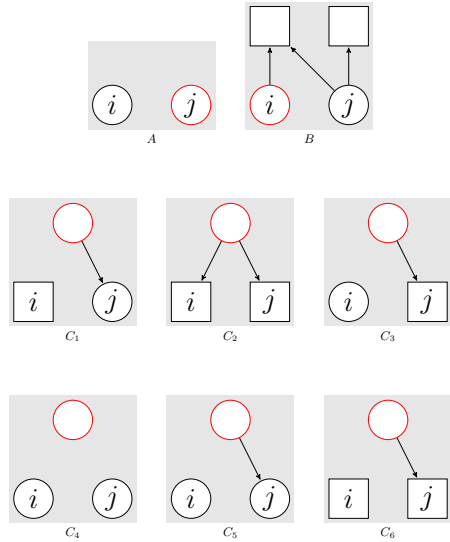

Let us now prove than the two sides of each rules are indeed
equivalent.

\begin{description}
  \item[{$\top\sigma$:}] By definition, $\top$ is always satisfied
  \item[{$o\sigma$:}] The interpretation of $o$ is never modified
  \item[{$C_0[add_C(i,C')]$:}] The interpretation of $C_0$ does not depend on the
interpretation $C'$.
  \item[{$C_0[del_C(i,C')]$:}] The interpretation of $C_0$ does not depend on the
interpretation $C'$.
  \item[{$C_0[add_C(i,C_0)]$:}] The interpretation of $C_0$ becomes
    $C_0^\mathcal{I} \cup i^\mathcal{I} = (C_0 \vee i)^\mathcal{I}$.
  \item[{$C_0[del_C(i,C_0)]$:}] The interpretation of $C_0$ becomes
    $C_0^\mathcal{I} \backslash i^\mathcal{I} = (C_0 \wedge \neg
    i)^\mathcal{I}$.
  \item[{$C_0[add_R(i,j,R)]$:}] The interpretation of $C_0$ does not depend on the
interpretation $R$.
  \item[{$C_0[del_R(i,j,R)]$:}] The interpretation of $C_0$ does not depend on the
interpretation $R$.
  \item[{$C_0[add_N(i)]$:}] The interpretation of $C_0$ does not depend on the
interpretation $R$.
  \item[{$C_0[del_N(i)]$:}] The interpretation of $C_0$ becomes
    $C_0^\mathcal{I} \backslash i^\mathcal{I} = (C_0 \wedge \neg
    i)^\mathcal{I}$.
  \item[{$C_0[i \gg j]$:}] The interpretation of $C_0$ does not depend on the
interpretation of any role.
  \item[{$C_0[mrg(i,j)]$:}] The interpretation of $C_0$  becomes
    $C_0^\mathcal{I} \cup \{n | n = i^\mathcal{I} \wedge j \in C_0^\mathcal{I}\} \backslash j^\mathcal{I} = (\neg
    j \wedge (C_0 \vee (i \wedge \exists U.(j \wedge C_0))))^\mathcal{I}$.
  \item[{$C_0[cl(i,j, \dots)]$:}] The interpretation of $C_0$  becomes
    $C_0^\mathcal{I} \cup \{n | n = j^\mathcal{I} \wedge i \in C_0^\mathcal{I}\} \backslash j^\mathcal{I} = (C_0 \vee (j \wedge \exists U.(i \wedge C_0)))^\mathcal{I}$.
  \item[{$Active[add_N(i)]$:}] The interpretation of $Active$ becomes
    $N^G \cup i^\mathcal{I} = (Active \vee i)^\mathcal{I}$
  \item[{$Active[mrg(i,j)]$:}] The interpretation of $Active$ becomes
    $N^G \backslash j^\mathcal{I} = (Active \wedge \neg j)^\mathcal{I}$
  \item[{$Active[cl(i,j)]$:}] The interpretation of $Active$ becomes
    $N^G \cup j^\mathcal{I} = (Active \vee j)^\mathcal{I}$ 
  \item[{$o\sigma$:}] The interpretation of $o$ is never modified
  \item[{$(\neg C)\sigma$:}] As $(\neg C)^\mathcal{I'} =
    \neg (C^\mathcal{I'})$, $(\neg C)^\mathcal{I'} =
    \neg (C\sigma)^\mathcal{I}$
  \item[{$(C \vee D)\sigma$:}] As $(C \vee D)^\mathcal{I'} =
    C^\mathcal{I'} \vee D^\mathcal{I'}$, $(C \vee D)^\mathcal{I'} =
    (C\sigma \vee D\sigma)^\mathcal{I}$
  \item[{$\exists R. Self[add_C(i,C_0)]$:}] The interpretation of $R$
    does not depend on the interpretation of $C_0$.
  \item[{$\exists R. Self[del_C(i,C_0)]$:}] The interpretation of $R$
    does not depend on the interpretation of $C_0$.
  \item[{$\exists R. Self[add_R(i,j,R')]$:}] The interpretation of $R$
    does not depend on the interpretation of $R'$.
  \item[{$\exists R. Self[del_R(i,j,R')]$:}] The interpretation of $R$
    does not depend on the interpretation of $R'$.
  \item[{$\exists R. Self[add_R(i,j,R)]$:}] The interpretation of $R$
    becomes $R^\mathcal{I} \cup i^\mathcal{I} \times
    j^\mathcal{I}$. Thus $(\exists R. Self)^\mathcal{I'} = \{ n \in
    \Delta | \exists e \in R^\mathcal{I}. s(e) = n$ and $t(e) =n\}
    \cup \{n \in \Delta | n = i^\mathcal{I}$ and $n = j^\mathcal{I}\}$
      that is $(\exists R. Self)^\mathcal{I'} = (\exists
      R. Self \vee (i \wedge j))^\mathcal{I} $
  \item[{$\exists R. Self[del_R(i,j,R)]$:}] The interpretation of $R$
    becomes $R^\mathcal{I} \backslash i^\mathcal{I} \times
    j^\mathcal{I}$. Thus $(\exists R. Self)^\mathcal{I'} = \{ n \in
    \Delta | (n,n) \in R^\mathcal{I}\}
    \backslash \{n \in \Delta | n = i^\mathcal{I}$ and $n = j^\mathcal{I}\}$
      that is $(\exists R. Self)^\mathcal{I'} = (\exists
      R. Self \wedge (\neg i \vee \neg j))^\mathcal{I} $
  \item[{$\exists R. Self[add_N(i)]$:}] The interpretation of $R$
    is not modified.
  \item[{$\exists R. Self[del_N(i)]$:}] The interpretation of $R$
    becomes $R^\mathcal{I} \backslash \{e | s^\mathcal{I}(e) = i^\mathcal{I}$ or
    $t^\mathcal{I}(e) = i^\mathcal{I}\}$. Thus $(\exists R. Self)^\mathcal{I'} = (\exists
      R. Self \wedge \neg i)^\mathcal{I} $
  \item[{$\exists R. Self[i \gg j]$:}] Let us assume that $(n,n) \in
    R^\mathcal{I'}$. Then, either:
\begin{itemize}
\item  $n \neq i^\mathcal{I}$ and $n \neq j^\mathcal{I}$ and thus
  $(n,n) \in R^\mathcal{I}$
\item or $n = i^\mathcal{I} = j^\mathcal{I}$ and thus
  $(n,n) \in R^\mathcal{I}$
\item or $n = j^\mathcal{I} \neq i^\mathcal{I}$ and thus either
  $(j^\mathcal{I},j^\mathcal{I}) \in R^\mathcal{I}$ or
  $(j^\mathcal{I},i^\mathcal{I}) \in R^\mathcal{I}$
\item or $n = i^\mathcal{I} \neq j^\mathcal{I}$ which is impossible as
  $R^\mathcal{I} \cap \{(n,i^\mathcal{I})\} = \emptyset$
\end{itemize}
Thus $(\exists R.Self[i \gg j])^\mathcal{I} = (((\{i\} \Leftrightarrow \{j\}) \Rightarrow \exists R.Self) \wedge (\neg \{i\} \wedge \{j\} \Rightarrow \exists R.Self \vee \exists R.\{i\}))^\mathcal{I}$
  \item[{$\exists R. Self[mrg(i,j)]$:}] Let us assume that $(n,n) \in
    R^\mathcal{I'}$. Then, either:
\begin{itemize}
\item  $n = j^\mathcal{I}$ which is impossible
\item or $n = i^\mathcal{I}$ and thus one of
  $(i^\mathcal{I},i^\mathcal{I})$, $(i^\mathcal{I},j^\mathcal{I})$,
  $(j^\mathcal{I}, i^\mathcal{I})$ or $(j^\mathcal{I},j^\mathcal{I})
  \in R^\mathcal{I}$
\item or $n \neq j^\mathcal{I}$ and $n \neq i^\mathcal{I}$ and thus
  $(n,n) \in R^\mathcal{I}$
\end{itemize}
Thus $(\exists R.Self[mrg(i,j)])^\mathcal{I} = (\neg \{j\} \wedge (\exists
R.Self \vee (\{i\} \wedge (\exists R.\{j\} \vee \exists U.(\{j\}
\wedge \exists R.\{i\}) \vee \exists U. (\{j\} \wedge \exists R.Self)))))^\mathcal{I}$
  \item[{$\exists R. Self[cl(i,j,\dots)]$:}] Let us assume that $(n,n) \in
    R^\mathcal{I'}$. Then, either:
\begin{itemize}
\item  $n = j^\mathcal{I}$ and $R \in \rll$
\item or $n \neq j^\mathcal{I}$ and thus $(n,n) \in R^\mathcal{I}$
\end{itemize}
Thus $(\exists R.Self[mrg(i,j)])^\mathcal{I} = (\exists R. Self \vee
C_S)^\mathcal{I}$ where $C_S = \{j\}$ if $R \in \rll$ and $\bot$ otherwise.
\item[{$(\exists R. \phi)[add_C(i,C_0)]$:}] As the valuation of $R$ is not
  modified by the substitution, $((\exists
  R. \phi)[add_C(i,C_0)]))^\mathcal{I} = (\exists
  R. (\phi[add_C(i,C_0)]))^\mathcal{I}$.
\item[{$(\exists R. \phi)[del_C(i,C_0)]$:}] As the valuation of $R$ is not
  modified by the substitution, $((\exists
  R. \phi)[del_C(i,C_0)]))^\mathcal{I} = (\exists
  R. (\phi[del_C(i,C_0)]))^\mathcal{I}$.
\item[{$(\exists R. \phi)[add_R(i,j,R')]$:}] As the valuation of $R$ is not
  modified by the substitution, $((\exists
  R. \phi)[add_R(i,j,R')]))^\mathcal{I} = (\exists
  R. (\phi[add_R(i,j,R')]))^\mathcal{I}$.
\item[{$(\exists R. \phi)[del_R(i,j,R')]$:}] As the valuation of $R$ is not
  modified by the substitution, $((\exists
  R. \phi)[del_R(i,j,R')]))^\mathcal{I} = (\exists R. (\phi[del_R(i,j,R')]))^\mathcal{I}$.
\item[{$(\exists R. \phi)[add_R(i,j,R)]$:}] As the valuation of $R$
  becomes $R^\mathcal{I} \cup \{(i^\mathcal{I},j^\mathcal{I})\}$, $((\exists
  R. \phi)[add_R(i,j,R)]))^\mathcal{I} = (\exists
  R. (\phi[add_R(i,j,R')]) \vee (\{i\} \wedge \exists U.(\{j\} \wedge \phi[add_R(i,j,R)])))^\mathcal{I}$.
\item[{$(\exists R. \phi)[del_R(i,j,R)]$:}] As the valuation of $R$
  becomes $R^\mathcal{I} \backslash \{(i^\mathcal{I},j^\mathcal{I})\}$, $((\exists
  R. \phi)[del_R(i,j,R)]))^\mathcal{I} = (\exists
  R. (\phi[del_R(i,j,R')]) \wedge (\neg \{i\} \vee \exists R.(\neg
  \{j\} \wedge \phi[del_R(i,j,R)])))^\mathcal{I}$.
\item[{$(\exists R. \phi)[add_N(i)]$:}] As the valuation of $R$ is not
  modified by the substitution, $((\exists
  R. \phi)[add_N(i)]))^\mathcal{I} = (\exists
  R. (\phi[add_N(i)]))^\mathcal{I}$.
\item[{$(\exists R. \phi)[del_N(i)]$:}] As the valuation of $R$
  becomes $R^\mathcal{I} \backslash \{(n,n') | n = i^\mathcal{I}$ or
  $ n' = i^\mathcal{I}\}$, $((\exists
  R. \phi)[del_N(i)]))^\mathcal{I} = (\neg \{i\} \wedge \exists
  R. (\neg{i} \wedge \phi[del_R(i,j,R')]))^\mathcal{I}$.
  \item[{$\exists R. \phi[i \gg j]$:}] Let us assume that there exists
    $(n,n') \in
    R^\mathcal{I'} and n' \in (\phi[i \gg j])^\mathcal{I'}$. Then, either:
\begin{itemize}
\item  $i^\mathcal{I} = j^\mathcal{I}$ and thus
  $(n,n') \in R^\mathcal{I}$
\item or $i^\mathcal{I} \neq j^\mathcal{I}$ and then either:
\begin{itemize}
\item $(n,i^\mathcal{I}) \not\in R^\mathcal{I}$ and thus $(n,n') \in
  R^\mathcal{I}$
\item or $(n,i^\mathcal{I}) \in R^\mathcal{I}$ and $j^\mathcal{I}
  \not\in  (\phi[i \gg j])^\mathcal{I'}$ and thus $n' \neq
  j^\mathcal{I}$ and thus $(n,n') \in R^\mathcal{I}$
\item or $(n,i^\mathcal{I}) \in R^\mathcal{I}$ and $j^\mathcal{I}
  \in  (\phi[i \gg j])^\mathcal{I'}$ and thus $j^\mathcal{I}$ is a witness.
\end{itemize}
\end{itemize}
Thus the rule is correct.
  \item[{$\exists R. \phi[mrg(i,j)]$:}] Let us assume that there exists
    $(n,n') \in R^\mathcal{I'}$ with $n' \in \phi^\mathcal{I'}$ then
    $n \neq j^\mathcal{I}$ and $n' \neq j^\mathcal{I}$. If $n \neq
    i^\mathcal{I}$ and $n' \neq i^\mathcal{I}$, then $(n,n') \in
    R^\mathcal{I}$ thus $n \in (\neg \{j\} \wedge \exists R. (\neg
    \{j\} \wedge \phi[mrg(i,j)]))^\mathcal{I}$. 
If $n' = i^\mathcal{I}$ then
    either $(n,j^\mathcal{I}) \in R^\mathcal{I}$ or $(n,i^\mathcal{I})
    \in R^\mathcal{I}$ and $\{i^\mathcal{I},j^\mathcal{I}\} \cap
    \phi^\mathcal{I'} \neq emptyset$ thus $n \in (\neg \{j\} \wedge
    \exists R. (\{i\} \vee \{j\}) \wedge \exists U. ((\{i\} \vee \{j\})
    \wedge \phi[mrg(i,j)]))^\mathcal{I}$.
If $n = i^\mathcal{I}$ then either $(i^\mathcal{I},n')$ or
$(j^\mathcal{I},n') \in R^\mathcal{I}$ and thus $n \in (\neg \{j\}
\wedge \{i\} \wedge
\exists U. ((\{i\} \vee \{j\}) \wedge \exists R. \phi[mrg(i,j)]))^\mathcal{I}$.
Thus $(\exists R. \phi)[i \gg j]^\mathcal{I} = (\neg \{j\} \wedge (\exists R. (\neg
    \{j\} \wedge \phi[mrg(i,j)]) \vee \exists R. (\neg
    \{j\} \wedge \phi[mrg(i,j)]) \vee \{i\} \wedge
\exists U. ((\{i\} \vee \{j\}) \wedge \exists R. \phi[mrg(i,j)])))^\mathcal{I}$.
  \item[{$\exists R. \phi[cl(i,j,...)]$:}] Let us assume that there
    exists $e' \in E^\mathcal{I'}$ such that $s^\mathcal{I'}(e') =n$,
    $t^\mathcal{I'}(e') = n'$ and $(n,n') \in R^\mathcal{I'}$ then either:
    \begin{itemize}
    \item $e' \in \ein$ and then $n = s^{I}(in(e'))$ and $n' =
      j^\mathcal{I}$, that is there exists $e$ such that
      $(s^\mathcal{I}(e),t^\mathcal{I}(e)) \in R^\mathcal{I}$ and
      $s^\mathcal{I}(e) = n \neq i^\mathcal{I}$ and $t^\mathcal{I}(e)
      = i^\mathcal{I}$. Thus $n \in (\neg \{i\} \wedge \exists R.\{i\} \wedge \exists U.(\{j\} \wedge \phi[cl(i,j,\dots)]))^\mathcal{I}$.
    \item $e' \in \eout$ and then $n = j^\mathcal{I}$  and $n' =
      t^\mathcal{I}(out(e'))$, that is there exists $e$ such that
      $(s^\mathcal{I}(e),t^\mathcal{I}(e)) \in R^\mathcal{I}$ and $s^G(e) =
      i^\mathcal{I}$ and $t^\mathcal{I}(e) = n'$. Thus $n \in (\{j\} \wedge \exists U.(\{i\} \wedge \exists R.(\neg \{i\} \wedge \phi[cl(i,j,\dots)])))^\mathcal{I}$.
    \item $e' \in \elin$ and then $n = i^\mathcal{I}$, $n' =
      j^\mathcal{I}$ and there exists $e$ such that
      $(s^\mathcal{I}(e),t^\mathcal{I}(e)) \in R^\mathcal{I}$ and
      $s^\mathcal{I}(e) = i^\mathcal{I}$ and $t^\mathcal{I}(e) =
      i^\mathcal{I}$. Thus $n \in (\{i\} \wedge \exists R.\{i\} \wedge \exists U.(\{j\} \wedge \phi[cl(i,j,\dots)]))^\mathcal{I}$.
    \item $e' \in \elout$ and then $n = j$, $n' = i$ and there exists
      $e$ such that $(s^\mathcal{I}(e),t^\mathcal{I}(e)) \in
      R^\mathcal{I}$ and $s^\mathcal{I}(e) = i^\mathcal{I}$ and
      $t^\mathcal{I}(e) = i^\mathcal{I}$. Thus $n \in (\{j\} \wedge \exists R.\{i\} \wedge \exists U.(\{i\} \wedge \exists R. \{i\} \wedge \phi[cl(i,j,\dots)]))^\mathcal{I}$.
    \item $e' \in \elol$ and then $n = j$, $n' = j$ and there exists $e$ such that $(s^\mathcal{I}(e),t^\mathcal{I}(e)) \in
      R^\mathcal{I}$ and $s^\mathcal{I}(e) = i^\mathcal{I}$ and
      $t^\mathcal{I}(e) = i^\mathcal{I}$. Thus $n \in (\{j\} \wedge \phi[cl(i,j,\dots)] \wedge \exists U.(\{i\} \wedge \exists R.\{i\}))^\mathcal{I}$.
    \item otherwise, $e' \in E^G$ and thus $n \in (\exists R.(\phi[cl(i,j,\dots)]))^\mathcal{I}$.
    \end{itemize}
\item[{$(<\; n\; R\; \phi)[add_C(i,C_0)]$:}] As the valuation of $R$ is not
  modified by the substitution, $((<\; n\; R\; \phi)[add_C(i,C_0)])^\mathcal{I} = (<\; n\; R\; (\phi[add_C(i,C_0)]))^\mathcal{I}$.
\item[{$(<\; n\; R\; \phi)[del_C(i,C_0)]$:}] As the valuation of $R$ is not
  modified by the substitution, $((<\; n\; R\;
  \phi)[del_C(i,C_0)])^\mathcal{I} = (<\; n\; R\;
  (\phi[del_C(i,C_0)]))^\mathcal{I}$.
\item[{$(<\; n\; R\; \phi)[add_R(i,j,R')]$:}] As the valuation of $R$ is not
  modified by the substitution, $((<\; n\; R\;
  \phi)[add_R(i,j,R')])^\mathcal{I} = (<\; n\; R\;
  (\phi[add_R(i,j,R')]))^\mathcal{I}$.
\item[{$(<\; n\; R\; \phi)[del_R(i,j,R')]$:}] As the valuation of $R$ is not
  modified by the substitution, $((<\; n\; R\;
  \phi)[del_R(i,j,R')])^\mathcal{I} = (<\; n\; R\;
  (\phi[del_R(i,j,R')]))^\mathcal{I}$.
\item[{$(<\; n\; R\; \phi)[add_R(i,j,R)]$:}] Let us assume that $n \in
  (<\; n\; R\; \phi)[add_R(i,j,R)]^\mathcal{I'}$ then either:
\begin{itemize}
\item $n = i^\mathcal{I}$, $j^\mathcal{I} \in
  \phi[add_R(i,j,R)]^\mathcal{I}$ and $(n,j^\mathcal{I}) \not \in
  R^\mathcal{I}$ and thus, $n \in (<\; (n - 1)\; R\;
  \phi[add_R(i,j,R)])^\mathcal{I}$
\item otherwise, the number of neighbors of $n$ is left unchanged and
  thus $n \in (<\; n\; R\;
  \phi[add_R(i,j,R)])^\mathcal{I}$
\end{itemize}
Thus the rule is correct.
\item[{$(<\; n\; R\; \phi)[del_R(i,j,R)]$:}] Let us assume that $n \in
  (<\; n\; R\; \phi)[del_R(i,j,R)]^\mathcal{I'}$ then either:
\begin{itemize}
\item $n = i^\mathcal{I}$, $j^\mathcal{I} \in
  \phi[del_R(i,j,R)]^\mathcal{I}$ and $(n,j^\mathcal{I}) \in
  R^\mathcal{I}$ and thus, $n \in (<\; (n + 1)\; R\;
  \phi[del_R(i,j,R)])^\mathcal{I}$
\item otherwise, the number of neighbors of $n$ is left unchanged and
  thus $n \in (<\; n\; R\;
  \phi[del_R(i,jR)])^\mathcal{I}$
\end{itemize}
Thus the rule is correct.
\item[{$(<\; n\; R\; \phi)[add_N(i)]$:}] As the valuation of $R$ is not
  modified by the substitution, $((<\; n\; R\;
  \phi)[add_N(i)])^\mathcal{I} = (<\; n\; R\;
  (\phi[add_N(i)]))^\mathcal{I}$.
\item[{$(<\; n\; R\; \phi)[del_R(i,j,R')]$:}] As the valuation of $R$
  becomes $R^\mathcal{I} \backslash \{(n,n') | n = i^\mathcal{I}$ or $
  n' = i^\mathcal{I}\}$,$((<\; n\; R\;
  \phi)[del_R(i,j,R')])^\mathcal{I} = (\{i\} \vee (<\; n\; R\;
  (\phi[del_R(i,j,R')] \wedge \neg \{i\}))^\mathcal{I}$.
\item[{$(<\; n\; R\; \phi)[i \gg j)]$:}] Let us consider whether the
  node $m \in (<\; n\; R\; \phi)^\mathcal{I'}$
  gains or loses neighbors in $\phi^\mathcal{I'}$:
\begin{itemize}
\item if $i^\mathcal{I} = j^\mathcal{I}$, the transformation didn't
  change anything.
\item otherwise:
\begin{itemize}
\item if $(m,i^\mathcal{I}) \in R^\mathcal{I}$, $(m,j^\mathcal{I})
  \not\in R^\mathcal{I}$ and $j^\mathcal{I} \not\in
  \phi^\mathcal{I'}$, $m$ lost one and thus had less than $n + 1$,
\item if $(m,i^\mathcal{I}) \in R^\mathcal{I}$ and $(m,j^\mathcal{I})
  \in R^\mathcal{I}$, $m$ lost one and thus had less than $n + 1$,
\item if $(m,i^\mathcal{I}) \not\in R^\mathcal{I}$, $(m,j^\mathcal{I})
  \in R^\mathcal{I}$ and $j^\mathcal{I} \in
  \phi^\mathcal{I'}$, $m$ gained one and thus had less than $n - 1$,
\item otherwise, its number of neighbors in $\phi^\mathcal{I'}$ does
  not change.
\end{itemize}
\end{itemize}
Thus the rule is correct.
\item[{$(<\; n\; R^-\; \phi)[i \gg j)]$:}] Let us consider whether the
  node $m \in (<\; n\; R\; \phi)^\mathcal{I'}$
  gains or loses neighbors in $\phi^\mathcal{I'}$:
\begin{itemize}
\item if $m = i^\mathcal{I} = j^\mathcal{I}$, the transformation didn't
  change anything.
\item if $m = i^\mathcal{I}$, $m$ lost all its neighbors and thus has less than $n$,
\item if $m = j^\mathcal{I}$, $m$ gained all of $i^\mathcal{I}$'s
  neighbors and thus the sum of its neighbors and those of
  $i^\mathcal{I}$ had to be less than $n$,
\item otherwise, its number of neighbors in $\phi^\mathcal{I'}$ does
  not change.
\end{itemize}
Thus the rule is correct.
  \item[{$(<\; n\; R\; C)[mrg(i,j)]$}] Let us consider whether the
  node $m \in (<\; n\; R\; \phi)^\mathcal{I'}$
  gains or loses neighbors in $\phi^\mathcal{I'}$:
\begin{itemize}
\item If $m = j^\mathcal{I}$, it has no remaining neighbor,
\item If $m = i^\mathcal{I}$, it gains all neighbors of
  $j^\mathcal{I}$ and thus the sum of its neighbors and those of
  $j^\mathcal{I}$ had to be less than $n$,
\item otherwise:
\begin{itemize}
\item if $(m,i^\mathcal{I}) \in R^\mathcal{I}$, $(m,j^\mathcal{I})
  \not\in R^\mathcal{I}$, $i^\mathcal{I} \not\in
  \phi^\mathcal{I'}$ and $j^\mathcal{I} \in
  \phi^\mathcal{I'}$, $m$ gained one and thus had less than $n - 1$,
\item if $(m,j^\mathcal{I}) \in R^\mathcal{I}$, $(m,i^\mathcal{I})
  \not\in R^\mathcal{I}$, $j^\mathcal{I} \not\in
  \phi^\mathcal{I'}$ and $i^\mathcal{I} \in
  \phi^\mathcal{I'}$, $m$ gained one and thus had less than $n - 1$,
\item \item if $(m,i^\mathcal{I}) \in R^\mathcal{I}$, $(m,j^\mathcal{I})
  \in R^\mathcal{I}$, $j^\mathcal{I} \in
  \phi^\mathcal{I'}$ and $i^\mathcal{I} \in
  \phi^\mathcal{I'}$, $m$ losed one and thus had less than $n + 1$,
\item otherwise, they stay the same.
\end{itemize}
\end{itemize}
The rule is thus correct.
\item [{$(<\; n\; R\; C)[cl(i,j,\dots)]$}] Let us consider whether the
  node $m \in (<\; n\; R\; \phi)^\mathcal{I'}$
  gains or loses neighbors in $\phi^\mathcal{I'}$:
 \begin{itemize}
 \item If $m = i^\mathcal{I}$, it can only gain one possible neighbor ($j$) if
      $(m,m) \in R^\mathcal{I}$, $j^\mathcal{I} \in \phi^\mathcal{I'}$
      and $R \in \rlin$. In that case, it needs have one less
      neighbor.
 \item If $m = j^\mathcal{I}$, then either:
  \begin{itemize}
  \item $R \not \in \rout$ and thus the only possible neighbors are $i$
  and $j$. Then $m \in (<\; n\; R\; \phi)^\mathcal{I'}$ if and only if
  one of the following is true:
   \begin{itemize}
   \item $n > 2$,
   \item $n > 1$ and $R \not\in \rlout \cap \rll$,
   \item $R \not \in \rlout \cup \rll$,
   \item $n = 1$ and $R \in \rlout \backslash \rll$ and
  $(i^\mathcal{I},i^\mathcal{I}) \not \in R^\mathcal{I}$ or
  $i^\mathcal{I} \not \in \phi^\mathcal{I'}$,
   \item $n = 1$ and $R \in \rll \backslash \rlout$ and
  $(i^\mathcal{I},i^\mathcal{I}) \not \in R^\mathcal{I}$ or
  $j^\mathcal{I} \not \in \phi^\mathcal{I'}$,
   \item $n = 1$ and $R \in \rlout \cap \rll$ and
  $(i^\mathcal{I},i^\mathcal{I}) \not \in R^\mathcal{I}$ or both
  $\{i^\mathcal{I},j^\mathcal{I}\} \cap \phi^\mathcal{I'} =
  \emptyset$,
   \item $n = 2$ and $R \in \rlout \cap \rll$ and either
  $(i^\mathcal{I},i^\mathcal{I}) \not \in R^\mathcal{I}$, $i^\mathcal{I} \not \in \phi^\mathcal{I'}$ or
  $j^\mathcal{I} \not \in \phi^\mathcal{I'}$
   \end{itemize}
  \item $R \in \rout$ and either:
   \begin{itemize}
   \item $R \in \rlout \backslash \rll$. If:
    \begin{itemize}
     \item $i^\mathcal{I}
  \in \phi^\mathcal{I'}$ and $(i^\mathcal{I},i^\mathcal{I}) \in
  R^\mathcal{I}$, $m$ has as many neighbors in $\phi^\mathcal{I'}$
  different from $i^\mathcal{I}$ as $i^\mathcal{I}$ plus
  $i^\mathcal{I}$ and thus $i^\mathcal{I}$ needs haveless than $n-1$,
    \item otherwise, $i^\mathcal{I}$ needs have less than $n$,
    \end{itemize}
   \item $R \in \rll \backslash \rlout$. If:
    \begin{itemize}
     \item $j^\mathcal{I}
  \in \phi^\mathcal{I'}$ and $(i^\mathcal{I},i^\mathcal{I}) \in
  R^\mathcal{I}$, $m$ has as many neighbors in $\phi^\mathcal{I'}$
  different from $i^\mathcal{I}$ as $i^\mathcal{I}$ plus
  itself and thus $i^\mathcal{I}$ needs have less than $n-1$,
    \item otherwise, $i^\mathcal{I}$ needs have less than $n$,
    \end{itemize}
   \item $R \in \rlout \cap \rll$ If:
    \begin{itemize}
    \item $i^\mathcal{I}
  \in \phi^\mathcal{I'}$, $j^\mathcal{I}
  \in \phi^\mathcal{I'}$ and $(i^\mathcal{I},i^\mathcal{I}) \in
  R^\mathcal{I}$, $m$ has as many neighbors in $\phi^\mathcal{I'}$
  different from $i^\mathcal{I}$ as $i^\mathcal{I}$ plus
  $i^\mathcal{I}$ and itself and thus $i^\mathcal{I}$ needs have less
  than $n -2$,
    \item $i^\mathcal{I}
  \not\in \phi^\mathcal{I'}$, $j^\mathcal{I}
  \in \phi^\mathcal{I'}$ and $(i^\mathcal{I},i^\mathcal{I}) \in
  R^\mathcal{I}$, $m$ has as many neighbors in $\phi^\mathcal{I'}$
  different from $i^\mathcal{I}$ as $i^\mathcal{I}$ plus
 itself and thus $i^\mathcal{I}$ needs have less
  than $n -1$,
    \item $i^\mathcal{I}
  \in \phi^\mathcal{I'}$, $j^\mathcal{I}
  \not\in \phi^\mathcal{I'}$ and $(i^\mathcal{I},i^\mathcal{I}) \in
  R^\mathcal{I}$, $m$ has as many neighbors in $\phi^\mathcal{I'}$
  different from $i^\mathcal{I}$ as $i^\mathcal{I}$ plus
 $i^\mathcal{I}$  and thus $i^\mathcal{I}$ needs have less
  than $n -1$,
    \item either $i^\mathcal{I}
  \not\in \phi^\mathcal{I'}$ and $j^\mathcal{I}
  \not\in \phi^\mathcal{I'}$ or $(i^\mathcal{I},i^\mathcal{I}) \not\in
  R^\mathcal{I}$, $m$ has as many neighbors in $\phi^\mathcal{I'}$
  different from $i^\mathcal{I}$ as $i^\mathcal{I}$ and thus $i^\mathcal{I}$ needs have less
  than $n$
    \end{itemize}
   \end{itemize}
  \end{itemize}
 \item If $m \neq i^\mathcal{I}$ and $m \neq j^\mathcal{I}$, it can
   only gain one neighbor ($j^\mathcal{I}$). It only gains it if
   $(m,i^\mathcal{I}) \in R^\mathcal{I}$, $j^\mathcal{I} \in
   \phi^\mathcal{I'}$ and $R \in \rin$. In such a case, $m$ needed
   have less than $n - 1$. Otherwise, it needed have less than $n$.
\end{itemize}
Thus the rule is correct.
\end{description}
\end{proof}

One can observe that the equivalent formula given for $(<\; n\; R\;
C)[mrg(i,j)]$ uses $R^-$, namely in $(<\; k\;R\;(C[mrg(i,j) \wedge
\forall R^-.\neg j))$. If the logic contains counting quantifiers
($\mathcal{Q}$) but not inverse roles ($\mathcal{I}$), we did not
prove that the logic is closed under substitutions. 

\begin{theorem}\label{th:notclosed}
The logics $\mathcal{ALCQUO}$ and $\mathcal{ALCQUOS}elf$ are not
closed under substitutions.
\end{theorem}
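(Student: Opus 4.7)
The plan is to prove non-closure by a standard bisimulation argument: I would exhibit two graphs $G_1, G_2$ that satisfy exactly the same $\mathcal{ALCQUO}$ sentences, but for which there is an $\mathcal{ALCQUO}$ formula $\phi$ such that $G_1[mrg(i,j)] \models \phi$ and $G_2[mrg(i,j)] \not\models \phi$. This rules out any formula $\psi \in \mathcal{ALCQUO}$ being equivalent to $\phi[mrg(i,j)]$, since such a $\psi$ would have to distinguish $G_1$ from $G_2$. The intuition is drawn directly from the previous proof: the equivalent rewriting of $(<\; n\; R\; C)[mrg(i,j)]$ at node $i$ required $\forall R^-. \neg \{j\}$ to avoid double-counting $R$-successors shared by $i$ and $j$, and without inverse roles this deduplication cannot be carried out.

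Concretely, I would take both graphs over the same set of nodes $\{i, j, a, b\}$ (with $i, j$ as nominals) and a single atomic role $R$, setting the edges of $G_1$ to be $(i, a, R)$ and $(j, b, R)$, while the edges of $G_2$ are $(i, a, R)$ and $(j, a, R)$. For the distinguishing formula I would pick $\phi \equiv \exists U. (\{i\} \wedge \neg(<\; 2\; R\; \top))$. After the merge, $i$ in $G_1[mrg(i,j)]$ has two distinct $R$-successors ($a$ and $b$), while $i$ in $G_2[mrg(i,j)]$ has only one distinct $R$-successor (the node $a$, reached via two parallel edges), so $\phi$ cleanly separates the two post-merge graphs.

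The core technical step is to verify that $G_1$ and $G_2$ are indistinguishable in $\mathcal{ALCQUO}$ by exhibiting a counting bisimulation $Z$ respecting nominals. I would set $i_{G_1}\, Z\, i_{G_2}$ and $j_{G_1}\, Z\, j_{G_2}$, and relate each of $a_{G_1}, b_{G_1}$ with each of $a_{G_2}, b_{G_2}$ (all four are leaves with no concepts). For every pair in $Z$ and the role $R$, I would exhibit a bijection between their $R$-successors: $i$ and $j$ have exactly one successor in each graph and the target pairs all lie in $Z$, and the four leaves have empty successor sets. This also covers every node in each graph and respects nominals, so $G_1$ and $G_2$ satisfy exactly the same $\mathcal{ALCQUO}$ sentences. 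Because the two constructed graphs contain no self-loops, the operator $\exists R.\text{Self}$ evaluates to $\bot$ everywhere on both sides, so exactly the same example and argument extend verbatim to $\mathcal{ALCQUOS}elf$, giving both non-closure results at once. The main obstacle will be stating the counting bisimulation precisely (rather than ordinary modal bisimulation) and checking the bijection condition on successor multisets, since this is where the absence of inverse roles is genuinely exploited.
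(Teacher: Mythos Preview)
Your proposal is correct and follows essentially the same approach as the paper: exhibit two $\mathcal{ALCQUO(S}elf)$-bisimilar structures in which $i$ and $j$ each have a single $R$-successor, but where these successors coincide in one structure and are distinct in the other, so that after $mrg(i,j)$ the merged node has a different number of $R$-successors, making $(\geq 2\; R\; \top)[mrg(i,j)]$ (equivalently the paper's $(\geq 2\; R\; C)[mrg(i,j)]$) inexpressible. The only cosmetic differences are that the paper uses a concept $C$ on the successor nodes and drops the spare leaf in the second structure, whereas you use $\top$ and keep an isolated $b$ in $G_2$; neither change affects the argument.
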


In order to prove this theorem, we use the notion of bisimulation \cite{BESDL16}.

\begin{definition}[$\mathcal{ALCQUO}$-Bisimulation]\label{def:bisimulation}
Given a signature (\textbf{C, R, I}) and two interpretations $\mathcal{I}$ and $\mathcal{J}$, a non-empty
binary relation $Z \subseteq (\Delta^\mathcal{I} \times
\Delta^\mathcal{J})$ is an $\mathcal{ALCQUO}$- bisimulation if it satisfies: \\
\begin{tabular}{llll}
\textbf{($\mathcal{ALC}_1$)} & \multicolumn{3}{l}{$d_1 Z d_2 \implies \forall A \in \textbf{C}, (d_1 \in
  A^\mathcal{I} \Leftrightarrow d_2 \in A^\mathcal{J})$}\\
\textbf{($\mathcal{ALC}_2$)} & \multicolumn{3}{l}{$\forall R \in
                               \textbf{R}, ( d_1 Z d_2 \wedge
                               (d_1,e_1) \in R^\mathcal{I} \implies
                               \exists e_2. (d_2,e_2) \in R^\mathcal{J} \wedge e_1 Z e_2)$}\\
\textbf{($\mathcal{ALC}_3$)} & \multicolumn{3}{l}{$\forall R \in
                               \textbf{R}, ( d_1 Z d_2 \wedge
                               (d_2,e_2) \in R^\mathcal{J} \implies
                               \exists e_1. (d_1,e_1) \in  R^\mathcal{I} \wedge e_1 Z e_2)$}\\
\textbf{($\mathcal{ALC}_4$)} & $\forall i \in \textbf{I}, i^\mathcal{I} Z i^\mathcal{J}$
& \textbf{($\mathcal{O}$)} & $\forall i \in \textbf{I}, d_1 Z d_2 \implies (d_1 = i^\mathcal{I}
  \Leftrightarrow d_2 = i^\mathcal{J})$\\
\textbf{($\mathcal{U}_1$)} & $\forall d \in \Delta^\mathcal{I}, \exists d' \in
  \Delta^\mathcal{J}. d Z d'$
&
                                                                       \textbf{($\mathcal{U}_2$)}
                                                                       &$\forall
                                                                       d' \in \Delta^\mathcal{J}, \exists d \in
  \Delta^\mathcal{J}. d Z d'$\\
\textbf{($\mathcal{Q}$)} & \multicolumn{3}{l}{\hspace*{-4mm}$\forall R \in \textbf{R}, (d_1 Z d_2
  \implies$Z is a bijection between the $R$-successors of $d_1$}\\
&  and those of $d_2$)\\
\end{tabular}  
\end{definition}

\begin{figure}
\def\smallscale{0.8}
\begin{center}
\resizebox{7cm}{!}{
\begin{tikzpicture}[->,>=stealth',shorten >=1pt,auto,node distance=2.8cm,
                    semithick,group/.style ={fill=gray!20, node distance=20mm},thickline/.style ={draw, thick, -latex'}]
  \tikzstyle{Cstate}=[circle,fill=red,draw=none,text=white]
  \tikzstyle{NCstate}=[circle,fill=white,draw=none,text=black]
  \tikzstyle{CurrState}=[circle,fill=white,draw=black,text=black]

  \node[NCstate] (I)                     {$d_1:i$};
  \node[NCstate] (J)  [below of=I]{$d_2:j$};
  \node[Cstate] (K)  [right of=I]{$d_3$};
  \node[Cstate] (L)  [right of=J]{$d_4$};
  \node[Cstate] (K')  [below right of=K]{$d'_3$};
  \node[NCstate] (I') [above right of=K']       {$d'_1:i$};
  \node[NCstate] (J')  [below of=I'] {$d'_2:j$};

\begin{pgfonlayer}{background}
\node [group, fit=(I) (J) (K) (L)] (b) {};
\node [group, fit=(I') (J') (K')] (b') {};
\end{pgfonlayer}

  \path (I)  edge node {$R$} (K)
        (J)  edge node {$R$} (L)
  (I') edge node {$R$} (K')
  (J') edge node {$R$} (K');
\path[dashed,latex-latex] (I) edge [bend left]             node {} (I')
              (J) edge [bend right]          node {} (J')
              (K) edge [bend right] node {} (K')
(L) edge [bend left] node {} (K');

\end{tikzpicture}
}
\end{center}
\caption[Bisimilar model and counter-model using $(\geq\; 2\; R\; C)$]{$d_1$ is a model of $(\geq\; 2\; R\; C)[mrg(i,j)]$ and $(\geq\; 2\; R\; C)[i \gg j]$. $d'_1$ is not. Nodes satisfying $C$ are drawn in red.}\label{fig:bisimulation1}
\end{figure}
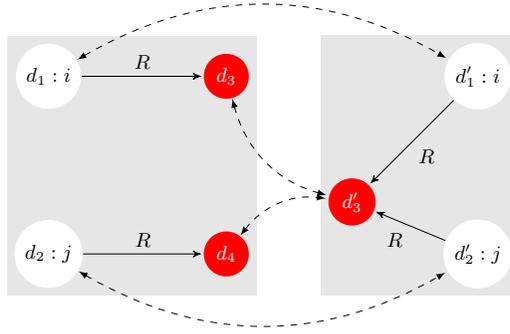

\begin{theorem}\cite{Divroodi2015465}
Let $(\Delta_1, \cdot^{\mathcal{I}_1})$ and
$(\Delta_2, \cdot^{\mathcal{I}_2})$ be two interpretations and $Z$ a $\mathcal{ALCQUO}$-bisimulation relation between
$\mathcal{I}_1$ and $\mathcal{I}_2$. Let $C$ be an $\mathcal{ALCQUO}$ concept, then for all $x_1 \in \Delta_1$ and $x_2 \in \Delta_2$, $x_1 Z x_2 \Rightarrow(x_1 \in C^{\mathcal{I}_1}  \Leftrightarrow x_2 \in C^{\mathcal{I}_2})$.
\end{theorem}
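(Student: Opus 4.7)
The plan is to proceed by structural induction on the $\mathcal{ALCQUO}$ concept $C$, showing that for every pair $x_1 \in \Delta_1$, $x_2 \in \Delta_2$ with $x_1 Z x_2$ we have $x_1 \in C^{\mathcal{I}_1} \Leftrightarrow x_2 \in C^{\mathcal{I}_2}$. The base cases are settled directly by the respective clauses of Definition~\ref{def:bisimulation}: for $C = \top$ the claim is trivial, for an atomic concept $C = A \in \mathbf{C}$ it is exactly $(\mathcal{ALC}_1)$, and for a nominal $C = \{o\}$ with $o \in \mathbf{I}$ it is $(\mathcal{O})$ (together with $(\mathcal{ALC}_4)$ to locate $o^{\mathcal{I}_1}$ and $o^{\mathcal{I}_2}$). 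The Boolean cases $\neg D$ and $D \vee D'$ follow immediately by applying the induction hypothesis componentwise.

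For the existential role case $C = \exists R.\, D$, I would reason symmetrically. Assume $x_1 \in (\exists R.D)^{\mathcal{I}_1}$, witnessed by some $y_1$ with $(x_1,y_1)\in R^{\mathcal{I}_1}$ and $y_1 \in D^{\mathcal{I}_1}$. By clause $(\mathcal{ALC}_2)$ there is some $y_2$ with $(x_2,y_2)\in R^{\mathcal{I}_2}$ and $y_1 Z y_2$; the induction hypothesis on $D$ yields $y_2 \in D^{\mathcal{I}_2}$, so $x_2 \in (\exists R.D)^{\mathcal{I}_2}$. The converse direction uses $(\mathcal{ALC}_3)$ in the same manner. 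For the universal role case $C = \exists U.\, D$, the statement $x_1 \in (\exists U.D)^{\mathcal{I}_1}$ is equivalent to $D^{\mathcal{I}_1}$ being non-empty: picking a witness $y_1 \in D^{\mathcal{I}_1}$, clause $(\mathcal{U}_1)$ gives a $Z$-related $y_2$, and the induction hypothesis transports membership to $D^{\mathcal{I}_2}$, so $x_2 \in (\exists U.D)^{\mathcal{I}_2}$; the other direction uses $(\mathcal{U}_2)$ symmetrically.

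The main obstacle is the counting case $C = (<\, n\, R\, D)$, where the plain back-and-forth clauses $(\mathcal{ALC}_2)$ and $(\mathcal{ALC}_3)$ are not enough because they do not control cardinalities. Here I would invoke clause $(\mathcal{Q})$, which states that whenever $x_1 Z x_2$ the relation $Z$ restricts to a bijection between the $R$-successors of $x_1$ in $\mathcal{I}_1$ and the $R$-successors of $x_2$ in $\mathcal{I}_2$. Call this bijection $f$. Applying the induction hypothesis to $D$ along $f$, for every $R$-successor $y$ of $x_1$ we have $y \in D^{\mathcal{I}_1} \Leftrightarrow f(y) \in D^{\mathcal{I}_2}$. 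Hence $f$ restricts further to a bijection between $\{y \mid (x_1,y)\in R^{\mathcal{I}_1} \wedge y\in D^{\mathcal{I}_1}\}$ and $\{y' \mid (x_2,y')\in R^{\mathcal{I}_2} \wedge y'\in D^{\mathcal{I}_2}\}$, so these two sets have the same cardinality, and one has fewer than $n$ elements iff the other does. This settles the counting case and, with it, the whole induction.
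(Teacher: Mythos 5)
The paper gives no proof of this theorem at all --- it is imported from \cite{Divroodi2015465} --- so there is no internal argument to compare against; your structural induction is the standard proof of such preservation results, and every case is handled correctly, including the key step in the counting case where clause $(\mathcal{Q})$ supplies a bijection $f$ between $R$-successors that, by the induction hypothesis, restricts to a bijection between the $D$-satisfying successors on each side. One boundary condition is worth making explicit: your counting case (and the theorem itself) only goes through when the role $R$ in $(<\; n\; R\; D)$ is drawn from $\textbf{R}$, since clause $(\mathcal{Q})$ does not cover the universal role $U$; if the paper's grammar were read literally as permitting $(<\; n\; U\; D)$, the statement would be false --- in Figure~\ref{fig:bisimulation1} one has $d_1 Z d'_1$, yet $d_1$ satisfies $(\geq\; 2\; U\; C)$ (both $d_3$ and $d_4$ satisfy $C$) while $d'_1$ does not (only $d'_3$ does). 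With counting restricted to ordinary roles, as in the cited source, your proof is complete and is essentially the canonical argument.
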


The notion of bisimulation can be extended to $\mathcal{ALCQUOS}elf$
as follows.

\begin{definition}[$\mathcal{ALCQUOS}elf$-Bisimulation]\label{def:sbisimulation}
Given a signature (\textbf{C, R, I}) and two interpretations $\mathcal{I}$ and $\mathcal{J}$, a non-empty
binary relation $\mathcal{Z} \subseteq (\Delta^\mathcal{I} \times
\Delta^\mathcal{J})$ is an $\mathcal{ALCQUOS}elf$- bisimulation if it is an $\mathcal{ALCQUO}$-bisimulation and it satisfies: 
\begin{description}
\item[($\mathcal{S}elf$)] $\forall R \in \textbf{R}, d_1 Z d_2 \implies 
  ((d_1,d_1) \in R^\mathcal{I} \Leftrightarrow (d_2,d_2) \in R^\mathcal{J})$. 
\end{description}
\end{definition}

\begin{proof}[\theoremref{th:notclosed}]
We use the interpretations from \figref{fig:bisimulation1} to show
that some concept with substitution is not in $\mathcal{ALCQUO}$ or
$\mathcal{ALCQUOS}elf$. Let us start by proving that two
interpretations are indeed bisimilar:
\begin{description}
\item[$(\mathcal{ALC}_1)$] :
\begin{itemize}
\item $d_1 Z d'_1 \rightarrow (d_1 \in C^\mathcal{I} \Leftrightarrow
  d'_1 \in C^\mathcal{J}$ \checkmark 
\item $d_2 Z d'_2 \rightarrow (d_2 \in C^\mathcal{I} \Leftrightarrow
  d'_2 \in C^\mathcal{J}$ \checkmark
\item $d_3 Z d'_3 \rightarrow (d_3 \in C^\mathcal{I} \Leftrightarrow
  d'_3 \in C^\mathcal{J}$ \checkmark
\item $d_4 Z d'_3 \rightarrow (d_4 \in C^\mathcal{I} \Leftrightarrow
  d'_3 \in C^\mathcal{J}$ \checkmark
\end{itemize}  
\item[$(\mathcal{ALC}_2)$] :
\begin{itemize}
\item $d_1 Z d'_1 \wedge (d_1,d_3) \in R^\mathcal{I} \rightarrow
  (d'_1,d'_3) \in R^\mathcal{J} \wedge d_3 Z d'_3$ \checkmark
\item $d_2 Z d'_2 \wedge (d_2,d_4) \in R^\mathcal{I} \rightarrow
  (d'_2,d'_4) \in R^\mathcal{J} \wedge d_2 Z d'_2$ \checkmark  
\end{itemize} 
\item[$(\mathcal{ALC}_3)$] :
\begin{itemize}
\item $d_1 Z d'_1 \wedge (d'_1,d'_3) \in R^\mathcal{J} \rightarrow
  (d_1,d_3) \in R^\mathcal{I} \wedge d_3 Z d'_3$ \checkmark
\item $d_2 Z d'_2 \wedge (d'_2,d'_4) \in R^\mathcal{J} \rightarrow
  (d'_2,d'_4) \in R^\mathcal{I} \wedge d_2 Z d'_2$ \checkmark  
\end{itemize}
\item[$(\mathcal{ALC}_4)$] :
\begin{itemize}
\item $i^\mathcal{I} Z i^\mathcal{J}$ \checkmark
\item $j^\mathcal{I} Z j^\mathcal{J}$ \checkmark  
\end{itemize}
\item[$(\mathcal{O})$] :
\begin{itemize}
\item $d_1 Z d'_1 \rightarrow (d_1 = i^\mathcal{I} \Leftrightarrow
  d'_1 = i^\mathcal{J}$ \checkmark
\item $d_2 Z d'_2 \rightarrow (d_2 = i^\mathcal{I} \Leftrightarrow
  d'_2 = i^\mathcal{J}$ \checkmark
\item $d_3 Z d'_3 \rightarrow (d_3 = i^\mathcal{I} \Leftrightarrow
  d'_3 = i^\mathcal{J}$ \checkmark
\item $d_4 Z d'_3 \rightarrow (d_4 = i^\mathcal{I} \Leftrightarrow
  d'_3 = i^\mathcal{J}$ \checkmark
\item $d_1 Z d'_1 \rightarrow (d_1 = j^\mathcal{I} \Leftrightarrow
  d'_1 = j^\mathcal{J}$ \checkmark
\item $d_2 Z d'_2 \rightarrow (d_2 = j^\mathcal{I} \Leftrightarrow
  d'_2 = j^\mathcal{J}$ \checkmark
\item $d_3 Z d'_3 \rightarrow (d_3 = j^\mathcal{I} \Leftrightarrow
  d'_3 = j^\mathcal{J}$ \checkmark
\item $d_4 Z d'_3 \rightarrow (d_4 = j^\mathcal{I} \Leftrightarrow
  d'_3 = j^\mathcal{J}$ \checkmark
\end{itemize}
\item[$(\mathcal{U}_1)$] :
\begin{itemize}
\item $d_1 Z d'_1$ \checkmark
\item $d_2 Z d'_2$ \checkmark
\item $d_3 Z d'_3$ \checkmark
\item $d_4 Z d'_3$ \checkmark
\end{itemize}
\item[$(\mathcal{U}_2)$] :
\begin{itemize}
\item $d_1 Z d'_1$ \checkmark
\item $d_2 Z d'_2$ \checkmark
\item $d_3 Z d'_3$ \checkmark
\end{itemize}
\item[$(\mathcal{Q})$] :
\begin{itemize}
\item $d_1 Z d'_1 \rightarrow Z$ is a one-to-one between $\{d_3\}$ and
  $\{d'_3\}$ \checkmark
\item $d_2 Z d'_2 \rightarrow Z$ is a one-to-one between $\{d_4\}$ and
  $\{d'_3\}$ \checkmark
\item $d_3 Z d'_3 \rightarrow Z$ is a one-to-one between $\emptyset$ and
  $\emptyset$ \checkmark
\item $d_4 Z d'_3 \rightarrow Z$ is a one-to-one between $\emptyset$ and
  $\emptyset$ \checkmark
\end{itemize}
\item[$(\mathcal{S}elf)$] :
\begin{itemize}
\item $d_1 Z d'_1 \rightarrow ((d_1,d_1) \in R^\mathcal{I}
  \Leftrightarrow (d'_1,d'_1) \in R^\mathcal{J})$ \checkmark
\item $d_2 Z d'_2 \rightarrow ((d_2,d_2) \in R^\mathcal{I}
  \Leftrightarrow (d'_2,d'_2) \in R^\mathcal{J})$ \checkmark
\item $d_3 Z d'_3 \rightarrow ((d_3,d_3) \in R^\mathcal{I}
  \Leftrightarrow (d'_3,d'_3) \in R^\mathcal{J})$ \checkmark
\item $d_4 Z d'_4 \rightarrow ((d_4,d_4) \in R^\mathcal{I}
  \Leftrightarrow (d'_4,d'_4) \in R^\mathcal{J})$ \checkmark
\end{itemize}
\end{description}

However, applying the transformation $mrg(i,j)$ yields the
interpretations shown in \figref{fig:bisimulation2} where $d_1$ is a
model of $(\geq\; 2\; R\; C)$ but $d'_1$ is not i.e. $(\geq\; 2\; R\;
C)[mrg(i,j)]$ is not a concept of $\mathcal{ALCQUO}$ or $\mathcal{ALCQUOS}elf$.  
\end{proof}

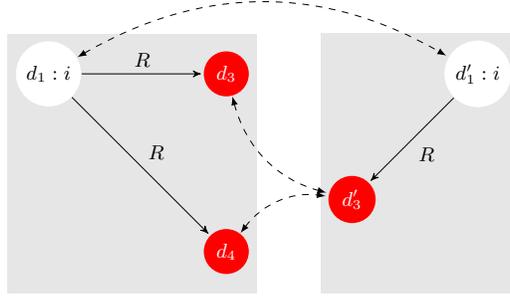
\begin{figure}
\def\smallscale{0.8}
\begin{center}
\resizebox{7cm}{!}{
\begin{tikzpicture}[->,>=stealth',shorten >=1pt,auto,node distance=2.8cm,
                    semithick,group/.style ={fill=gray!20, node distance=20mm},thickline/.style ={draw, thick, -latex'}]
  \tikzstyle{Cstate}=[circle,fill=red,draw=none,text=white]
  \tikzstyle{NCstate}=[circle,fill=white,draw=none,text=black]
  \tikzstyle{CurrState}=[circle,fill=white,draw=black,text=black]

  \node[NCstate] (I)                     {$d_1:i$};
  \node[Cstate] (K)  [right of=I]{$d_3$};
  \node[Cstate] (L)  [right of=J]{$d_4$};
  \node[Cstate] (K')  [below right of=K]{$d'_3$};
  \node[NCstate] (I') [above right of=K']       {$d'_1:i$};

\begin{pgfonlayer}{background}
\node [group, fit=(I) (J) (K) (L)] (b) {};
\node [group, fit=(I') (J') (K')] (b') {};
\end{pgfonlayer}

  \path (I)  edge node {$R$} (K)
        (I)  edge node {$R$} (L)
  (I') edge node {$R$} (K');
\path[dashed,latex-latex] (I) edge [bend left]             node {} (I')
              (K) edge [bend right] node {} (K')
(L) edge [bend left] node {} (K');

\end{tikzpicture}
}
\end{center}
\caption{$d_1$ is a model of $(\geq\; 2\; R\; C)$; $d'_1$ is not. Nodes satisfying $C$ are drawn in red.}\label{fig:bisimulation2}
\end{figure}

\begin{example}
We computed in \exampleref{ex:servernetcorr} the correctness formula for the client-to-proxy connection example. The formula still contained $App(\rho)$ and substitutions, however, that we can now replace with their correct expressions. $App(\rho_0)$ is equivalent to $\exists U. (Client \wedge \exists Request. (Proxy \wedge (<\; N\; C2P^-\; \top)))$ and $App(\rho_1)$ is equivalent to $\exists U. (Client \wedge \exists Request. (Proxy \wedge (\geq\; N\; C2P^-\; \top)))$. After simplification, $wp(\alpha_{\rho_0}, Post)$ is equivalent to $\forall U. (Proxy \wedge Active \Rightarrow ((j \wedge \exists U.(i \wedge \forall C2P.\neg j) \Rightarrow (\leq\; N-1\; C2P^-\; \top)) \wedge (\neg j \vee  \exists U.(i \wedge \exists C2P.j) \Rightarrow (\leq\; N\; C2P^-\; \top)))$ and $wp(\alpha_{\rho_1}, Post)$ is equivalent to $\forall U. ((Proxy \vee (k \wedge \exists U.(j \wedge Proxy))) \wedge (Active \vee k) \Rightarrow ((k \Rightarrow \top) \wedge (\neg k \vee  \exists C2P^-.i) \Rightarrow (\leq\; N\; C2P^-\; \top)))$. 

Proving that the correctness formula is valid amounts to proving that Proxies, including the possible new one $k$, satisfy some conditions. Let us first prove that $Pre \wedge App(\rho_0) \Rightarrow wp(\alpha_{\rho_0},Post)$ is valid:
\begin{itemize}
\item For all Proxies that are not $j$, nothing has changed
\item For $j$, if it had strictly less than $N$ incoming edges labeled with $C2P$, that is if $\rho_0$ was the rule that was applied, it satisfies $(\leq\; N-1\; C2P^-\; \top)$. It thus satisfies $Proxy \wedge Active \Rightarrow (\leq\; N-1\; C2P^-\; \top)$ if there was an edge from $i$ to $j$ labeled with $C2P$, that is if $\exists U.(i \wedge \forall C2P.\neg j)$ is satisfied, and $Proxy \wedge Active \Rightarrow (\leq\; N-1\; C2P^-\; \top)$ if not.
\end{itemize}
$Pre \wedge App(\rho_0) \Rightarrow wp(\alpha_{\rho_0},Post)$ is thus valid. Let us focus now on $Pre \wedge App(\rho_1) \Rightarrow wp(\alpha_{\rho_1},Post)$:
\begin{itemize}
\item For all Proxies that are not $j$ or $k$, nothing has changed
\item $j$, from $Pre$, satifies $(\leq\; N\; C2P^-\; \top)$ and thus $(Proxy \wedge Active \Rightarrow  (\neg k  \Rightarrow (\leq\; N\; C2P^-\; \top))$
\item As for $k$, $k \Rightarrow \top$ is an obvious tautology.
\end{itemize}
As both implications are valid, so is their conjunction and thus the correctness formula is valid. We have successfully proved the correctness of the specification.
\end{example}

%%% Local Variables: 
%%% mode: latex
%%% TeX-master: "main"
%%% coding: utf-8
%%% End: 

%%% Local Variables: 
%%% mode: latex
%%% TeX-master: "main"
%%% coding: utf-8
%%% End: 

\section{Conclusion}\label{sec:conclusion}
We have presented a class of graph rewriting systems, \Grs s, where
the left-hand sides of the considered rules can express additional
application conditions defined as logic formulas and right-hand sides
are sequences of actions.  The considered actions include node mergin
and cloning, node and edge addition and deletion among others.
 We defined computations with these systems by means of
rewrite strategies. There is certainly much work to be done around
such systems with logically decorated left-hand sides. For instance,
the extension to narrowing derivations, which is a matter of future
work, would use an involved unification algorithm taking into account
the underlying logic. We have also presented a sound 
Hoare-like calculus for specifications with pre and post conditions
and shown that the considered correctness problem is still decidable
in most of the logics we used. We also pointed out those logics for
which the rules we gave did not provide a proof of closure under
substitutions and proved that they were not actually closed under
substitutions. Future work include also an implementation of the
proposed verification technique as well as the investigation of more
expressive logics with connections some SMT solvers. 

%%% Local Variables: 
%%% mode: latex
%%% TeX-master: "main"
%%% coding: utf-8
%%% End: 

\bibliographystyle{plain}
 \bibliography{main}

\iffalse
\newpage
\section*{Appendix}
\input{appendix}
\fi

\end{document}